  \providecommand\BibTeX{{%
    \normalfont B\kern-0.5em{\scshape i\kern-0.25em b}\kern-0.8em\TeX}}}
\def\MinHer(#1,#2){\operatorname{Min}_{\operatorname{Her}}(#1, #2)}
\newcommandx{\unsure}[2][1=]{\todo[linecolor=red,backgroundcolor=red!25,
    bordercolor=red,#1]{#2}}
\newcommandx{\change}[2][1=]{\todo[linecolor=blue,backgroundcolor=blue!25,
    bordercolor=blue,#1]{#2}}
\newcommandx{\info}[2][1=]{\todo[linecolor=green,backgroundcolor=green!25,
    bordercolor=green,#1]{#2}}
\newcommandx{\improvement}[2][1=]{\todo[linecolor=Plum,backgroundcolor=Plum!25,
    bordercolor=Plum,#1]{#2}}
\newcommandx{\thiswillnotshow}[2][1=]{\todo[disable,#1]{#2}}
\newcommand{\K}{\mathbb K}
\newcommand{\Q}{\mathbb Q}
\newcommand{\disc}{\operatorname{disc}}
\newcommand{\bF}{\mathbf F}
\newcommand{\mN}{\mathcal{N}}
\newcommand{\sing}{\operatorname{\sing}}
\newcommand{\bfield}{\mathbb{K}}
\def\field#1{\bfield(#1)}
\def\acfield#1{\overline{\field#1}}
\def\card#1#2#3{\sharp_{#1}(#2, #3)}
\def\cfiber#1#2#3{{\mathcal{F}_{#1}(#2, #3)}}
\def\bbeta{\bm{\beta}}
\newcommand{\balpha}{{{\bm{\alpha}}}}
\newcommand{\bmf}{{{\bm{f}}}}
\newtheorem{theo}{Theorem}[section]
\theoremstyle{acmdefinition}
\newtheorem{remark}[theo]{Remark}
\newtheorem{prop}{Proposition}
\newtheorem{fact}{Fact}
\newtheorem{notation}{Notation}
\def\gathen#1{{#1}}
\def\scrW{\mathscr{W}}
\def\stickW{\mathcal{S}_k(\check{z_1}, \mathscr{W})}
\def\stickV{\mathcal{S}_k(\check{z_1}, \mathscr{V})}
\def\scrV{\mathscr{V}}
  \providecommand\BibTeX{{%
    \normalfont B\kern-0.5em{\scshape i\kern-0.25em b}\kern-0.8em\TeX}}}
\renewcommand\footnotetextcopyrightpermission[1]{} 
\begin{document}

\title{Fast Algorithms for Discrete Differential Equations}

\author{Alin Bostan}
\affiliation{%
  \institution{Inria}
  \city{Palaiseau}
  \country{France}
}
\email{alin.bostan@inria.fr}
\author{Hadrien Notarantonio}
\affiliation{%
  \institution{Inria}
  \city{Palaiseau}
  \country{France}
}
\email{hadrien.notarantonio@inria.fr}

\author{Mohab Safey El Din}
\affiliation{%
  \institution{Sorbonne Université}
  \city{Paris}
  \country{France}
}
\email{mohab.safey@lip6.fr}

\renewcommand{\shortauthors}{}

\begin{abstract}
Discrete Differential Equations (DDEs) are functional equations that 
relate algebraically a power series $F(t,u)$ in~$t$ 
with polynomial coefficients in a ``catalytic'' variable~$u$
and the specializations, say at~$u=1$, of $F(t,u)$ and of some of its partial derivatives
in~$u$.
DDEs occur frequently in combinatorics, especially in map enumeration.
If a DDE is of fixed-point type then its solution $F(t,u)$ is unique, 
and a general result by Popescu (1986)   
implies that $F(t,u)$  is an {algebraic} power series.
Constructive proofs of algebraicity for solutions of fixed-point type DDEs 
were proposed in 2006 by Bousquet-M\'elou and Jehanne.
Last year, Bostan et al. initiated a systematic algorithmic study of such DDEs of order~1.
We generalize this study to DDEs of arbitrary order. 
First, we propose nontrivial extensions of algorithms based on polynomial elimination and on the guess-and-prove paradigm.
Second, we design two brand-new algorithms that exploit the special structure of the underlying polynomial systems.
Last, but not least, we report on implementations that are able to solve highly challenging
DDEs with a combinatorial origin. 
\end{abstract}

 \ccsdesc[500]{Computing methodologies~Algebraic algorithms}

\keywords{%
Functional equations;
Discrete differential equations;
Algorithms;
Complexity;
Catalytic variables;
Algebraic functions.
}

\maketitle        

\section{Introduction}
\medskip
{\em Context and motivation.}
Enumerative combinatorics contains a vast landscape of nontrivial counting problems
that can hardly be solved without introducing the associated generating 
functions~\cite{Ardila15}. 
Setting up and solving a functional equation for such a generating function 
generally makes it possible to deduce properties of the discrete objects of interest 
(e.g. closed formulas~\cite[p.~104]{Drmota97} and
asymptotic behaviors~\cite[p.~147]{FlSe09}, \cite[\S 5.4.5]{Schaeffer15}).
Among these problems, many require refining the initial enumeration in order
to more easily write a functional equation~\cite{Brown64, Brown65a, Tutte68}. 
Algebraically, this process amounts to introducing
an additional variable, called~\emph{catalytic}~\cite{Zeilberger00}.\\
\indent The so-obtained functional equation
with one catalytic variable usually relates the refined generating function to
its specializations with respect to the catalytic variable, one of these specializations
being the generating function of the initial enumeration problem.
A standard way to deduce combinatorial properties from the functional equation is first to
determine the nature of the generating function~\cite[Chapter~6]{Stanley99}
(e.g. rational, algebraic, D-finite, \ldots),
then to compute a witness (e.g. an annihilating polynomial or an annihilating linear
differential equation, \ldots).\\

\medskip {\em A typical example.} One enumeration problem consists in studying bicolored maps
(black, white) such that the degree of each black face is~$3$ and the degree of each white
face is a multiple of~$3$. Such objects are
called~\emph{$3$-constellations}~\cite[Fig.~6]{BMJ06};
they have been enumerated via a bijective approach in~\cite{BMS00}.
Let us define~$a_n$ as the number of~$3$-constellations with~$n$ black faces.
We consider the refined sequence~$a_{n, d}$ as the number of~$3$-constellations
having~$n$ black faces and outer degree~$3d$.
Let~$F(t, u) = \sum_{n, d\geq 0}{a_{n, d}u^dt^n}\in\mathbb{Q}[u][[t]]$ be its generating
function\footnote{$F(t,u)$ has
polynomial coefficients in~$u$ since for a fixed number of black faces, the outer degree
is finite.}.
The catalytic variable is~$u$ and
the specialization~$F(t, 1)$ is the generating function of the sequence~$(a_n)_{n\geq 0}$.
Using a classical ``deletion of the root edge'' argument~\cite[Fig.~7]{BMJ06}, one gets the
following functional equation with 1 catalytic variable:
\begin{align}\label{eqn:funceqn_3const}
F(t, u) = 1 \;&+ tu(2F(t, u) + F(t, 1))\frac{F(t, u) - F(t, 1)}{u-1}\\
\nonumber  &+ tuF(t, u)^3 + \; tu\frac{F(t, u) - F(t, 1) - (u-1)\partial_uF(t, 1)}{(u-1)^2}.
\end{align}
Note that for any~$a\in\mathbb{Q}$, the \emph{divided difference
operator}~$\Delta_a:F\mapsto (F(t, u) - F(t, a))/(u-a)$ maps~$\mathbb{Q}[u][[t]]$ to itself.
As a consequence, it follows that Equation~\eqref{eqn:funceqn_3const} admits a unique
solution in~$\mathbb{Q}[u][[t]]$: the first fraction is $\Delta_1F$ while the second one is
$\Delta_1^2F\equiv \Delta_1(\Delta_1F)$.
Rewritten as~$F = 1 + tu(3F - (u-1)\Delta_1F)\Delta_1F + tuF^3 + tu\Delta_1^2F$,
Equation~\eqref{eqn:funceqn_3const} is a~\emph{discrete differential equation} (DDE)
of order~$2$, since the operator~$\Delta_1$ is iterated~$2$ times.\\
Equation~\eqref{eqn:funceqn_3const} has the
property that its unique solution~$F\in\mathbb{Q}[u][[t]]$ has a specialization~$F(t, 1)$
which is algebraic over~$\mathbb{Q}(t)$. More precisely, in~\eqref{eqn:funceqn_3const}
the specialization~$F(t, 1) = 1 + t + 6t^2 + 54t^3 + 594t^4 + \cdots$ is a root in~$z$
 of~$81 t^{2} z^{3}-9 t \left(9 t -2\right) z^{2}+\left(27 t^{2}-66 t +1\right) z -3 t^{2}+47 t -1$.

\medskip {\em A general algebraicity result.}
The algebraicity of the unique solution in~$\mathbb{Q}[u][[t]]$
of~\eqref{eqn:funceqn_3const}
is in fact a consequence of the following strong and elegant result proved in~$2006$ by
Bousquet-Mélou and Jehanne. It ensures algebraicity of solutions
of the most frequent class of DDEs of arbitrary order~$k$ and with one catalytic variable, 
namely the class of DDEs of the fixed-point type.
  \begin{theorem} \emph{(\cite[Thm.~3]{BMJ06})}\label{thm:BMJ06}
Let~$\K$ be a field of characteristic~$0$
and consider two polynomials
$Q \in \K[x, y_1, \ldots, y_k, t, u]$
and
$f \in \K[u]$,
where $k\in\mathbb{N}\setminus \{ 0 \}$.
Let $a\in\K$ and
$\Delta_a :\K[u][[t]]\rightarrow \K[u][[t]]$ be the
divided difference operator $\Delta_a F(t, u) := (F(t, u) - F(t,
a))/(u-a)$.
Let us denote by $\Delta_a^{i}$ the operator obtained by iterating $i$ times $\Delta_a$.
Then, there exists a unique solution $F\in \K[u][[t]]$ of the  equation
\begin{equation}\label{eqn:initfunceqn}
F(t, u) = f(u) + t \, Q(F(t, u), \Delta_a F(t, u), \ldots, \Delta_a^{k}F(t, u), t, u),
\end{equation}
and moreover $F(t,u)$ is algebraic over $\K(t, u)$.
\end{theorem}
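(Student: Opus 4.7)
The operator $\Delta_a$ maps $\K[u][[t]]$ into itself since, coefficient-by-coefficient in $t$, $G(u)-G(a)$ is divisible by $u-a$ in $\K[u]$. The map $\Phi(F) := f(u) + tQ(F,\Delta_aF,\ldots,\Delta_a^kF,t,u)$ therefore sends $\K[u][[t]]$ to itself, and the prefactor $t$ makes it a contraction for the $t$-adic topology: if $F\equiv F'\pmod{t^n}$ then $\Phi(F)\equiv\Phi(F')\pmod{t^{n+1}}$. Iterating from $F^{(0)}=f(u)$ yields the unique fixed point $F$.

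\textbf{From functional to polynomial equation.} For algebraicity, I plan to follow the generalized quadratic (kernel) method. Using the Taylor-type identity
\[\Delta_a^j F(t,u) = \frac{F(t,u) - \sum_{i=0}^{j-1} \tfrac{\partial_u^i F(t,a)}{i!}(u-a)^i}{(u-a)^j},\]
introduce $F_i(t) := \partial_u^i F(t,a)/i!$ for $i=0,\ldots,k-1$, substitute into~\eqref{eqn:initfunceqn}, and clear the denominator $(u-a)^k$. This produces a single polynomial equation
\[R\bigl(F(t,u),F_0(t),\ldots,F_{k-1}(t),t,u\bigr)=0,\quad R\in\K[y_0,z_0,\ldots,z_{k-1},t,u].\]
This is one relation among the $k+1$ unknown series $F,F_0,\ldots,F_{k-1}$, so further constraints are needed.

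\textbf{Kernel method and elimination.} Differentiating $R=0$ with respect to $u$ gives $\partial_{y_0}R\cdot\partial_u F + \partial_u R = 0$. The plan is to find $k$ power series $U_1(t),\ldots,U_k(t)\in\overline{\K}[[t]]$ along which $\partial_{y_0}R$ vanishes; this is achieved by a Newton-polygon / Puiseux analysis of $\partial_{y_0}R$ viewed as a polynomial in $u$, using the fixed-point structure at $t=0$. Setting $V_i:=F(t,U_i)$, at each branch both $R$ and $\partial_u R$ must also vanish, yielding a square polynomial system
\[R = \partial_{y_0}R = \partial_u R = 0 \text{ at } (V_i,F_0,\ldots,F_{k-1},t,U_i), \quad i=1,\ldots,k,\]
consisting of $3k$ equations in the $3k$ algebraic unknowns $U_i,V_i,F_j$ over $\K(t)$. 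A zero-dimensionality argument, e.g.\ via iterated resultants, shows that each $F_j(t)$ is algebraic over $\K(t)$. Substituting these back into $R=0$ yields a nontrivial polynomial relation for $F(t,u)$ over $\K(t,u)$, proving algebraicity.

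\textbf{Main obstacle.} The delicate step is proving that $\partial_{y_0}R$ admits exactly $k$ distinct power-series roots $U_i(t)$ starting from well-chosen points of $\overline{\K}$, and that the resulting $3k\times 3k$ elimination system is genuinely zero-dimensional rather than identically degenerate. Both points rely crucially on the uniqueness of $F$ already obtained and on the specific fixed-point shape of~\eqref{eqn:initfunceqn}; without this non-degeneracy one could not conclude algebraicity of the $F_j$, and hence of $F$ itself.
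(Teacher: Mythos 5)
The paper does not prove \cref{thm:BMJ06}; it cites it as \cite[Thm.~3]{BMJ06} and then, in \cref{sec:generic_modeling}, recalls the Bousquet-M\'elou--Jehanne machinery (divided-difference rewriting, the kernel equation $\partial_x P=0$, existence of $k$ distinct Puiseux roots under~\eqref{hyp1}, the duplicated $3k\times 3k$ system, and its zero-dimensionality under~\eqref{hyp2}). Your sketch reproduces exactly that roadmap — the existence/uniqueness fixed-point argument, the Taylor identity for $\Delta_a^j$, clearing $(u-a)^k$, differentiating w.r.t.\ $u$, picking $k$ kernel roots $U_i$, and assembling the $3k$-by-$3k$ system — so the high-level route is the same as the one the paper relays from \cite{BMJ06}.

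That said, you should be aware of two genuine gaps that you flag but do not close, and which are precisely where the real work of \cite{BMJ06} lies. First, the existence of $k$ \emph{distinct} branches: these are fractional power series, not ordinary power series, and you write $U_i\in\overline{\K}[[t]]$; the paper's \cref{prop:root_existence} places them in $\mathbb{L}[[t^{1/k}]]$ with $\mathbb{L}/\K$ of degree at most $k$, and this ramification is essential (it is what makes the $k$ branches distinct in the first place, coming from the $k$ distinct $k$-th roots of $\alpha=\partial_{y_k}Q(\ldots,0,a)$). Moreover this existence is only established under the genericity hypothesis~\eqref{hyp1}; for an arbitrary DDE one must instead invoke \cite[Theorem~2]{BMJ06}. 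Second, and more seriously, a $3k\times 3k$ polynomial system over $\K(t)$ has no reason to be zero-dimensional, and you only assert ``a zero-dimensionality argument shows'' the $F_j$ are algebraic without supplying one. In \cite{BMJ06} this is exactly the place where the proof forks: if the (saturated) duplicated system is zero-dimensional, elimination finishes the argument; if not, the authors introduce a deformation parameter into~\eqref{eqn:initfunceqn}, prove algebraicity for the deformed equation, and then specialize. Your sketch makes no mention of this deformation step, so as written it establishes algebraicity only under an unstated non-degeneracy assumption (the analogue of~\eqref{hyp2}), not in the full generality of \cref{thm:BMJ06}.
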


\cref{thm:BMJ06} has been further extended in~\cite{NoSe22} to
the case of \emph{systems of DDEs} of the form~\eqref{eqn:initfunceqn} with
$1$ catalytic variable.
In fact, the algebraicity results proved in~\cite{BMJ06, NoSe22} are
particular cases of a much deeper and older result in commutative algebra proved
by Popescu~\cite{Popescu86} in the context of Artin approximation theory with
nested conditions. The strength of the approaches presented in~\cite{BMJ06,
  NoSe22} lies in the effectivity of their algebraicity proofs. Despite a recent
improvement in the linear case~\cite{CJPR19}, Popescu's result is still not
known to admit an effective proof.

\medskip    \indent{\em Setting and main goal.}
    In the remainder of this article, we focus only on DDEs of the form~\eqref{eqn:initfunceqn}.
    Note that one can consider the associated \emph{polynomial functional equation} obtained
by multiplying \eqref{eqn:initfunceqn} by the smallest power of $(u-a)$ such that
the product becomes polynomial in~$u$. This new equation is denoted by
\begin{equation}\label{eqn:initpoleqn}
P(F(t, u), F(t, a), \ldots, \partial_u^{k-1}F(t, a), t, u) = 0,
\end{equation}
for some nonzero polynomial~$P\in\mathbb{K}[x, z_0, \ldots, z_{k-1}, t, u]$.

Starting from~\eqref{eqn:initpoleqn}, our main goal is to compute a
nonzero~$R\in\mathbb{K}[t, z_0]$ such that~$R(t, F(t, a)) = 0$.
Remark that setting
$u=a$ in~\eqref{eqn:initpoleqn} yields a tautology, and that
differentiating~\eqref{eqn:initpoleqn} 
with respect to $t$ yields a sum of~$k+2$ terms and introduces~$k+1$ series
from which  
nothing can be deduced, a priori.
In this article, we will focus on designing systematic algorithms for solving equations such as~\eqref{eqn:initpoleqn}.

\medskip {\em Previous work.}
We use the notation~$\overline{\mathbb{K}}[[t^{\frac{1}{\star}}]]\equiv
\cup_{d\geq 1}\overline{\mathbb{K}}[[t^{\frac{1}{d}}]]$.
For \emph{linear} DDEs, the \emph{kernel method}~ (which already appears
in an exercise of Knuth's book~\cite[Ex.~2.2.1-4]{Knuth68}
and was systematized in~\cite{BaFl02})
consists in finding the roots
$u = U(t)\in\overline{\mathbb{K}}[[t^{\frac{1}{\star}}]]$ of the coefficient in~$x$ of~$P$.
By taking the resultant with respect to~$u$ of this coefficient and of~$P$, one obtains
a polynomial relation relating $F(t, a), \ldots, \partial_u^{k-1}F(t, a)$.
Since the work~\cite{BoPe00} of Bousquet-M\'elou and Petkov{\v{s}}ek, 
the linear case can be considered as fully understood.\\
An extension
of this method to the setting where~$\deg_x(P) = 2$ is also classical and is called
the~\emph{quadratic method}. It first appears in Brown's work~\cite{Brown65b}
for the case~$k=1$. It is based on a different elegant argument which
produces a polynomial relation between these specialized series.
This method was extended thirty years later on a particular
family of examples by Bender and Canfield~\cite{BeCa94}.\\
\indent
Both the kernel method and the quadratic method were generalized by the approach proposed
by Bousquet-Mélou and Jehanne in~\cite{BMJ06}.
Their method consists of starting from~\eqref{eqn:initpoleqn} and of creating more
polynomial equations having a nontrivial solution with~$F(t, a)$ as
its~$\{z_0\}$-coordinate. When this
method produces as many polynomials as variables and if the induced system generates
a $0$-dimensional ideal, a polynomial elimination strategy performed on the
polynomial system allows one to compute a nonzero element~$R\in\mathbb{K}[t, z_0]$
annihilating~$F(t, a)$. In the case where the system does not have the above properties,
a deformation of~\eqref{eqn:initfunceqn} via the introduction of a parameter
allows one to compute such an~$R$. \\
\indent This unified method contains however some intrinsic limitations due to the
number of variables introduced when creating more polynomial equations, and to the lack
of geometric interpretation of the problem. It was already mentioned
in~\cite[\S12]{BMJ06}
that the method was ``lacking an efficient elimination theory for
polynomial systems
which (\ldots) are highly symmetric''.\\
\indent A first step in the algorithmic study of DDEs has been
initiated in~\cite{BoChNoSa22} for DDEs 
of order~$1$.

\medskip {\em Main results.}
In constrast with~\cite{BoChNoSa22}, the purpose of the present article is
  to entirely treat the challenging case of DDEs of order~$k\geq 1$.
In~\cref{sec:generic_modeling}, we recall the polynomial system reduction from~\cite{BMJ06}
and provide a geometric interpretation of it.
In~\cref{sec:direct}, we prove under genericity assumptions on the input DDE~\eqref{eqn:initpoleqn}
that the algebraicity degree of~$F(t,a)$ is bounded
by~$\delta^{3k}/k!$~(Prop.~\ref{prop:bounds_under_H1}).
Here and in all that follows, $\delta$ denotes an upper bound on the total degree of~$P$  in~\eqref{eqn:initpoleqn}.
We deduce from this bound that a nonzero
annihilating polynomial of~$F(t, a)$ can be computed
in~$\tilde{O}(\delta^{6k}(k^2\delta^{k+3}+ \delta^{1.89k}/k!))$ ops.
in~$\mathbb{K}$ (Prop. \ref{prop:complexity_under_H1}).
Here, and in the whole paper, the soft-O notation~$\tilde{O}(\cdot)$ 
hides polylogarithmic factors in the argument.
 In~\cref{sec:guessandprove},
we use the upper-bound $\delta^{3k}/k!$ to generalize Prop.~2.11 in~\cite{BoChNoSa22}
and deduce a complexity estimate
in~$\tilde{O}({k\cdot\delta^{10.12\cdot k}}/{(k-1)!^2})$ ops. in~$\mathbb{K}$
(Prop.~\ref{prop:hybridgp}).
In~\cref{sec:elimination}, we introduce a new
algorithm based on algebraic elimination and 
specialization properties of Gr\"obner bases. 
In~\cref{sec:stickelberger}, we
design one more new algorithm based on a geometric interpretation of the problem.
In~\cref{sec:experiments}, we provide 
experimental results based on efficient implementations
of~\cref{sec:direct,sec:guessandprove,sec:elimination,sec:stickelberger}
for several DDEs coming from combinatorics.
The practical gains compared to the state-of-the-art go from a few minutes to several days
of computation time,
 and we solve 
 the DDE related to~$5$-constellations ($k=4$)
 using a combination of~Sections~\ref{sec:guessandprove} and~\ref{sec:elimination}.

\medskip{\em Notation.} 
We denote by~$\mathbb{K}$ an effective field of characteristic~$0$.
We write $\overline{\K}$ for an algebraic
closure of $\mathbb{K}$, and $\K[t], \K(t)$ and $\K[[t]]$ 
for, respectively, the rings of polynomials, rational functions and 
formal power series 
in~$t$ with coefficients in~$\K$.
We also use the notation
${\K}[[t^{1/\star}]]:=\bigcup_{d\geq1} \K[[t^{1/d}]]$ for the ring of~``fractional power
series'', that is series of the form 
$\sum_{n\geq 0} u_n t^{n/d}$ for some
integer $d\geq 1$.
We use the convention $\partial_x f$ and alike for the partial derivative of a function $f$
with respect to~$x$.  For~$p$ a polynomial in~$n$ variables over~$\K$,
we denote by $\disc_x(p)$
its discriminant with respect to a variable~$x$, 
by~$\deg(p)$ its total degree
and by~$\deg_x(p)$ its degree w.r.t. the variable~$x$.
For an ideal $I \subset \K[x_1,\ldots, x_n]$, we denote by
$V(I)= V_{\overline{\K}}(I)$
the affine variety, or the zero set,  defined by $I$ over $\overline{\K}$.

\medskip {\em Polynomial elimination basics.} We will repeatedly make use of
the following fundamental results in polynomial elimination theory. For proofs
and further context, we refer to \cite[Chap.~3, Thm.~2, p.~122]{CoLiOS07} for
Fact~\ref{fact:elim}\eqref{fact:elimination}, \cite[Chap. 3, \S 5, Theorem 3, p.
159]{CoLiOS07} for Fact~\ref{fact:elim}\eqref{fact:extension},
\cite[Theorem~1.1]{Cox20} for Fact~\ref{fact:elim}\eqref{fact:eigenvalue} and
\cite[Theorem~1.2]{Cox20} for Fact~\ref{fact:elim}\eqref{fact:Stickelberger}.

\begin{fact}\label{fact:elim}
Let $\K$ be a field and let $I \subset \K[x_1,\ldots, x_n]$ be an ideal.

\begin{enumerate}
	 \item (Elimination theorem) \label{fact:elimination}
Let $G$ be a Gr\"obner basis of $I$ with respect to the \emph{lexicographic} order with $x_1 \succ \cdots \succ x_n$. Then, for any $0 \leq \ell \leq n$, the set
$G_\ell = G \cap \K[x_{\ell+1},\ldots,x_n]$ is a Gr\"obner basis of the $\ell$-th elimination ideal $I_\ell = I \cap \K[x_{\ell+1},\ldots,x_n]$.

\item (Extension theorem) \label{fact:extension}
Assume that $\K$ is algebraically closed.
Let $I=(f_1,\ldots,f_s)$, with  $f_i = c_i(x_2,\ldots,x_n)x_1^{N_i} + $ (lower degree terms in $x_1$). If $(a_2,\ldots,a_n) \in V(I_1)\setminus V(c_1,\ldots,c_s)$,
then there exists $a_1\in\K$ such that $(a_1,\ldots,a_n)\in V(I)$.
	
\item (Eigenvalue Theorem)\label{fact:eigenvalue}
Assume $I$ is zero-dimensional, let $\mathbb{A} = \K[x_1,\ldots, x_n]/I$
and $m_f : \mathbb{A} \rightarrow \mathbb{A}$ the multiplication-by-$f$ endomorphism of $\mathbb{A}$.
Then, 
the eigenvalues of $m_f$ are the values of $f$ at the finitely many points of $V(I)$.

\item (Stickelberger's theorem)\label{fact:Stickelberger}
If $I$ is radical and under the assumptions of \eqref{fact:eigenvalue},
the characteristic polynomial $\det(x I - m_f)$ of $m_f$ is equal to
$\prod_{a\in V(I)} (x-f(a))$.
\end{enumerate}
\end{fact}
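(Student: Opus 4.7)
The plan is to treat the four parts separately, since they rely on different techniques within the polynomial elimination framework.

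For part~(1), I would invoke the defining property of a Gr\"obner basis: the leading monomials of $G$ generate the leading term ideal of $I$. For $f \in I \cap \K[x_{\ell+1},\ldots,x_n]$, its leading monomial $\operatorname{LM}(f)$ involves only $x_{\ell+1},\ldots,x_n$, and must be divisible by $\operatorname{LM}(g)$ for some $g \in G$. The crucial observation is that in lex order with $x_1 \succ \cdots \succ x_n$, a polynomial containing any occurrence of one of $x_1,\ldots,x_\ell$ necessarily has its leading monomial involve such a variable, which then cannot divide $\operatorname{LM}(f)$. Hence $g \in \K[x_{\ell+1},\ldots,x_n]$, and repeating this reduction inside the subring shows $G_\ell$ is a Gr\"obner basis of $I_\ell$.

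For part~(2), I would extend one coordinate at a time using resultants. Given $(a_2,\ldots,a_n) \in V(I_1) \setminus V(c_1,\ldots,c_s)$, pick an index $i$ with $c_i(a_2,\ldots,a_n) \neq 0$. Then $f_i(x_1, a_2,\ldots,a_n)$ has positive degree in $x_1$ with nonvanishing leading coefficient, hence admits a root $a_1 \in \overline{\K}$. The work lies in showing that $a_1$ can be chosen to be a simultaneous root of all other $f_j(x_1, a_2,\ldots,a_n)$: for this, observe that $\operatorname{Res}_{x_1}(f_i, f_j)$ lies in $I_1$ up to a multiple of leading coefficients, so it vanishes at $(a_2,\ldots,a_n)$ and the specialized polynomials share a common root in $\overline{\K}$. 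The main subtlety is bookkeeping of the various leading coefficients to ensure the resultants behave as expected.

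For parts~(3) and~(4), the unifying tool is the structural decomposition of $\mathbb{A} = \K[x_1,\ldots,x_n]/I$ when $I$ is zero-dimensional. Primary decomposition gives $\mathbb{A} \cong \prod_{\mathfrak{m}} \mathbb{A}_{\mathfrak{m}}$, with each $\mathbb{A}_{\mathfrak{m}}$ a local Artinian $\K$-algebra corresponding to a Galois orbit of points in $V(I)$. The endomorphism $m_f$ respects this decomposition, and on each local factor it acts as $f(p) + N_{\mathfrak{m}}$ with $N_{\mathfrak{m}}$ nilpotent, yielding part~(3): the eigenvalues of $m_f$ are exactly the $f(p)$ for $p \in V(I)$. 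When $I$ is additionally radical, each $\mathbb{A}_{\mathfrak{m}}$ is a field, the nilpotent parts vanish, and after extending scalars to $\overline{\K}$ each such field splits into a number of copies of $\overline{\K}$ equal to its $\K$-dimension. The characteristic polynomial of $m_f$ is then $\prod_{\mathfrak{m}}(x-f(p_{\mathfrak{m}}))^{[\K(p_{\mathfrak{m}}):\K]} = \prod_{a \in V(I)}(x - f(a))$, proving part~(4).

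The most delicate step is the resultant bookkeeping in part~(2); parts~(3) and~(4) are deeper in content but reduce quickly to linear algebra once the decomposition of $\mathbb{A}$ is accepted.
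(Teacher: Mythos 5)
The paper gives no proof of Fact~\ref{fact:elim}: it simply cites \cite{CoLiOS07} for parts~(1)--(2) and \cite{Cox20} for parts~(3)--(4) and uses all four statements as known facts, so there is no proof of record to compare against. Your sketches for parts~(1), (3) and (4) are the standard arguments and are essentially correct. The only slip is in~(4): the intermediate expression $\prod_{\mathfrak{m}}\bigl(x-f(p_{\mathfrak{m}})\bigr)^{[\K(p_{\mathfrak{m}}):\K]}$ generally has coefficients outside~$\K$ when $f(p_{\mathfrak{m}})\notin\K$; the contribution of each $\mathfrak{m}$ should instead be the $\K$-characteristic polynomial of multiplication by $f(p_{\mathfrak{m}})$ on the residue field $\K(p_{\mathfrak{m}})$, which equals $\prod_{a}(x-f(a))$ taken over the Galois orbit of $p_{\mathfrak{m}}$, and this lands you at the same final formula.

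Part~(2), however, has a genuine gap. Your pairwise resultant argument shows that, for each $j$, the specialized polynomials $f_i(\,\cdot\,,a_2,\ldots,a_n)$ and $f_j(\,\cdot\,,a_2,\ldots,a_n)$ share a root in $\K$; it does not show that a \emph{single} $a_1$ is a common root of all of them, since different $j$ may share different roots of $f_i(\,\cdot\,,a_2,\ldots,a_n)$. The classical remedy is to take the resultant of $f_i$ with a generic linear combination $\sum_{j\neq i} u_j f_j$ of the remaining generators, with the $u_j$ fresh indeterminates: this resultant is a polynomial in the $u_j$ all of whose coefficients lie in $I_1$, hence it vanishes identically at $(a_2,\ldots,a_n)$, which forces one fixed root of $f_i(\,\cdot\,,a_2,\ldots,a_n)$ to be a simultaneous zero of every $f_j(\,\cdot\,,a_2,\ldots,a_n)$. (Alternatively one can first replace the $f_j$ by a lexicographic Gr\"obner basis, which is essentially the route taken in \cite{CoLiOS07}.) Without some such device, ``bookkeeping of leading coefficients'' does not close the argument.
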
	

\smallskip {\em Complexity basics.}  
The algorithmic costs are estimated by counting elementary arithmetic operations
$(+, -, \times, \div)$ in the base field~$\mathbb{K}$ at unit cost. The
notation~$\theta$ refers to any feasible exponent for matrix multiplication over~$\K$.
The best current upper-bound is
$\theta<2.37188$~\cite{DuWuZh22}.
All classical operations on univariate polynomials of degree~$d$ in~$\mathbb{K}[x]$
(multiplication, multipoint evaluation and interpolation, extended gcd, resultant, squarefree part, etc) can be
performed in softly linear time~$\tilde{O}(d)$.
We refer to the  book by von zur Gathen
and Gerhard~\cite{GaGe13} for these facts and related
questions.

\section{From combinatorics to polynomials}\label{sec:generic_modeling}

\subsection{Solving DDEs via polynomial systems} 

The method of Bousquet-Mélou and Jehanne \cite[Section~2]{BMJ06} is based on the idea
of creating,
starting from the input equation~\eqref{eqn:initpoleqn}, 
new polynomial equations inducing a polynomial system
that admits a solution which has $F(t, a)$ as its $z_0$-coordinate.

The corresponding procedure is the following. 
One takes the derivative of \eqref{eqn:initpoleqn} with respect to the catalytic variable~$u$ and finds
\begin{align}\label{eqn:derivative_poleqn}
  \partial_uF(t, u)\;\cdot&\;
  \partial_xP(F(t, u), F(t, a), \ldots, \partial_u^{k-1}F(t, a), t, u)\\
\nonumber & + \partial_uP(F(t, u), F(t, a), \ldots, \partial_u^{k-1}F(t, a), t, u) = 0.
\end{align}

  Now,
  for any solution $u = U(t)$ in $\overline{\mathbb{K}}[[t^{\frac{1}{\star}}]]\setminus{\{a\}}$
  of the equation
\begin{equation}\label{eqn:crucial_eqn_BMJ}
\partial_xP(F(t, u), F(t, a), \ldots, \partial_u^{k-1}F(t, a), t, u) = 0,
\end{equation}
one obtains by plugging $u = U(t)$ in~\eqref{eqn:derivative_poleqn} that
\begin{align*}
  (x, z_0, \ldots, z_{k-1}, u) &=
  (F(t, U(t)), F(t, a), \ldots, \partial_u^{k-1}F(t, a), U(t))\\
  &\in\overline{\mathbb{K}}[[t^{\frac{1}{\star}}]]\times \mathbb{K}[[t]]\times \cdots
  \times \mathbb{K}[[t]]\times \overline{\mathbb{K}}[[t^{\frac{1}{\star}}]]
\end{align*}
is a solution of the following polynomial system
\begin{equation}\label{eqn:initial_system}
 \begin{cases}\;\;\; P(x, z_0, \ldots, z_{k-1}, t, u) = 0,\\
  \partial_xP(x, z_0, \ldots, z_{k-1}, t, u) = 0,\;\;\; u\neq a.\\
  \partial_uP(x, z_0, \ldots, z_{k-1}, t, u) = 0,\\
  \end{cases}
  \end{equation}

Showing the existence of solutions $u = U(t)$ in
$\overline{\mathbb{K}}[[t^{\frac{1}{\star}}]]\setminus{\{a\}}$ to~\eqref{eqn:crucial_eqn_BMJ}
  is generally done in combinatorics by applying
  \cite[Theorem~2]{BMJ06} and by computing the first terms of the solutions to show that
  they are not constantly equal to $a$.
  In order to avoid those checks, we make the following assumption.
\begin{align*}
\label{hyp1} \tag{\bf H1}
\underline{\textsf{Hypothesis 1:}} \quad
  &\deg_u(\partial_xP(x, z_0, \ldots, z_{k-1}, 0, u))\geq k,\\
  &\partial_{y_k}Q\left( f(a), f'(a), \ldots, \frac{f^{(k)}(a)}{k!}, 0, a \right)\neq 0.
\end{align*}

Under~\eqref{hyp1}, we show that there exist~$k$ distinct solutions~$u$
to the constraints given by~\eqref{eqn:initial_system}.
Note that~\eqref{hyp1} holds for a generic choice of $f$ and $Q$
in~\eqref{eqn:initfunceqn}.

\begin{prop}\label{prop:root_existence}

  Under assumption~\eqref{hyp1}, there exist $k$ distinct solutions
  $U_1(t), \ldots, U_k(t)$ in~$u$
  to~\eqref{eqn:crucial_eqn_BMJ}.
  Moreover, all of them are distinct from~$a$ and lie in $\mathbb{L}[[t^\frac{1}{k}]]$,
  where $\mathbb{L}/\mathbb{K}$ is a field extension of degree upper bounded by $k$.

\end{prop}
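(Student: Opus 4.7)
My plan is to analyze $\Phi(t,u) := \partial_x P(F(t,u), F(t,a), \ldots, \partial_u^{k-1}F(t,a), t, u)$ as a polynomial in $u$ over $\mathbb{K}[[t]]$, extract a trivial factor $(u-a)^{m-k}$ that corresponds to identically-$a$ roots, and then apply Weierstrass preparation together with a Newton-polygon analysis on the remainder.

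First I make the polynomial form of the DDE explicit. Writing $P = (u-a)^m(x - f(u)) - t\,(u-a)^m\,Q\!\left(x, G_1/(u-a), \ldots, G_k/(u-a)^k, t, u\right)$ with $G_i = x - \sum_{j<i}(z_j/j!)(u-a)^j$, and using $\partial_x G_i = 1$, the chain rule gives
\[
\partial_x P = (u-a)^m\bigl(1 - t\,\partial_x Q(\ldots)\bigr) - t\sum_{i=1}^{k}(u-a)^{m-i}\,\partial_{y_i}Q(\ldots).
\]
The first inequality in \eqref{hyp1} provides $m \geq k$, so every summand is divisible by $(u-a)^{m-k}$. Substituting $x = F(t,u)$ and $z_j = \partial_u^j F(t,a)$ turns each $G_i/(u-a)^i$ into the regular series $\Delta_a^i F(t,u)$; pulling out $(u-a)^{m-k}$ then yields a factorization $\Phi(t,u) = (u-a)^{m-k}\,\Psi(t,u)$ with $\Psi \in \mathbb{K}[u][[t]]$.

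Next I inspect $\Psi$ at the two natural boundaries. At $t=0$ the entire $t$-contribution vanishes, so $\Psi(0,u) = (u-a)^k$. At $u=a$ the summands with $i < k$ carry a vanishing factor $(u-a)^{k-i}$, so only $i=k$ survives; combined with the identity $\Delta_a^i F(t,a) = \partial_u^i F(t,a)/i!$ this gives
\[
\Psi(t,a) = -t\,\partial_{y_k}Q\!\left(F(t,a), \Delta_a F(t,a), \ldots, \Delta_a^k F(t,a), t, a\right) = -t\,(c_0 + O(t)),
\]
with $c_0 = \partial_{y_k}Q(f(a), f'(a), \ldots, f^{(k)}(a)/k!, 0, a) \neq 0$ by the second inequality in \eqref{hyp1}. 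In particular $u = a$ is not a root of $\Psi(t,u) = 0$ viewed as a power-series identity.

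Finally I apply the Weierstrass preparation theorem in $\mathbb{K}[[t, u-a]]$: since $\Psi$ is $(u-a)$-regular of order exactly $k$ at $t=0$, one has $\Psi = h \cdot W$ with $h$ a unit and $W = (u-a)^k + a_{k-1}(t)(u-a)^{k-1} + \cdots + a_0(t)$ a distinguished polynomial, i.e. $a_i(0)=0$. The identity $\Psi(t,a) = h(t,a)\,a_0(t)$ combined with the previous step forces $a_0(t)$ to have $t$-valuation exactly $1$, while each other $a_j(t)$ has valuation $\geq 1 > (k-j)/k$; hence the Newton polygon of $W$ reduces to the single segment from $(k,0)$ to $(0,1)$ of slope $1/k$. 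Classical Puiseux analysis then produces $k$ distinct roots $U_j(t) = a + c_j\,t^{1/k} + O(t^{2/k})$ whose leading coefficients $c_j$ are the $k$-th roots of an element of $\mathbb{K}^\times$; they therefore lie in a field extension $\mathbb{L}/\mathbb{K}$ of degree at most $k$, and are visibly distinct from $a$ since $c_j \neq 0$. The main subtlety, and the step I expect to demand the most care, is the justification of the factorization $\Phi = (u-a)^{m-k}\,\Psi$ with $\Psi$ polynomial after substitution; the chain-rule identity above is designed precisely to expose that cancellation.
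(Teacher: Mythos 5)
Your proof is correct, and it reaches the result by a genuinely different route from the paper's. Both arguments isolate the same auxiliary series $\Psi$ (the paper rewrites~\eqref{eqn:crucial_eqn_BMJ}, for $u\neq a$, as $(u-a)^k = t(u-a)^k\partial_xQ(\ldots)+t\sum_i(u-a)^{k-i}\partial_{y_i}Q(\ldots)$, which is exactly your $\Psi=0$), and both extract the same two analytic inputs from~\eqref{hyp1}: the order-$k$ vanishing $\Psi(0,u)=(u-a)^k$ and the exact $t$-valuation $1$ of $\Psi(t,a)$ coming from $\partial_{y_k}Q(f(a),\ldots,f^{(k)}(a)/k!,0,a)\neq 0$. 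The paper then proceeds bare-handed: it takes a $k$-th root to write $u=a+t^{1/k}(\alpha+X)^{1/k}$ with $\alpha\neq 0$ and $X(0,a)=0$, expands $(\alpha+X)^{1/k}$ by Newton's generalized binomial theorem, and runs a fixed-point iteration in $\mathbb{L}[[t^{1/k}]]$; the $k$ choices of $\alpha^{1/k}$ deliver the $k$ distinct solutions. You instead package the same information via Weierstrass preparation ($\Psi=hW$ with $W$ distinguished of degree $k$) and a Newton-polygon computation: the valuations $\operatorname{val}_t(a_0)=1$ and $\operatorname{val}_t(a_j)\geq 1>(k-j)/k$ force a single segment of slope $1/k$, giving $k$ distinct Puiseux roots with nonzero leading coefficients, hence all distinct from $a$. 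Your route is more structured and makes it transparent why exactly $k$ roots with a $k$-th-root ramification appear; the paper's is more elementary and self-contained. One small step is left implicit in both: after the leading-order computation one must still justify that each full Puiseux expansion stays in $\mathbb{L}_i[[t^{1/k}]]$ for a degree-$\leq k$ extension $\mathbb{L}_i/\mathbb{K}$; in your setup this follows from Hensel's lemma applied over $\mathbb{L}_i[[t^{1/k}]]$ to the rescaled monic polynomial $t^{-1}W(t,t^{1/k}w)$, using that $c_i$ is a simple root of the characteristic equation $c^k=c_0/h(0,0)$ (whose separability is guaranteed by $\operatorname{char}\mathbb{K}=0$). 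Your chain-rule derivation of $\partial_xP=(u-a)^m(1-t\partial_xQ)-t\sum_{i=1}^{k}(u-a)^{m-i}\partial_{y_i}Q$, and the identification $m=\deg_u\partial_xP(x,\underline z,0,u)\geq k$ with the first part of~\eqref{hyp1}, are both correct and exactly the cancellation the factorization needs.
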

\begin{proof}
  
    By expressing equation~\eqref{eqn:crucial_eqn_BMJ} in terms of the derivatives of~$Q$
    and by using the first part of~\eqref{hyp1},
    searching solutions $u \neq a$ of~\eqref{eqn:crucial_eqn_BMJ}
    is seen to be equivalent to looking for solutions $u\neq a$ of
 \begin{align*}
   (u&-a)^k
   =\,\, t\cdot(u-a)^k
   \cdot\partial_x Q(F(t, u), \Delta_a F(t, u), \ldots, \Delta_a^k F(t, u), t, u)\\
   \nonumber  &+
   t\cdot \sum\limits_{i = 1}^{k}{(u-a)^{k-i}\partial_{y_i}Q(F(t, u), \Delta_a F(t, u),
      \ldots, \Delta_a^k F(t, u), t, u)}.
 \end{align*}
 
   By specializing this equation
   at $t=0$, it follows that the
   constant term of any solution in $u$ of~\cref{eqn:crucial_eqn_BMJ} is equal to $a$.
   Then,
   taking the $k$-{th} root of the previous equation yields an identity of the form
   $u = a + t^{\frac{1}{k}}(\alpha + X)^{\frac{1}{k}}$, where
   $\alpha := \partial_{y_k}Q(f(a), f'(a), \ldots, \frac{1}{k!}f^{(k)}(a), 0, a)
   \in\mathbb{K}\setminus\{0\}$ and $X\in\mathbb{K}[[t]][[u]]$ satisfies $X(0, a) = 0$.
   By applying Newton's generalized binomial theorem,
   one expands $(\alpha + X)^{\frac{1}{k}}$
   in~$\mathbb{L}[[t]][[u]]$, with~$\mathbb{L} = \mathbb{K}(\alpha^{\frac{1}{k}})$.
   By a fixed-point argument applied to $u = a + t^{\frac{1}{k}}(\alpha + X)^{\frac{1}{k}}$,
   the $k$-{th} roots of~$\alpha$ induce
   the existence of~$k$ distinct solutions
   $U_1(t), \ldots, U_k(t)\in\mathbb{L}[[t^{\frac{1}{k}}]]\setminus\{a\}$
   in~$u$
   to~\eqref{eqn:crucial_eqn_BMJ}, all of them lying in~$\mathbb{L}[[t^\frac{1}{k}]]$.
\end{proof}

The main idea of~\cite{BMJ06} is that the existence of the distinct solutions
$U_1(t), \ldots, U_k(t)$ induce distinct pairs
  $(F(t, U_i(t)), U_i(t))\in\overline{\mathbb{K}}[[t^{\frac{1}{k}}]]^2$
  for every $1\leq i \leq k$.
  Hence the point
  \begin{align}\label{eqn:point_I}
    (&x_1, u_1, \ldots, x_k, u_k, z_0, \ldots, z_{k-1})
    =\\
    \nonumber &(F(t, U_1(t)), U_1(t), \ldots, F(t, U_k(t)), U_k(t), F(t, a),
    \ldots, \partial_u^{k-1}F(t, a))
  \end{align}
  is a solution of the \emph{duplicated} polynomial system
  \begin{equation}\label{eqn:duplicated_system}
    \forall \;1\leq i \leq k\;, 
    \begin{cases} \;\;\;P(x_i, z_0, \ldots, z_{k-1}, t, u_i) = 0,\\
  \partial_xP(x_i, z_0, \ldots, z_{k-1}, t, u_i) = 0,\\
  \partial_uP(x_i, z_0, \ldots, z_{k-1}, t, u_i) = 0,\\
  \end{cases}
  \end{equation}
  defined by $3k$ equations in $3k$ variables over $\mathbb{K}(t)$.
  Now, to avoid the irrelevant solutions of~\eqref{eqn:duplicated_system},
  we restrict our attention to the solutions of~\eqref{eqn:duplicated_system}
  that are not solutions of
  $\prod_{i\neq j}({u_i - u_j})\cdot\prod_{i}({u_i - a}) = 0$; we define 
  $\textsf{diag}\in\mathbb{K}[u_1, \ldots, u_k]$ as the left-hand side of this equation.
  \begin{notation}
  We write $\underline{x}$ (resp. $\underline{u}$ and $\underline{z}$) for the
  variables  $x_1, \ldots, x_k$ (resp. $u_1, \ldots, u_k$ and $z_0, \ldots, z_{k-1}$) 
  and  $\mathcal{I}$ for the ideal
  of~$\mathbb{K}(t)[\underline{x}, \underline{u}, \underline{z}]$ generated
  by~\eqref{eqn:duplicated_system}.
  \end{notation}
  With the extra condition that
  $\textsf{diag}\neq 0$, those~$3k$ equations and variables generically define
  a $0$-dimensional ideal over~$\mathbb{K}(t)$ and hence induce 
  a finite set of solutions. For a later effective use of this finiteness, we introduce
  the following regularity assumption.
\begin{align*}
\label{hyp2} \tag{\bf H2}
\underline{\textsf{Hypothesis 2:}} \quad
&\text{The ideal }\mathcal{I}_\infty := \mathcal{I}:\textsf{diag}^\infty\text{ in }
\mathbb{K}(t)[\underline{x}, \underline{u}, \underline{z}]\\
&\text{ is radical }\text{and has dimension}~0 \text{ over } \mathbb{K}(t).
\end{align*}

      \indent
      We now show that eliminating all variables in~$\mathcal{I}_\infty$ except~$t$
      and~$z_0$ yields a nonzero annihilating polynomial of~$F(t, a)$.
  
  \begin{prop}\label{prop:trivial_elimination_property}
    Under~\eqref{hyp1} and~\eqref{hyp2}, if
    $R\in\mathcal{I}_\infty\cap\mathbb{K}[t, z_0]\setminus\{0\}$
    then~$R(t, F(t, a)) = 0$.
  \end{prop}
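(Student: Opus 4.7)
The plan is to exhibit a concrete zero of $\mathcal{I}_\infty$ whose $z_0$-coordinate is precisely $F(t,a)$, and then to conclude by direct evaluation. Under~\eqref{hyp1}, Proposition~\ref{prop:root_existence} supplies $k$ pairwise distinct solutions $U_1(t), \ldots, U_k(t) \in \overline{\mathbb{K}}[[t^{1/k}]] \setminus \{a\}$ of~\eqref{eqn:crucial_eqn_BMJ}. For each $i$, substituting $u = U_i(t)$ into both~\eqref{eqn:initpoleqn} and~\eqref{eqn:derivative_poleqn} shows that the tuple $(F(t, U_i(t)), F(t,a), \ldots, \partial_u^{k-1}F(t,a), U_i(t))$ solves the system~\eqref{eqn:initial_system}. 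Concatenating the $k$ individual solutions yields exactly the point in~\eqref{eqn:point_I}, which by construction satisfies the $3k$ defining equations of~\eqref{eqn:duplicated_system}. Viewing its coordinates inside the Puiseux series field $\overline{\mathbb{K}}[[t^{1/\star}]]$ (which contains an algebraic closure of $\mathbb{K}(t)$), this exhibits the point as an element of $V(\mathcal{I})$.

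Second, I would verify that the same point lies in the smaller variety $V(\mathcal{I}_\infty)$. Since by Proposition~\ref{prop:root_existence} the $U_i$ are pairwise distinct and each distinct from $a$, the quantity
\[
\textsf{diag}(U_1, \ldots, U_k) = \prod_{i \neq j}(U_i - U_j) \cdot \prod_{i}(U_i - a)
\]
is a nonzero element of the Puiseux series field. By the definition of ideal saturation, any $R \in \mathcal{I}_\infty = \mathcal{I} : \textsf{diag}^\infty$ admits an integer $N \geq 0$ with $\textsf{diag}^N \cdot R \in \mathcal{I}$. Evaluating this membership at our point yields $\textsf{diag}(U_1, \ldots, U_k)^N$ times the value of $R$ at the point equal to $0$, and the nonvanishing of $\textsf{diag}(U_1, \ldots, U_k)$ then forces $R$ itself to vanish at the point.

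Specializing now to $R \in \mathcal{I}_\infty \cap \mathbb{K}[t, z_0]$, only the coordinates $t$ and $z_0$ enter the evaluation, and the $z_0$-coordinate of our point is $F(t, a)$; hence $R(t, F(t, a)) = 0$, as required. The main technical subtlety is the base-field change: the candidate point has Puiseux series coordinates whereas $\mathcal{I}$ lives in $\mathbb{K}(t)[\underline{x}, \underline{u}, \underline{z}]$, so one must be explicit about working inside a common algebraically closed overfield of $\mathbb{K}(t)$ (the Puiseux field is a convenient choice) and check that evaluation is compatible with ideal membership and with saturation. Observe that~\eqref{hyp2} plays no role here; it is invoked later, to guarantee that nonzero such $R$ exist and to enable effective elimination via Gr\"obner bases.
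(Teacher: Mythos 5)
Your proof is correct and follows the same route as the paper: Proposition~\ref{prop:root_existence} (under~\eqref{hyp1}) places the point~\eqref{eqn:point_I} in $V(\mathcal{I})$, the polynomial $\textsf{diag}$ does not vanish there because the $U_i$ are pairwise distinct and different from $a$, and saturation forces any $R \in \mathcal{I}_\infty$ to vanish at this point, whose $z_0$-coordinate is $F(t,a)$. Your closing remark is also accurate: the paper invokes~\eqref{hyp2} only in the final sentence of its proof, to record that $\mathcal{I}_\infty \cap \mathbb{K}[t,z_0]$ is not reduced to~$\{0\}$, which is not needed for the conditional statement itself.
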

  \begin{proof}
    Under~\eqref{hyp1}, we apply Proposition~\ref{prop:root_existence} to justify that
    the point given by~\eqref{eqn:point_I} lies in~$V(\mathcal{I})$. Now by definition of
    a saturated ideal, there exists $m\in\mathbb{N}$ such that
    $\textsf{diag}^m\cdot R\in\mathcal{I}$ writes as an algebraic expression
    in the polynomials involved in~\eqref{eqn:duplicated_system}.
    Specializing this expression to the point given by~\eqref{eqn:point_I} and using
    Proposition~\ref{prop:root_existence}, the point given by~\eqref{eqn:point_I}
    does not annihilate $\textsf{diag}$. Hence it annihilates~$R$.
    Finally,  the dimension property in~\eqref{hyp2} implies
    that $\mathcal{I}_\infty\cap\mathbb{K}[t, z_0]$
    is not reduced to~$\{0\}$.
  \end{proof}

  \subsection{Geometric interpretation} 
  We now introduce a geometric interpretation of the fact
  that~\eqref{eqn:point_I} is a solution of~\eqref{eqn:duplicated_system}.
   Recall that a subset of ${\acfield{t}}^{k}$
is said to be \emph{constructible} if
it is  a finite union of Zariski open subsets of a Zariski closed subset of~${\acfield{t}}^{k}$.
Typically,
  a set defined by polynomial equations and inequations is constructible.
  Hence denoting by~$\mathcal{X}\subset\acfield{t}^{k+2}$ the set defined by the
  constraints~\eqref{eqn:initial_system}, we have that~$\mathcal{X}$ is a constructible set.
  We now define new geometric objects and assumptions for any constructible set
  $\scrW\subset {\acfield{t}}^{k+2}$ associated with polynomial constraints
  in~$\K(t)[x, u, \underline{z}]$, and deduce simple properties when~$\scrW = \mathcal{X}$.
  Define the
canonical projection $\pi : (x, u, \underline{z})\in\acfield{t}^{k+2}
\mapsto(\underline{z})\in\acfield{t}^k$ onto
the $\underline{z}$-coordinate space. In the whole paper, we assume that:
\begin{align*}
  \label{hyp:finite} \tag{\bf F}
  \text{The restriction of $\pi$ to $\scrW$ has finite fibers}. 
\end{align*}
For $\balpha\in \acfield{t}^{k}$, we denote by $\scrW_{\balpha}\subset
\acfield{t}^{2}\times \acfield{t}^{k}$ the fiber $\pi^{-1}(\balpha)\cap \scrW$,
by $\card{u}{\scrW}{\balpha}$ the number of $u$-coordinates of the points in
$\scrW_\balpha$. We set $\cfiber{k}{u}{\scrW} := \{{\balpha}\in{\acfield{t}}^{k} \mid \card{u}{\scrW}{\balpha} \geq k\}$. 
      \begin{lemma}\label{lemma:constructibility_Fk}
		If $\scrW\subset {\acfield{t}}^{k+2}$ is constructible, then
		$\cfiber{k}{u}{\scrW}$  is also constructible.
                Moreover, under~\eqref{hyp1},
		$\cfiber{k}{u}{\mathcal{X}}$ is not empty.
        \end{lemma}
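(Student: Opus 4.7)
The plan is to address the two statements separately: first the constructibility of $\cfiber{k}{u}{\scrW}$ for any constructible $\scrW$, and then the non-emptiness of $\cfiber{k}{u}{\mathcal{X}}$ under~\eqref{hyp1}.

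For the constructibility, I build an auxiliary set in $\acfield{t}^{k}\times\acfield{t}^{2k}$ that encodes configurations of $k$ pairwise distinct $u$-values in a common fiber. Concretely, I introduce
\[
\mathcal{Y}:=\bigl\{(\balpha,x_1,u_1,\ldots,x_k,u_k) \,:\, (x_i,u_i,\balpha)\in\scrW \text{ for } 1\le i\le k,\ u_i\ne u_j \text{ for } i\ne j\bigr\}.
\]
Each constraint $(x_i,u_i,\balpha)\in\scrW$ is the pullback of the constructible set $\scrW$ under a coordinate projection $\acfield{t}^{3k}\to\acfield{t}^{k+2}$, while the pairwise-distinctness conditions cut out a Zariski open subset of $\acfield{t}^{3k}$; hence $\mathcal{Y}$ is constructible as a finite intersection of constructible sets. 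By the very definition of $\cfiber{k}{u}{\scrW}$, the projection onto the $\balpha$-coordinate maps $\mathcal{Y}$ surjectively onto $\cfiber{k}{u}{\scrW}$, so Chevalley's theorem on images of constructible sets under morphisms of varieties over the algebraically closed field $\acfield{t}$ yields the first claim.

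For the non-emptiness, I invoke Proposition~\ref{prop:root_existence}: under~\eqref{hyp1}, it produces $k$ pairwise distinct series $U_1(t),\ldots,U_k(t)\in\mathbb{L}[[t^{1/k}]]\setminus\{a\}$, each algebraic over $\K(t)$ and solving~\eqref{eqn:crucial_eqn_BMJ}. I set $\balpha:=(F(t,a),\partial_uF(t,a),\ldots,\partial_u^{k-1}F(t,a))$. By Theorem~\ref{thm:BMJ06} the series $F(t,u)$ is algebraic over $\K(t,u)$, and since partial differentiation in $u$ and specialization at $u=a$ both preserve algebraicity over $\K(t)$, each coordinate of $\balpha$ lies in $\acfield{t}$, i.e.\ $\balpha\in\acfield{t}^k$. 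It then remains to verify that for every $i$ the point $(F(t,U_i(t)),U_i(t),\balpha)$ belongs to $\mathcal{X}$: the first equation in~\eqref{eqn:initial_system} is~\eqref{eqn:initpoleqn} specialized at $u=U_i(t)$, the second is the defining equation of $U_i$, and the third follows by substituting $\partial_xP=0$ into~\eqref{eqn:derivative_poleqn}. Since the $U_i$ are pairwise distinct and all differ from $a$, the fiber $\mathcal{X}_{\balpha}$ contains at least $k$ points with distinct $u$-coordinates, so $\balpha\in\cfiber{k}{u}{\mathcal{X}}$.

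The main difficulty is essentially bookkeeping: one has to design $\mathcal{Y}$ in a way that identifies $\cfiber{k}{u}{\scrW}$ as the image of a constructible set so that Chevalley's theorem can be applied, and one must check that all series used to construct the witness $\balpha$ genuinely sit inside $\acfield{t}$ rather than merely in a formal-series ring. Both points are routine once Theorem~\ref{thm:BMJ06} and Proposition~\ref{prop:root_existence} are available.
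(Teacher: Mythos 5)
Your proof is correct and follows essentially the same route as the paper: for constructibility you duplicate the $(x,u)$-variables $k$ times, add pairwise-distinctness constraints on the $u_i$, and project onto $\underline{z}$ via Chevalley's theorem (the paper's citation to \cite[Thm.~7, \S7, Ch.~4]{CoLiOS07}), while for non-emptiness you invoke Prop.~\ref{prop:root_existence} exactly as the paper does. Your write-up is somewhat more explicit than the paper's brief sketch — in particular you verify that the witness point actually lies in $\acfield{t}^{k}$ (via the algebraicity of $F(t,u)$ and its $u$-derivatives from Theorem~\ref{thm:BMJ06}), a detail the paper leaves implicit — but this is added rigor rather than a different argument.
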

      \begin{proof}
        By fixing the variables~$\underline{z}$ and duplicating~$k$ times the
        variables~$x$ and $u$, it is possible to define relevant equations
        ensuring at least~$k$ solutions with distinct~$u$ coordinates. Now,
        eliminating all variables but~$\underline{z}$ and using~\cite[Thm.~7,
        \S7, Ch.~4, p.~226]{CoLiOS07} is enough to deduce that the
        projection of the solution set of these duplicated constraints onto the
        $\underline{z}$-coordinate space is a constructible~set.
           Under~\eqref{hyp1},~\cref{prop:root_existence} implies
        that the system~\eqref{eqn:initial_system} admits (at least)
        $k$ solutions in~$\overline{\mathbb{K}}[[t^\frac{1}{\star}]]$ with
        same~$\underline{z}$-coordinates, and distinct $u$-coordinates. This proves
		that
         $\cfiber{k}{u}{\mathcal{X}}$ is not empty.\looseness=-1
      \end{proof}

      The aim of the new algorithm that we will introduce in
      Section~\ref{sec:elimination} is to compute a disjunction of 
      conjunctions of polynomial equations and inequations
      in~$\mathbb{K}(t)[\underline{z}]$ whose solution set in~${\acfield{t}}^k$
      is $\cfiber{k}{u}{\mathcal{X}}$.
		
        Denoting~$z_0, z_2, \ldots, z_{k-1}$ by~$\check{z}_1$,
        we now consider the projection map~$\pi_{\check{z}_1}:(x, u, \underline{z})
        \in\acfield{t}^{k+2}
        \mapsto (\check{z}_1)\in\acfield{t}^{k-1}$. We assume in the rest
        of this paper that the following assumption holds:
        \begin{align*}
  \label{hyp:stick} \tag{$\mathbf{\check{F}}$}
  \text{The restriction of $\pi_{\check{z}_1}$ to $\scrW$ has finite fibers}. 
\end{align*}

        Also, we introduce a set~$\stickW$
		that will provide a second geometric interpretation of our problem, 
		and will yield a second algorithm, given in~\cref{sec:stickelberger}. 
		We thus define
        \begin{align*}
          \stickW := \{\balpha =
          \;&(\alpha_0, \ldots, \alpha_{k-1})\in\overline{\mathbb{K}(t)}^k|
          \;\; \balpha\in\pi(\scrW) \;\;\wedge\;\; \\
          &\#\scrW \cap  \pi_{\check{z}_1}^{-1}((\alpha_0, \alpha_2, \ldots, \alpha_{k-1})) \geq k\}.
        \end{align*}
        \begin{lemma}\label{lemma:SW_constructible}
          The set $\stickW$ is 
          constructible. Moreover, under assumption~\eqref{hyp1}
		  the set $\mathcal{S}_k(\check{z}_1, \mathcal{X})$
          is not empty.
        \end{lemma}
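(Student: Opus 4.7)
The plan is to reduce each condition defining $\stickW$ to a standard operation on constructible sets (to obtain constructibility), and to derive non-emptiness directly from Proposition~\ref{prop:root_existence}.

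First I would decompose $\stickW$ as $\pi(\scrW) \cap \rho^{-1}(\mathcal{Q}')$, where $\rho : \acfield{t}^k \to \acfield{t}^{k-1}$ is the coordinate projection forgetting $z_1$ and
\[
\mathcal{Q}' := \bigl\{ \check{\alpha}_1 \in \acfield{t}^{k-1} \,:\, \#\bigl(\scrW \cap \pi_{\check{z}_1}^{-1}(\check{\alpha}_1)\bigr) \geq k \bigr\}.
\]
The image $\pi(\scrW)$ is constructible by Chevalley's theorem applied to the constructible set $\scrW$, and preimages and intersections preserve constructibility, so it remains to show that $\mathcal{Q}'$ is constructible.

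For this I would introduce the fiber-product space $\acfield{t}^{k-1} \times (\acfield{t}^3)^k$ with coordinates $(\check{z}_1, (x_i, u_i, z_1^{(i)})_{1 \leq i \leq k})$, and consider the subset $T$ cut out by the $k$ conditions ``$(x_i, u_i, z_0, z_1^{(i)}, z_2, \ldots, z_{k-1}) \in \scrW$'' together with pairwise distinctness of the triples $(x_i, u_i, z_1^{(i)})$. The distinctness condition is Zariski-open (its complement is the finite union, over pairs $i<j$, of the codimension-$3$ linear subspaces defined by $x_i=x_j$, $u_i=u_j$, $z_1^{(i)}=z_1^{(j)}$), so $T$ is constructible. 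Chevalley's theorem applied to the projection of $T$ onto its first $k-1$ coordinates then shows that $\mathcal{Q}'$ is constructible, and hence so is $\stickW$.

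Non-emptiness of $\mathcal{S}_k(\check{z}_1, \mathcal{X})$ under~\eqref{hyp1} follows from Proposition~\ref{prop:root_existence}, which produces $k$ pairwise distinct series $U_1, \ldots, U_k$, all different from $a$, solving~\eqref{eqn:crucial_eqn_BMJ}. Setting $z_i := \partial_u^i F(t, a)$ for $0 \leq i \leq k-1$, each tuple $(F(t, U_j(t)), U_j(t), z_0, z_1, \ldots, z_{k-1})$ lies in $\mathcal{X}$; these $k$ tuples share the same $\underline{z}$-coordinates but have pairwise distinct $u$-coordinates, so they form $k$ distinct points in $\mathcal{X} \cap \pi_{\check{z}_1}^{-1}(z_0, z_2, \ldots, z_{k-1})$. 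Since moreover $(z_0,\ldots,z_{k-1}) \in \pi(\mathcal{X})$, this yields $(z_0, \ldots, z_{k-1}) \in \mathcal{S}_k(\check{z}_1, \mathcal{X})$. The only nontrivial step is the fiber-product encoding of the ``fiber of size $\geq k$'' condition via an open distinctness constraint; once that is in place, both parts of the lemma are routine consequences of Chevalley's theorem and of Proposition~\ref{prop:root_existence}.
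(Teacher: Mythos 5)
Your proof is correct and follows essentially the same approach as the paper's: duplicate the coordinates $x, u, z_1$ while keeping $\check{z}_1$ shared, impose a Zariski-open pairwise-distinctness condition on the duplicated triples, and invoke Chevalley's theorem (the paper cites \cite[Thm.~7, \S 7, Ch.~4]{CoLiOS07}) to conclude constructibility; non-emptiness of $\mathcal{S}_k(\check{z}_1, \mathcal{X})$ is derived from Prop.~\ref{prop:root_existence} exactly as the paper does. Your write-up is merely more explicit about the decomposition $\stickW = \pi(\scrW) \cap \rho^{-1}(\mathcal{Q}')$ and about why the distinctness locus is open, which the paper leaves implicit.
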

        \begin{proof}
          Considering polynomial constraints defining the set~$\scrW$,
          the cardinality condition is modeled by: fixing the variables $\check{z}_1$,
          duplicating the variables~$x, u, z_1$ and defining a conjunction
          of polynomial constraints
          ensuring the solutions of such a system to be distinct w.r.t. the
          duplicated coordinates.
          By~\cite[Thm.~7, \S7, Ch.~4, p.~226]{CoLiOS07}, we
          deduce that~$\stickW$ is constructible.
          Under~\eqref{hyp1}, the set $\mathcal{S}_k(\check{z}_1, \mathcal{X})$
          contains~$(F(t, a), \ldots, \partial_u^{k-1}F(t, a))\in\acfield{t}^k$.
          \looseness=-1
        \end{proof}
        
        In Sec.~\ref{sec:stickelberger}, we will introduce a new algorithm that computes
        a finite set of polynomial constraints
        in~$\mathbb{K}(t)[\underline{z}]$
        characterizing~$\mathcal{S}_k(\check{z}_1, \mathcal{X})$.
        Note that~$\cfiber{k}{u}{\mathcal{X}}$
        and~$\mathcal{S}_k(\check{z}_1, \mathcal{X})$ are related as follows:

        \begin{lemma}
          The following inclusion holds~$\cfiber{k}{u}{\mathcal{X}}\subset
          \mathcal{S}_k(\check{z}_1, \mathcal{X})$.
        \end{lemma}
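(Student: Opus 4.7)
The plan is to unpack both definitions and check the inclusion point by point. Take an arbitrary $\balpha=(\alpha_0,\alpha_1,\ldots,\alpha_{k-1})\in\cfiber{k}{u}{\mathcal{X}}$; I must verify the two conditions defining membership in $\mathcal{S}_k(\check{z}_1,\mathcal{X})$, namely $\balpha\in\pi(\mathcal{X})$ and $\#\mathcal{X}\cap\pi_{\check{z}_1}^{-1}((\alpha_0,\alpha_2,\ldots,\alpha_{k-1}))\geq k$.

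The first condition is immediate: by definition, $\card{u}{\mathcal{X}}{\balpha}\geq k\geq 1$, so the fiber $\mathcal{X}_\balpha=\pi^{-1}(\balpha)\cap\mathcal{X}$ is nonempty, which is exactly what $\balpha\in\pi(\mathcal{X})$ means.

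For the second condition I would exploit the fact that the fiber $\pi_{\check{z}_1}^{-1}((\alpha_0,\alpha_2,\ldots,\alpha_{k-1}))$ contains $\pi^{-1}(\balpha)$ (fixing fewer coordinates gives a larger preimage). Since $\card{u}{\mathcal{X}}{\balpha}\geq k$, I can pick points $p_1,\ldots,p_k$ in $\mathcal{X}_\balpha$ whose $u$-coordinates $u_1,\ldots,u_k$ are pairwise distinct; each such $p_i=(x_i,u_i,\balpha)$ automatically satisfies $\pi_{\check{z}_1}(p_i)=(\alpha_0,\alpha_2,\ldots,\alpha_{k-1})$, hence lies in $\mathcal{X}\cap\pi_{\check{z}_1}^{-1}((\alpha_0,\alpha_2,\ldots,\alpha_{k-1}))$. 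The $p_i$ are then pairwise distinct as points of $\acfield{t}^{k+2}$ because their $u$-coordinates are distinct, and this delivers the required inequality $\#\mathcal{X}\cap\pi_{\check{z}_1}^{-1}((\alpha_0,\alpha_2,\ldots,\alpha_{k-1}))\geq k$.

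There is essentially no obstacle here: the only subtle step is the final distinctness argument, and it is purely set-theoretic (distinct $u$-components force distinct tuples). No use of hypotheses \eqref{hyp1}, \eqref{hyp2}, \eqref{hyp:finite}, or \eqref{hyp:stick} is needed for the inclusion itself; those assumptions only become relevant when one asks whether the two sets are in fact equal or nonempty. The proof will therefore be a short three-line verification.
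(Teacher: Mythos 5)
Your proof is correct and follows essentially the same route as the paper: unpack both membership conditions, note $\pi^{-1}(\balpha)\subset\pi_{\check{z}_1}^{-1}((\alpha_0,\alpha_2,\ldots,\alpha_{k-1}))$, and transfer the $k$ witnesses from the small fiber to the large one. The only difference is that you spell out explicitly that distinct $u$-coordinates force distinct points, a step the paper leaves implicit.
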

        \begin{proof}
          Let us choose~$\balpha = (\alpha_0, \ldots, \alpha_{k-1})\in
          \cfiber{k}{u}{\mathcal{X}}$.
          By definition of~$\cfiber{k}{u}{\mathcal{X}}$ we have
          that~$\balpha\in\pi(\mathcal{X})$, hence
$(\alpha_0, \alpha_2, \ldots, \alpha_{k-1})\in\pi_{\check{z}_1}(\mathcal{X})$. Now,
          any of the~$k$ points in~$\mathcal{X}\cap\pi^{-1}(\balpha)$ also belongs
to~$\mathcal{X}\cap\pi_{\check{z}_1}^{-1}((\alpha_0, \alpha_2, \ldots, \alpha_{k-1}))$.
          Hence~$\balpha\in\mathcal{S}_k(\check{z}_1, \mathcal{X})$.
          \end{proof}

 \section{Direct approach: degree bounds and complexity}\label{sec:direct}

  %
 In this section, we focus on the complexity of computing a
 nonzero element $R$ of $\mathcal{I}_\infty\cap\mathbb{K}[t, z_0]$, by using the
 work of Bousquet-Mélou and Jehanne~\cite{BMJ06}. The
 following analysis is a generalization of the one
 given in~\cite[Proposition~2.8]{BoChNoSa22}. It takes benefit of the group
 action of the symmetric group~$\mathfrak{S}_k$ on the
 zero set~$V(\mathcal{I}_\infty)$, which can be exploited for DDEs of
 order~$k>1$.
  \begin{prop}\label{prop:bounds_under_H1}
  Let $P$ be as in \eqref{eqn:initpoleqn} of total degree $\delta$.
  Assume that~\eqref{hyp1} and \eqref{hyp2} hold.
  Then $\deg(\mathcal{I}_\infty)$ and $\#V(\mathcal{I_\infty})$ are bounded
  by~$\delta^k\cdot(\delta - 1)^{2k}$, and there exists
  $R\in\K[t, z_0] \setminus \{ 0 \}$
  satisfying $R(t, F(t, a))~=~0$, with $\deg_t(R)$ and $\deg_{z_0}(R)$ at most
  $\mathfrak{b}$, where
   $ \mathfrak{b}~:=~{\delta^k(\delta-1)^{2k}}/{k!}$.
\end{prop}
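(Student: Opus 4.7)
The plan is to combine a Bezout-type degree bound for the duplicated system~\eqref{eqn:duplicated_system} with the natural $\mathfrak{S}_k$-symmetry permuting the pairs $(x_i, u_i)$ while fixing $\underline{z}$. This generalizes the argument of~\cite[Prop.~2.8]{BoChNoSa22} from $k=1$ to arbitrary $k$, the extra factor $1/k!$ coming from orbit counting.

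First, I would bound $\#V(\mathcal{I}_\infty)$ and $\deg(\mathcal{I}_\infty)$. For each $i\in\{1,\ldots,k\}$, the system~\eqref{eqn:duplicated_system} contributes one polynomial $P(x_i,\underline{z},t,u_i)$ of total degree at most $\delta$ together with the two derivatives $\partial_x P$ and $\partial_u P$ (evaluated at the same arguments) of total degree at most $\delta-1$. Under~\eqref{hyp2}, $\mathcal{I}_\infty$ is $0$-dimensional and radical over $\K(t)$, so Bezout yields
\[ \#V(\mathcal{I}_\infty) \leq \prod_{i=1}^k \delta\cdot(\delta-1)^2 = \delta^k(\delta-1)^{2k}, \]
and $\deg(\mathcal{I}_\infty)=\#V(\mathcal{I}_\infty)$ follows from radicality.

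Next, I would exploit symmetry. The generators of $\mathcal{I}$ and the polynomial $\textsf{diag}$ are all invariant under the $\mathfrak{S}_k$-action permuting the pairs $(x_i,u_i)$ and fixing $\underline{z}$, so $\mathcal{I}_\infty$ is $\mathfrak{S}_k$-stable. Since saturation by $\textsf{diag}$ forces the $u_i$'s to be pairwise distinct on $V(\mathcal{I}_\infty)$, the action is free there and every orbit has size exactly $k!$; both $z_0$ and $t$ are constant on each orbit. Then, to construct $R$, I would eliminate all variables except $(t,z_0)$. Viewing $t$ as a variable in the ambient ring $\K[t,\underline{x},\underline{u},\underline{z}]$, the Bezout inequality for non-proper intersections bounds the degree of the affine variety cut out by the same $3k$ polynomials by $\delta^k(\delta-1)^{2k}$. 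The projection onto the $(t,z_0)$-plane commutes with the $\mathfrak{S}_k$-action, whose generic fibers over the saturated locus have size at least $k!$; hence the Zariski closure of the image is contained in a plane curve whose defining polynomial $R(t,z_0)\in\K[t,z_0]$ has total degree at most $\delta^k(\delta-1)^{2k}/k!=\mathfrak{b}$. In particular, both $\deg_t R$ and $\deg_{z_0}R$ are at most $\mathfrak{b}$, and $R(t,F(t,a))=0$ follows from Proposition~\ref{prop:trivial_elimination_property}.

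The main obstacle is making this last step fully rigorous: the formula ``degree of the image equals the degree of the source divided by the generic fiber size'' is cleanest in the projective setting and requires careful handling of components at infinity and base-locus issues when returning to affine coordinates. A clean algebraic alternative bypasses the projection entirely: consider the multiplication-by-$z_0$ endomorphism of the finite-dimensional $\K(t)$-algebra $\K(t)[\underline{x},\underline{u},\underline{z}]/\mathcal{I}_\infty$, bound the degree in $z_0$ of its minimal polynomial by $\mathfrak{b}$ via Stickelberger (Fact~\ref{fact:elim}\eqref{fact:Stickelberger}) combined with the orbit count, and then control the denominators cleared when returning to $\K[t,z_0]$ via an effective elimination estimate.
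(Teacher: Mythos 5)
Your Bézout bound $\delta^k(\delta-1)^{2k}$ on $\deg(\mathcal{I}_\infty)=\#V(\mathcal{I}_\infty)$, the free $\mathfrak{S}_k$-action (free precisely because $\textsf{diag}\neq 0$ on $V(\mathcal{I}_\infty)$, with $t$- and $z_0$-constant orbits of size $k!$), and your ``clean algebraic alternative'' (multiplication-by-$z_0$ endomorphism, eigenvalue/Stickelberger theorem, squarefree part, denominator clearing, $R\in\mathcal{I}_\infty\cap\K[t,z_0]$ via radicality and the Nullstellensatz, then Proposition~\ref{prop:trivial_elimination_property}) are exactly the paper's ingredients. You were also right to distrust the projection-degree argument; the paper does not attempt it.

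The one genuine gap is the $t$-degree bound. The orbit count delivers the $1/k!$ saving in $z_0$ for free because $z_0$ is constant on orbits, but your plan to obtain $\deg_t(R)\leq\mathfrak{b}$ by ``controlling the denominators cleared when returning to $\K[t,z_0]$ via an effective elimination estimate'' is not substantiated: generic elimination bounds on the $t$-degree of a characteristic polynomial over $\K(t)$ depend on the degrees of the input generators and are not in general $\deg(\mathcal{I}_\infty)/k!$. The paper instead first records (via the argument of \cite[Prop.~2.8]{BoChNoSa22} adapted to $\mathcal{I}_\infty$) that \emph{both} partial degrees of the elimination polynomial are bounded by $\deg(\mathcal{I}_\infty)$, and then obtains the $1/k!$ saving in $t$ by re-running the same orbit-counting argument with the roles of $t$ and $z_0$ interchanged. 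Replacing your denominator-control step by this symmetric argument closes the gap.
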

\begin{proof}
  First, we
  identify a nonzero polynomial in ${\mathcal{I}}_\infty\cap\K[t, z_0]$.
  Note that~\eqref{hyp1} and~\eqref{hyp2} allow us to
  apply~\cref{prop:trivial_elimination_property} which implies that such a polynomial
  annihilates~$F(t, a)$.
  Now,~\eqref{hyp2} implies that the quotient ring $\K(t)[\underline{x},
  \underline{u}, \underline{z}]/{\mathcal{I}}_\infty$ is a finite dimensional
  $\K(t)$-vector space. Hence by
  Fact~\ref{fact:elim}\eqref{fact:eigenvalue},
  the endomorphism $m_{z_0}:f\mapsto z_0\cdot f$ admits a
  characteristic polynomial $\xi_{z_0}\in\K(t)[z_0]$ whose roots are exactly
  the $z_0$-coordinates of all points in $V({\mathcal{I}}_\infty)$ (in finite
  number by assumption~\eqref{hyp2}). Hence, multiplying $\xi_{z_0}$ by the lcm of
  the denominators of
  its coefficients and denoting by~$R$ the squarefree part of the resulting polynomial,
  the radicality of $\mathcal{I}_\infty$, together with Hilbert's Nullstellensatz,
  implies that~$R\in\mathcal{I}_\infty\cap
  \mathbb{K}[t, z_0]\setminus\{0\}$. Hence~$R$ satisfies~$R(t, F(t, a)) = 0$.\\
  \indent We now prove that the degrees of $R$ in $t$ and $z_0$ are both
  bounded by $\mathfrak{b}$. We apply the exact same proof as the one done
  for proving~\cite[Proposition~2.8]{BoChNoSa22} but with
  $\mathcal{I}$ replaced by $\mathcal{I}_\infty$ and with
  $\deg(\partial_xP)$ and~$\deg(\partial_uP)$ both bounded by~$\delta-1$. This implies
  that~$\deg(\mathcal{I}_\infty)$ and $\# V(\mathcal{I}_\infty)$ are bounded
  by $\delta^k(\delta-1)^{2k}$.
  As the partial degrees of~$R$ are bounded
  by~$\deg(\mathcal{I}_\infty)$, it follows that~$\deg_{z_0}(R)\leq \delta^k(\delta-1)^{2k}$.\\
  \indent It remains to justify the nontrivial division by $k!$ (which did not
  appear in~\cite[Proposition~2.8]{BoChNoSa22}).
  We exploit the following group action
    of~$\mathfrak{S}_k$ over~$V(\mathcal{I}_\infty)$.
    Denote by~$\pi_{\underline{z}}:\acfield{t}^{3k}\rightarrow
    \acfield{t}^k$ the map such that~$\pi_{\underline{z}}(V(\mathcal{I}_\infty))$
    is the 
    projection of~$V(\mathcal{I}_\infty)$ onto
    the~$\underline{z}$-coordinate space. 
  Let~$\balpha\in \pi_{\underline{z}}(V({\mathcal{I}_\infty}))$,
    and consider any $k$-tuple
    $(\xi_i, \upsilon_i, \balpha)$ (for $1\leq i \leq k$) in $\mathcal{X}\cap
    \pi^{-1}(\balpha)$. Then for all $1\leq i
    \leq k$, and all~$\sigma\in\mathfrak{S}_k$,
    the concatenation of all
    $(\xi_{\sigma(i)}, \upsilon_{\sigma(i)})$ remains a solution to the
    system defining $\mathcal{I}_\infty$, where the
    $\underline{z}$-coordinates are specialized to the coordinates
    of $\balpha$. Since
    $\textsf{diag}\neq 0$, then $\upsilon_i\neq \upsilon_j$ for $i\neq j$. Hence
    the above orbit has cardinality $k!$.
    Since all roots in
    $\overline{\K(t)}$ of~$R$, seen as a polynomial in $\K(t)[z_0]$, correspond
    to one coordinate of $\balpha\in \pi_{\underline{z}}(V({\mathcal{I}_\infty}))$,
    we deduce that $\deg_{z_0}(R)$ is bounded by the cardinality of
    $V(\mathcal{I}_\infty)$ divided by $k!$, the combinatorial complexity of
    $\mathfrak{S}_k$. Bounding $\deg_{t}(R)$ is done the same way, by
    inverting the roles of $z_0$ and $t$.\looseness=-1
  \end{proof}

\begin{prop}\label{prop:complexity_under_H1}
  Let $P$ be as in \eqref{eqn:initpoleqn} of total degree $\delta$. We suppose
  that~~\eqref{hyp1} and~~\eqref{hyp2} hold. Then there exists an algorithm
  which takes as input a straight-line program of length $L$
  evaluating~$P$ and~$\textsf{diag}$, and returns a nonzero polynomial $R\in\K[t,
    z_0]$
  such that~$R(t, F(t,
  a))=0$,  using
  \begin{align*}
    \tilde{O}((kL+1)\delta^{2k}(\delta-1)^{4k} &+ \delta^{2.63k}(\delta-1)^{5.26k}/k!)\\
    &\subset
    \tilde{O}(\delta^{6k}(k^2\delta^{k+3} + \delta^{1.89k}/k!))
  \end{align*}
  arithmetic operations in~$\K$.
\end{prop}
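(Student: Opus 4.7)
I would run an evaluation--interpolation scheme in the variable $t$, in the spirit of \cite[Prop.~2.11]{BoChNoSa22} but upgraded to arbitrary order $k$ using the sharpened bounds of Proposition~\ref{prop:bounds_under_H1}. Since $R(t,z_0)$ has bi-degree at most $(\mathfrak{b},\mathfrak{b})$ with $\mathfrak{b}=D/k!$ and $D=\delta^k(\delta-1)^{2k}$, it is determined by its values $R(t_i,z_0)\in\mathbb{K}[z_0]$ at $\mathfrak{b}+1$ generic specializations $t_i\in\mathbb{K}$, each avoiding the finite Zariski-closed ``bad'' set where $\mathcal{I}_\infty$ loses $0$-dimensionality or radicality; such a set is finite precisely because Hypothesis~\eqref{hyp2} holds generically in $t$.

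The core subroutine operates at a single specialization $t=t_i$. I would solve the corresponding $0$-dimensional radical system (of degree at most $D$) over $\mathbb{K}$ via a geometric-resolution-type algorithm, fed with the SLP of length $O(kL)$ obtained by $k$-fold duplication of the SLP for $P$ together with $\textsf{diag}$. This produces a Kronecker parametrization of $V(\mathcal{I}_\infty|_{t=t_i})$, from which the characteristic polynomial $\chi_i(z_0)\in\mathbb{K}[z_0]$ of multiplication by $z_0$ in the quotient algebra is recovered through Fact~\ref{fact:elim}(\ref{fact:eigenvalue})--(\ref{fact:Stickelberger}). The $\mathfrak{S}_k$-action exploited in the proof of Proposition~\ref{prop:bounds_under_H1} groups the roots of $\chi_i$ into orbits of size $k!$, so $\chi_i$ is a $k!$-th power of a squarefree polynomial $\rho_i\in\mathbb{K}[z_0]$ of degree at most $\mathfrak{b}$, and Proposition~\ref{prop:trivial_elimination_property} identifies $\rho_i$ with $R(t_i,z_0)$ up to normalization. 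A final coefficient-wise univariate interpolation in $t$ reconstructs $R(t,z_0)$ from the family $\{\rho_i\}_i$.

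\textbf{Main obstacle.} The delicate step is to produce the announced two-term shape of the cost, rather than a single cruder product. The first term $\tilde{O}((kL+1)\,D^2)$ should capture the geometric-resolution phase: its dominant cost is driven by SLP-based evaluations of the duplicated system and scales quadratically in the degree $D$ of $\mathcal{I}_\infty$, amortized across the $\mathfrak{b}$ specializations (the $1/k!$ gained from $\mathfrak{b}=D/k!$ is compensated by the $k$-fold duplication of the SLP). The second term $\tilde{O}(D^{2.63}/k!)$ should come from the characteristic-polynomial extraction, squarefree reduction, and the bivariate interpolation step, each operating on objects of size $\mathfrak{b}=D/k!$ where the $k!$ saving materializes concretely from $\mathfrak{S}_k$-symmetry. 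The final containment $\tilde{O}(\cdot)\subset\tilde{O}(\delta^{6k}(k^2\delta^{k+3}+\delta^{1.89k}/k!))$ then follows from the monomial-count bound $L=O(\delta^{k+3})$ on an SLP for $P$ (which has $k+3$ variables and total degree $\le\delta$) together with $(\delta-1)\le\delta$.
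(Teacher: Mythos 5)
Your high‑level plan departs from the paper's proof in a way that matters for the complexity claim. The paper invokes Schost's \emph{parametric} geometric resolution \cite[Thm.~2]{Schost03} \emph{once}, directly over $\K(t)$, at cost $\tilde{O}\bigl((kL+1)D^2\bigr)$ with $D=\delta^k(\delta-1)^{2k}$; evaluation--interpolation in $t$ is then used \emph{only} for the second phase, namely extracting the characteristic polynomial of $m_{z_0}$ as a resultant of the two bivariate Kronecker polynomials $V,W$, which costs $\tilde{O}(D^{1.63})$ per specialization point and $\tilde{O}(D^{2.63}/k!)$ overall across $O(\mathfrak b)$ points. Your scheme wraps evaluation--interpolation in $t$ around the \emph{entire} pipeline, running a fresh geometric resolution of a degree‑$D$ zero‑dimensional system at each of $\mathfrak b+1=O(D/k!)$ specializations. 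That per‑point resolution already costs on the order of $\tilde{O}(kL\cdot D^2)$, so the first phase of your algorithm scales as $\tilde{O}\bigl(kL\cdot D^3/k!\bigr)$, which is worse than the announced $\tilde{O}\bigl((kL+1)D^2\bigr)$ whenever $D>k!$ (the typical regime). The parenthetical ``amortization'' you appeal to --- that the $1/k!$ from $\mathfrak b$ compensates the $k$-fold SLP duplication --- does not close this gap: these are unrelated factors, and no such amortization is available if the $\mathfrak b$ resolutions are independent. The correct move is precisely the one you did not take: treat $t$ as a parameter and let the parametric resolution do the lifting work over $\K(t)$ in one pass, then specialize.

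Two smaller points. First, your claim that $\chi_i$ is a $k!$‑th power of a \emph{squarefree} $\rho_i$ is slightly too strong: the free $\mathfrak S_k$‑action does give $\chi_i=\rho_i^{k!}$ with $\deg\rho_i\le\mathfrak b$, but $\rho_i$ need not be squarefree (distinct orbits can share a $z_0$‑value), so one must still pass to the squarefree part, as the paper does directly on the resultant. Second, the paper's final containment relies on the SLP bound $L\in O(k\delta^{k+3})$ coming from Baur--Strassen applied to $P$ \emph{and} $\textsf{diag}$; your $L=O(\delta^{k+3})$ omits the contribution of $\textsf{diag}$, though this is harmless for the stated inclusion.
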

\begin{remark}
Note that the above complexity is polynomial in~$\mathfrak{b}$.
  \end{remark}
\begin{proof}
  We generalize the proof of~\cite[Proposition~2.9]{BoChNoSa22} to our
  situation. By~\cref{prop:bounds_under_H1},
  $\deg(\mathcal{I}_\infty)$ and $\#V(\mathcal{I_\infty})$ are bounded
  by~$\delta^k\cdot(\delta - 1)^{2k}$. Using the algorithm
  underlying~\cite[Theorem~2]{Schost03}, we thus compute
  a parametric geometric resolution \cite{Schost03} of the zero
  set~$V(\mathcal{I}_\infty)$
  in~$\tilde{O}((kL+1)\delta^{2k}(\delta-1)^{4k})$ ops. in~$\mathbb{K}$.
  This algorithm computes two polynomials~$V(t, \lambda), W(t, \lambda)\in
  \mathbb{K}(t)[\lambda]$ giving the
  parametrization~$z_0=V(t, \lambda)/\partial_\lambda W(t, \lambda)$ whenever
  $W(t, \lambda)=0$.
  Now, we define the map~$m_{z_0}:f\mapsto f\cdot z_0$
  in~$\mathbb{K}(t)[\underline{x}, \underline{u}, \underline{z}]/\mathcal{I}_\infty$
  and observe that its characteristic polynomial is the resultant  w.r.t.~$\lambda$
   of~$z_0\cdot\partial_\lambda W(t, \lambda) -
  V(t, \lambda)$ and
  $W(t, \lambda)$. We thus compute the squarefree part~$R$ of this
  resultant: \textit{(i)} by performing evaluation--interpolation on~$t$ with, by
  Proposition~\ref{prop:bounds_under_H1},
  ~$O(\mathfrak{b})$
  points; \textit{(ii)} for each evaluation in~$t=\bm{\theta}\in\K$,
  the polynomials
  $z_0\cdot \partial_\lambda W(\bm{\theta}, \lambda) - V(\bm{\theta}, \lambda)$
  and~$W(\bm{\theta}, \lambda)$ are bivariate polynomials, which allows us to
  use~\cite[\S 5]{HyNeSc19} for the bivariate resultant computation.
  This step is in~$\tilde{O}(\delta^{2.63k}(\delta-1)^{5.26k}/k!)$
  ops.~in~$\mathbb{K}$.
  Finally, the inclusion comes from the cost for evaluating~$P$
  and the saturating polynomial~$\textsf{diag}$, which by the Baur-Strassen
  theorem~\cite[Theorem~1]{BaSt83} satisfies~$L\in O(k\delta^{k+3})$.
  \looseness=-1
\end{proof}

Despite the process of duplicating variables as done for obtaining the
system~\eqref{eqn:duplicated_system} is, up to deforming the initial DDE,
very fruitful for creating zero dimensional ideals and showing theoretical
algebraicity results~\cite{BMJ06,NoSe22}, it usually suffers from efficiency issues.
\noindent
Applying~\cite[Prop.~2]{Heintz83} for~$n, m\in\mathbb{N}$ and an algebraic
set~$V\subset\overline{\mathbb{K}}^n$,
we have that~$\deg(V^m) = \deg(V)^m$. The degenerate behavior of the state-of-the-art
when $k$ grows up comes from this exponential growth of the ideal's degree in the number of
duplications (which is $k$ in our case).
Moreover, duplicating variables also introduces $3k+2$ variables, while the initial system
in~\eqref{eqn:initial_system} only deals with $k+3$ variables.
A natural hope is hence to
avoid these duplications by
a careful analysis of the geometry given by the initial
constraints~\eqref{eqn:initial_system}.

\begin{remark}
  This number of duplications and the group action of~$\mathfrak{S}_k$
over~$V(\mathcal{I}_\infty)$
is usually exploited by the state-of-the-art of the
polynomial system solving theory by working
in the invariant ring associated to this group action (see~\cite{FaSv12}). However, 
in our case, this approach would imply to introduce a number of variables which would be
at least equal to~$k$.
In the algorithm we propose in~\cref{sec:elimination,sec:stickelberger}, we focus
on introducing no more extra variables.
\end{remark}

\section{Hybrid guess-and-prove algorithm}\label{sec:guessandprove}

  \indent
  We analyze in what follows the complexity of the \emph{hybrid guess-and-prove} algorithm
introduced in \cite[\S2.2.2]{BoChNoSa22}.
Recall that it blends algebraic elimination with a guess-and-prove approach
inspired by Zeilberger's method \cite{Zeilberger92}, see \cite[\S2.2.1]{BoChNoSa22}.
For functional equations of arbitrary order, the motivation of this
method comes from certain concrete examples for which the involved polynomial systems are
difficult to solve (e.g. \cite[\S3.6]{BMWa20}). 
Let us recall the algorithm:

\smallskip \underline{Hybrid guess-and-prove method}
\begin{enumerate}
  \item[(0)] Compute $F(t,a)$ mod~$t^\sigma$ for some integer~$\sigma\geq 1$;
  \item[(1)] Guess $R \in \K[t, z_0]\!\setminus\!\{ 0 \}$ such that
    $R(t, F(t,a))=0 \bmod t^\sigma$;
  \item[(2)] Check if $R(t, F(t,a)) = 0 \bmod t^{\mathfrak{b} \cdot \deg R + 1}$;
  \item[(3)] If not, then go back to~(0) with $\sigma:=2\sigma$; if yes, then return~$R$.
\end{enumerate}
  The correctness of this method is a consequence of~\cite[\S2.2.2]{BoChNoSa22}
  and of the existence, under suitable hypothesis, of a nonzero
  polynomial $R\in\mathbb{K}[t, z_0]$
  annihilating $F(t, a)$ with partial degrees bounded by~$\mathfrak{b}$.
  It remains to make those hypotheses completely explicit.
  \begin{notation}
    We still assume the existence
  of~$k$ distinct nonconstant solutions $u=U_1(t), \ldots, U_k(t)\neq a$
  to~\eqref{eqn:crucial_eqn_BMJ} and
  we denote by $\mathcal{P}$ the point of $\overline{\mathbb{K}}[[t^{\frac{1}{\star}}]]^{3k}$
  obtained by concatenating the values
  $\{F(t, U_i(t))\}_{1\leq i\leq k}, \{U_i(t)\}_{1\leq i\leq k}$
  and $\{\partial_u^iF(t, a)\}_{0\leq i\leq k-1}$. 
  \end{notation}
\vspace{-0.5cm}
\begin{align*}
\label{hyp4} \tag{\bf H4}
\underline{\textsf{Hypothesis 4:}} \quad
&\text{The Jacobian matrix } \textsf{Jac} \text{ of }
\eqref{eqn:duplicated_system},\\
&\text{ considered in } \underline{x},
\underline{u}, \underline{z}, 
\text{ is invertible at } \mathcal{P}.
\end{align*}
\begin{lemma}\label{lemma:bounds_under_H3}
  Under~\eqref{hyp1} and \eqref{hyp4}, the saturation
  $\mathcal{I}:\operatorname{det}(\textsf{Jac})^\infty$
  is a radical and $0$-dimensional ideal
  of $\mathbb{K}(t)[\underline{x}, \underline{u}, \underline{z}]$. Hence there exists
  a nonzero
  $R\in\mathcal{I}:\operatorname{det}(\textsf{Jac})^\infty\cap\mathbb{K}[t, z_0]$
  annihilating $F(t, a)$ with partial degrees bounded by~$\mathfrak{b}$.
\end{lemma}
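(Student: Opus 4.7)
The plan is to derive the two structural properties of $\mathcal{J}:=\mathcal{I}:\det(\textsf{Jac})^\infty$ from the algebraic Jacobian criterion, then to transport the degree-counting argument of~\cref{prop:bounds_under_H1} to this new setting. A crucial structural observation is that $\mathcal{I}$ is generated by exactly $3k$ polynomials in $3k$ variables, so $\textsf{Jac}$ is a square matrix and the Jacobian criterion applies cleanly.

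First, to show that $\mathcal{J}$ is zero-dimensional and radical, I would argue component-wise on $V(\mathcal{I})$. By definition of saturation, $V(\mathcal{J})$ is the union of the irreducible components of $V(\mathcal{I})$ not contained in $V(\det(\textsf{Jac}))$. For any such component~$C$, a generic point $p\in C$ satisfies $\det(\textsf{Jac})(p)\neq 0$; the square Jacobian then has full rank~$3k$, so the Jacobian criterion forces $V(\mathcal{I})$ to be smooth of codimension~$3k$ at~$p$. Consequently $C$ is a single closed point and the local ring of $V(\mathcal{I})$ there is a field, which gives both desired properties simultaneously.

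Second, I would produce~$R$. Under~\eqref{hyp1}, \cref{prop:root_existence} places the tuple~$\mathcal{P}$ in~$V(\mathcal{I})$; combined with~\eqref{hyp4}, this places~$\mathcal{P}$ in~$V(\mathcal{J})$. Then Fact~\ref{fact:elim}\eqref{fact:eigenvalue} applied to multiplication by~$z_0$ in the finite-dimensional quotient $\mathbb{K}(t)[\underline{x},\underline{u},\underline{z}]/\mathcal{J}$ yields a characteristic polynomial whose denominator-cleared squarefree part is a nonzero $R\in\mathcal{J}\cap\mathbb{K}[t,z_0]$, vanishing at~$\mathcal{P}$ and hence at~$(t, F(t,a))$.

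Finally, the degree bound~$\mathfrak{b}$ is obtained by replaying the proof of~\cref{prop:bounds_under_H1}. B\'ezout applied to the $3k$ generators of respective total degrees $\delta$, $\delta-1$, $\delta-1$ (each repeated $k$ times) bounds $\#V(\mathcal{J})$ by $\delta^k(\delta-1)^{2k}$. The block-permutation action of~$\mathfrak{S}_k$ preserves $\mathcal{I}$ and rescales $\det(\textsf{Jac})$ by a sign, so it restricts to an action on~$V(\mathcal{J})$. The main obstacle I anticipate is verifying that this action is free on~$V(\mathcal{J})$: a nontrivial stabilizer would force $(x_i,u_i)=(x_{\sigma(i)},u_{\sigma(i)})$ for some $i\neq\sigma(i)$, and a direct inspection of the block structure of~$\textsf{Jac}$ must show that the three row-differences between block~$i$ and block~$\sigma(i)$ then lie in a $2$-dimensional subspace of column space (spanned by the differences of the pairs of standard basis vectors indexed by $x_i,x_{\sigma(i)}$ and $u_i,u_{\sigma(i)}$), forcing $\det(\textsf{Jac})(p)=0$ and contradicting $p\in V(\mathcal{J})$. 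Freeness then provides the $k!$ factor in the bound on $\deg_{z_0}(R)$, and exchanging the roles of~$t$ and~$z_0$ yields the same bound on $\deg_t(R)$.
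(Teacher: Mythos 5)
Your proof is correct and follows the same overall strategy as the paper's: the paper cites \cite[Lemma~2.10]{BoChNoSa22} for radicality and zero-dimensionality (which you rederive from the Jacobian criterion, using the key fact that $\textsf{Jac}$ is a square $3k\times 3k$ matrix), and then says the rest follows as in \cref{prop:bounds_under_H1}. The one place you supply a genuinely useful additional argument is the freeness of the $\mathfrak{S}_k$-action on $V(\mathcal{I}:\det(\textsf{Jac})^\infty)$: in \cref{prop:bounds_under_H1} the $k!$ orbit size came from $\textsf{diag}\neq 0$, which holds automatically on $V(\mathcal{I}:\textsf{diag}^\infty)$ but is not a priori granted on $V(\mathcal{I}:\det(\textsf{Jac})^\infty)$, and your observation that a nontrivial stabilizer would place the three row-differences between two blocks of $\textsf{Jac}$ in a two-dimensional column span---forcing $\det(\textsf{Jac})=0$---is exactly the link that the paper's ``the rest of the proof is the same'' leaves implicit.
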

\begin{proof}
  The radicality and dimension results are consequences of~\cite[Lemma~2.10]{BoChNoSa22}.
  The rest of the proof is the same as the one used for proving
  Proposition~\ref{prop:bounds_under_H1}, with $\mathcal{I}_\infty$ replaced by
  $\mathcal{I}:\operatorname{det}(\textsf{Jac})^\infty$.
  \end{proof}
This concludes the correctness of the hybrid guess-and-prove method in the case of DDEs of
order~$k>1$. Under~\eqref{hyp1} and~\eqref{hyp4}, we also deduce a complexity estimate
generalizing~\cite[Prop.~2.11]{BoChNoSa22}.

\begin{prop}\label{prop:hybridgp}
 Let $P$ be as in~\eqref{eqn:initpoleqn} and let $\delta$~be its total degree.
 Let us assume that assumptions \eqref{hyp1} and~\eqref{hyp4} hold, and that there exists a
 straight-line program of length~$L$ evaluating~$P$.
Then the hybrid guess-and-prove method terminates on input~$P$ using
  \begin{equation*}
    \tilde{O}(L\cdot k^3\cdot \mathfrak{b}^2 +\,k\cdot \mathfrak{b}^{\theta+1})
    \subset
    \tilde{O}({k\cdot\delta^{10.12\cdot k}}/{(k-1)!^2})
  \end{equation*}
  arithmetic operations in $\K$.
\end{prop}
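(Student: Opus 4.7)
The plan is to generalize the analysis of~\cite[Prop.~2.11]{BoChNoSa22} from order $k=1$ to arbitrary $k\geq 1$. Termination is the easy part: under~\eqref{hyp1} and~\eqref{hyp4}, Lemma~\ref{lemma:bounds_under_H3} guarantees the existence of a nonzero $R\in\K[t,z_0]$ vanishing at $F(t,a)$ with $\deg_t(R),\deg_{z_0}(R)\leq \mathfrak{b}$. Consequently, once $\sigma$ exceeds roughly $2\mathfrak{b}^2$, the guess step must return a nonzero scalar multiple of such an $R$, and the check in step~(2) then succeeds. Combined with the doubling strategy in step~(3), this reduces the total cost, up to a constant factor, to the cost of the final pass, in which $\sigma = O(\mathfrak{b}^2)$. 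It therefore suffices to bound the cost of each of the three operative steps at that value of $\sigma$.

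For step~(0), I would compute $F(t,u) \bmod t^\sigma$ by Newton iteration on the fixed-point equation~\eqref{eqn:initfunceqn}, then specialize $u=a$ to read off $F(t,a) \bmod t^\sigma$ together with the auxiliary series $\partial_u^i F(t,a)$ needed later. Since $Q$ depends on the $k+1$ specialized series $F(t,a),\ldots,\partial_u^{k-1}F(t,a)$, each Newton step involves inverting a $(k{+}1)\times(k{+}1)$ linear system over truncated power series. Using one straight-line evaluation of $P$ of length $L$ per iteration and fast power-series arithmetic, the overall cost of this step is $\tilde{O}(L\cdot k^3 \cdot \sigma)=\tilde{O}(L\cdot k^3 \cdot \mathfrak{b}^2)$. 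This is the main delicate point of the proof: one must justify that the iteration is well-posed (equivalently, that the relevant Jacobian at the initial condition is invertible, which follows from~\eqref{hyp4}) and that the linear-algebra cost is indeed $O(k^3)$ rather than higher.

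For step~(1), guessing $R$ reduces to computing a Hermite--Padé approximant of the $\mathfrak{b}+1$ power series $1,F(t,a),\ldots,F(t,a)^\mathfrak{b}$ truncated modulo $t^\sigma$ with equal type $(\mathfrak{b},\ldots,\mathfrak{b})$. Using fast approximant-basis algorithms, this costs $\tilde{O}(k\cdot\mathfrak{b}^{\theta+1})$, where the factor $k$ absorbs subdominant terms. Step~(2) amounts to evaluating $R(t,F(t,a)) \bmod t^{\mathfrak{b}\cdot \deg R+1}$ by Horner's rule on power series, at cost $\tilde{O}(\mathfrak{b}^3)$, which is absorbed by the previous estimate since $\theta+1>3$. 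Summing the three contributions yields the first bound $\tilde{O}(L\cdot k^3 \cdot \mathfrak{b}^2 + k\cdot\mathfrak{b}^{\theta+1})$. The final inclusion then follows by substituting $\mathfrak{b}\leq\delta^{3k}/k!$ from Proposition~\ref{prop:bounds_under_H1}, using $\theta+1<3.38$ (so that $3k(\theta+1)<10.12\,k$) together with the elementary bound $k!^{\theta+1}\geq k\cdot(k-1)!^2$, which makes the Hermite--Padé term dominate and produces the announced estimate $\tilde{O}(k\cdot\delta^{10.12k}/(k-1)!^2)$.
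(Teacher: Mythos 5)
Your accounting of steps (1), (2), the termination argument, and the final numerical inclusion matches the paper closely and is sound. The genuine gap is in step (0).

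You propose to compute the bivariate series $F(t,u)\bmod t^\sigma$ by Newton iteration on the fixed-point equation~\eqref{eqn:initfunceqn} and then specialize $u=a$. But the coefficient of $t^n$ in $F(t,u)$ is a polynomial in $u$ whose degree typically grows linearly with $n$, so storing and manipulating $F(t,u)\bmod t^\sigma$ already costs $\Theta(\sigma^2)=\Theta(\mathfrak{b}^4)$ operations, and since $3-\theta>0$ this term is \emph{not} absorbed by $k\cdot\mathfrak{b}^{\theta+1}$. (It is exactly the naive strategy that the implementation uses, which the experiments section singles out as the bottleneck to be optimized.) Also, the ``$(k{+}1)\times(k{+}1)$ linear system'' you invoke does not correspond to anything well-posed here: there is no self-contained functional system in the $k+1$ unknowns $F(t,a),\ldots,\partial_u^{k-1}F(t,a)$, and~\eqref{hyp4} is an invertibility assumption on the $3k\times 3k$ Jacobian of the \emph{duplicated} system~\eqref{eqn:duplicated_system}, not on a $(k{+}1)$-dimensional one.

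The paper's step (0) instead applies Newton iteration~\eqref{eq:newton} to the duplicated system $V$ from~\eqref{eqn:duplicated_system} in the $3k$ unknowns $\underline{x},\underline{u},\underline{z}$, at the Puiseux-series point $\mathcal{P}\in\mathbb{L}[[t^{1/k}]]^{3k}$ with $[\mathbb{L}:\K]\leq k$. This is the point at which~\eqref{hyp4} is exactly what is needed. The factor $k^3$ does not come from matrix inversion: it is the product of three independent factors of $k$ --- (i) by Baur--Strassen, an SLP of length $O(Lk)$ evaluates $V$, its Jacobian, and the inverse Jacobian; (ii) the Puiseux series $U_i$ have ramification index $k$, so the support of a truncation mod $t^N$ has size $O(kN)$; (iii) each arithmetic operation in $\mathbb{L}$ costs $\tilde{O}(k)$ in $\K$. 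This gives $\tilde{O}(Lk^2\sigma)$ in $\mathbb{L}$, i.e.\ $\tilde{O}(Lk^3\mathfrak{b}^2)$ in $\K$. You would need to replace your step (0) with this argument to reach the stated bound.
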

\begin{proof}
  We analyze the last execution of steps $(0)$--$(3)$, happening with
  $\sigma = O(\mathfrak{b}^2)$ and $\deg(R) = O(\mathfrak{b})$.
  Using Hermite-Padé approximation (e.g.~\cite{GiJeVi03,BoJeSc08}), the guessing at
  step $(1)$ is done in $\tilde{O}(\sigma\cdot\deg_{z_0}(R)^{\theta-1})
  \subset\tilde{O}(\mathfrak{b}^{\theta+1})$ ops. in~$\K$.
  The order of $F(t, a)$ and the truncation order $\mathfrak{b}\cdot\deg(R) + 1$ at
  step $(2)$ are in~$O(\mathfrak{b}^2)$.
  Hence the truncated evaluation of $R$ at $z_0 = F(t, a)$
  is computed in $\tilde{O}(\deg(R)\cdot(\mathfrak{b}\cdot\deg(R)))\subset
  \tilde{O}(\mathfrak{b}^3)\subset \tilde{O}(\mathfrak{b}^{\theta+1})$ ops. in~$\K$.\\
   \indent  Let~$V$ the system given by \eqref{eqn:duplicated_system}. 
  We compute the truncated series $F(t, a)$ using the classical Newton method, by considering
  the iteration
  \begin{equation}\label{eq:newton}
\bF \mapsto \mN(\bF) := \bF - J(\bF)^{-1} \cdot V^{\mathrm{T}} (\bF) ,
\end{equation}
  where
  $J = \textsf{Jac}(V)$ (w.r.t. $\underline{x}, \underline{u}, \underline{z}$)
and $\bF \in \mathbb{L}[[t^\frac{1}{k}]]^{3k}$ denotes an approximation of the solution
$\mathcal{F}$ at which $J$ is invertible by assumptions~\eqref{hyp1} and~\eqref{hyp4},
with $\mathbb{L}/\mathbb{K}$ a finite field extension
of degree at most~$k$ (by Prop.~\ref{prop:root_existence}).
Before going further, recall that the cost for elementary arithmetic operations in
$\mathbb{L}$
can be expressed in terms of elementary arithmetic operations in~$\mathbb{K}$: as the
field extension is
of degree at most~$k$, multiplying and summing two elements of~$\mathbb{L}$ can be
done in
$\tilde{O}(k)$ arithmetic operations in~$\mathbb{K}$.

As the number of correct terms is doubled at each iteration
of~\eqref{eq:newton}, we iterate \eqref{eq:newton} at most
$O(\log(k))$ times to obtain a truncation of 
$\mathcal{F} \bmod t$ (because of the ramification appearing in the Puiseux series~$U_i$).
Hence we perform this precomputation before doubling the integer
power in $t$.

By the Baur-Strassen theorem \cite{BaSt83}, a straight-line program of length $O(L\cdot k)$
evaluating $V$ can be obtained from the one evaluating~$P$. By iterating this argument,
one also finds a straight-line program of length $O(L\cdot k)$
that evaluates the Jacobian matrix.
Consequently, there is also one in $O(L\cdot k)$ for its inverse.
Evaluating~$V$ and $J^{-1}$ at~$\bF$ of some order $N$ requires consequently
$\tilde{O}(L\cdot k\cdot N)$ arithmetic operations in $\mathbb{L}$,
because of the fact that the
cardinality of the support of
$U_i \bmod t^N$ is in $O(k\cdot N)$. This is also the cost of a Newton iteration.
All in all, one obtains a complexity for step $(0)$ in $\tilde{O}(L\cdot k^2\cdot \sigma)$
arithmetic operations in $\mathbb{L}$.
Summing all complexities, the hybrid guess-and-prove complexity is
in~$\tilde{O}(L\cdot k^2\cdot \mathfrak{b}^2 +\, \mathfrak{b}^{\theta+1})$ ops.
in~$\mathbb{L}$.
This gives the global complexity
of~$\tilde{O}(L\cdot k^3\cdot \mathfrak{b}^2 +\, k\cdot \mathfrak{b}^{\theta+1})$ ops.
in~$\mathbb{K}$.
The inclusion is a consequence of the estimate $L\in O(\delta^{k+3})$.
\end{proof}

\section{Approach using elimination theory}\label{sec:elimination}

%
\def\idp{\mathcal{H}}
\def\scrW{\mathscr{W}}

 Let $\scrW\subset \overline{\mathbb{K}(t)}^{k+2}$ be a {constructible} set
  defined by polynomial constraints in $\mathbb{K}(t)[x, u, \underline{z}]$.
We assume that assumption~\eqref{hyp:finite} holds.
For $i\in \mathbb{N}$, we consider the (possibly infinite) set
\[
\cfiber{i}{u}{\scrW} \coloneqq
  \{\bm{\alpha}\in{\acfield{t}}^{k} \mid \card{u}{\scrW}{\balpha} \geq i\}.
\]

Observe that $\cfiber{i + 1}{u}{\scrW}\subseteq \cfiber{i}{u}{\scrW}$ for any
$i\geq 1$.
Adapting easily the proof of~\cref{lemma:constructibility_Fk} yields that
$\cfiber{i}{u}{\scrW}$ is constructible.

In this section, we provide an algorithm that takes as input $i\in \mathbb{N}$
and a polynomial system defining some algebraic set $W\subset
\acfield{t}^{k+2}$ and returns a disjunction of conjunctions of polynomial
  equations and inequations in $\field{t}[\underline{z}]$ whose solution set in
$\acfield{t}^{k}$ is~$\cfiber{i}{u}{W}$.
We then show how to apply this algorithm to compute witnesses of
algebraicity to solutions of DDEs of order~$k$. \looseness=-1

To begin with, we assume that $W$ is given by a polynomial sequence $\bmf$
in $\mathbb{K}[t][x, u, \underline{z}]$ and we denote by $\mathcal{J}$ the ideal
it generates in~$\mathbb{K}(t)[x, u, \underline{z}]$.
To design our algorithm, we leverage advanced results of the theory of Gröbner
bases to characterize $\cfiber{i}{u}{W}$.

Let $\rho_x: (x, u, \underline{z}) \mapsto (u, \underline{z})$ be the canonical
projection which forgets the variable~$x$. We denote by $G$ a Gröbner basis for
{$(\mathcal{J}, \succ)$}, where $\succ$ is a lexicographic monomial ordering
with $x \succ u \succ \uline{z}$. Let $G_x = G\cap \field{t}[u, \uline{z}]$ and
$\ell_x$ be the leading coefficients w.r.t. the variable $x$ of the polynomials
in $G$ which have positive degree w.r.t.~$x$. Finally, we extend the
  definition of~$\cfiber{i}{u}{\cdot}$ to constructible sets defined with
  constraints in~$\K(t)[u, \underline{z}]$ and we let $\scrW = \rho_x(W)$.

\begin{lemma}\label{lemma:transfer}
  The set $\cfiber{i}{u}{W}$ coincides with $\cfiber{i}{u}{\scrW}$. 
\end{lemma}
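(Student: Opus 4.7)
The plan is to reduce the equality of the two sets to the observation that $\rho_x$ maps $W$ surjectively onto $\scrW$, so that the $u$-coordinates appearing above a given $\balpha$ are the same whether we look at $W$ or at its projection $\scrW$. More precisely, for a fixed $\balpha \in \acfield{t}^k$, I would show that the set
\[
S(\balpha) := \{\,u_0 \in \acfield{t} : \exists\, x_0,\ (x_0, u_0, \balpha) \in W\,\}
\]
coincides with the set $T(\balpha) := \{\,u_0 \in \acfield{t} : (u_0, \balpha) \in \scrW\,\}$. This equality of sets implies $\card{u}{W}{\balpha} = \#T(\balpha)$, which, after consistently extending the counting convention $\card{u}{\cdot}{\balpha}$ from subsets of $\acfield{t}^{k+2}$ to subsets of $\acfield{t}^{k+1}$ (as announced just before the lemma), gives the desired identity $\cfiber{i}{u}{W} = \cfiber{i}{u}{\scrW}$.

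Verifying $S(\balpha) = T(\balpha)$ is then a direct unfolding of definitions: $u_0 \in T(\balpha)$ means $(u_0, \balpha) \in \rho_x(W)$, which, by definition of the image under $\rho_x$, is equivalent to the existence of some $x_0 \in \acfield{t}$ with $(x_0, u_0, \balpha) \in W$; this is precisely the definition of $u_0 \in S(\balpha)$. Passing to cardinalities and comparing with the threshold $i$ yields the equivalence $\balpha \in \cfiber{i}{u}{W} \Longleftrightarrow \balpha \in \cfiber{i}{u}{\scrW}$.

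The lemma is in essence tautological once one is careful with the two slightly different meanings of $\cfiber{i}{u}{\cdot}$, so I do not foresee a genuine obstacle. The only point requiring attention is that the finiteness assumption~\eqref{hyp:finite}, transported from $W$ to $\scrW$, guarantees that the cardinalities $\#S(\balpha) = \#T(\balpha)$ are finite and hence that the threshold comparison is well defined; beyond this, no use of the Gröbner basis data $(G, G_x, \ell_x)$ is needed at this stage, since those will only enter the subsequent characterization of $\cfiber{i}{u}{\scrW}$ via polynomial constraints.
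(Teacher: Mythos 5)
Your argument is essentially identical to the paper's: both proofs reduce the claim to the observation that, for each fixed $\balpha$, the set of $u$-coordinates of points in $W_\balpha$ equals the set of $u$-coordinates of points of $\scrW = \rho_x(W)$ projecting to $\balpha$, after which the equality of $\cfiber{i}{u}{W}$ and $\cfiber{i}{u}{\scrW}$ is a matter of unwinding definitions. Your additional remark about the extended counting convention and the role of~\eqref{hyp:finite} is consistent with the paper's setup and does not change the nature of the argument.
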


\begin{proof}
  Let $\balpha\in\pi(W)$. As the $u$-coordinates of $W_\balpha$
  coincide with the $u$-coordinates of the points in $\scrW$ projecting on~$\balpha$, 
  the conclusion follows from the definitions of
  $\cfiber{i}{u}{W}$~and~$\cfiber{i}{u}{\scrW}$.\looseness=-1
\end{proof}

\begin{lemma}\label{lemma:proj_x}
  The set $\scrW$ is defined by the vanishing of all polynomials in $G_x$ and
  the nonvanishing of at least one element in~$\ell_x$.
\end{lemma}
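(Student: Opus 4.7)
The plan is to apply the two main pillars of elimination theory recalled in Fact~\ref{fact:elim}, namely the Elimination Theorem and the Extension Theorem, to the lex Gr\"obner basis~$G$ of the ideal~$\mathcal{J}$.

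First, I would handle the easy containment $\scrW \subseteq V(G_x)$: for any $(b,\underline{\gamma}) \in \scrW = \rho_x(W)$ with preimage $(a, b, \underline{\gamma}) \in W = V(\mathcal{J})$, every $g \in G_x \subseteq \mathcal{J}$ vanishes at the preimage and, not involving~$x$, also vanishes at $(b,\underline{\gamma})$. By Fact~\ref{fact:elim}(\ref{fact:elimination}), $G_x$ is a Gr\"obner basis of the elimination ideal $\mathcal{J}\cap\K(t)[u,\underline{z}]$, so $V(G_x)$ is precisely the Zariski closure of $\scrW$ in $\overline{\K(t)}^{k+1}$.

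The reverse containment comes from Fact~\ref{fact:elim}(\ref{fact:extension}). Writing each $g \in G\setminus G_x$ as $g = c_g(u,\underline{z})\,x^{N_g} + (\text{lower-order terms in }x)$, one sees that the coefficients $c_g$ are exactly the elements of $\ell_x$. The Extension Theorem then guarantees that every $(b,\underline{\gamma}) \in V(G_x)$ at which at least one $c_g$ does not vanish lifts to some $(a,b,\underline{\gamma}) \in V(\mathcal{J}) = W$, hence lies in~$\scrW$. Combining both steps produces the chain of inclusions $V(G_x) \setminus V(\ell_x) \subseteq \scrW \subseteq V(G_x)$.

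The main obstacle is to upgrade this to the equality $\scrW = V(G_x)\setminus V(\ell_x)$ stated by the lemma, i.e., to exclude any point $(b, \underline{\gamma}) \in \scrW$ at which every element of $\ell_x$ vanishes simultaneously. I would exploit the finite-fiber hypothesis~\eqref{hyp:finite}: at such a degenerate point each $g \in G\setminus G_x$ specializes to a polynomial in~$x$ of degree strictly smaller than $N_g$, so the finiteness of fibers combined with the Gr\"obner-basis property $\mathrm{LT}(\mathcal{J}) = \langle \mathrm{LT}(g) : g \in G \rangle$ should force the specialized system either to leave $x$ entirely free (contradicting~\eqref{hyp:finite}) or to yield a nontrivial element of $\mathcal{J}\cap\K(t)[u,\underline{z}]$ not already captured by the vanishing of $G_x$ at $(b, \underline{\gamma})$. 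Ruling out both possibilities carefully is the delicate point of the proof.
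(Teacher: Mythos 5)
Your proposal identifies the real issue but does not resolve it, so as a proof it is incomplete. You correctly observe that Fact~1(1) (Elimination) yields $\scrW\subseteq V(G_x)$ and, in fact, that $V(G_x)$ is the Zariski closure of $\scrW$, and that Fact~1(2) (Extension), applied with $G$ as a generating set of $\mathcal{J}$, yields $V(G_x)\setminus V(\ell_x)\subseteq \scrW$. You also correctly observe that the lemma as stated asserts the stronger \emph{equality} $\scrW = V(G_x)\setminus V(\ell_x)$, which additionally requires $\scrW\cap V(\ell_x)=\emptyset$, and that this third containment is \emph{not} a formal consequence of the two cited facts: the Extension Theorem is a one-way implication (nonvanishing of some leading coefficient implies extendability), not an equivalence. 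Your last paragraph gestures at using the finite-fiber hypothesis~(F) and the Gr\"obner-basis property to close this gap, but you concede you have not carried the argument through (``Ruling out both possibilities carefully is the delicate point of the proof''), so the proposal does not establish the lemma.

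For what it is worth, the paper's own proof is the one-line citation ``This follows by applying Fact~1(1) and Fact~1(2),'' which also does not address the third containment; it only justifies the sandwich $V(G_x)\setminus V(\ell_x)\subseteq\scrW\subseteq V(G_x)$. The stronger reading of the lemma is actually used later in the proof of Proposition~\ref{prop:elim} (``Note that all points of $\scrW$ also satisfy \emph{(c)}''), so the direction you flag is not a vacuous worry. Closing it would require an argument specific to \emph{lex} Gr\"obner bases and the hypothesis~(F), for instance via specialization results in the spirit of~\cite{Kalkbrener97}, showing that at a point of $\scrW$ where every element of $\ell_x$ vanished, the specialized basis could no longer generate a zero-dimensional ideal in $x$ over the residue field, contradicting the finiteness of the $\rho_x$-fiber. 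You identified the right obstacle and the right ingredients, but you must actually complete that argument for the proof to stand.
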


\begin{proof}
  This follows by applying Fact~\ref{fact:elim}\eqref{fact:elimination} and
  Fact~\ref{fact:elim}\eqref{fact:extension}.\looseness=-1
\end{proof}

We use Lemmas~\ref{lemma:transfer} and~\ref{lemma:proj_x} to compute a
polynomial system that encodes $\cfiber{i}{u}{\scrW}$. Remark that by applying
Lemma~\ref{lemma:proj_x}, $\scrW$ is the union of the locally closed sets
defined by the vanishing of all polynomials in $G_x$ and the nonvanishing of at
least one element of~$\ell_x$. Note that the vanishing set of $G_x$ is the
Zariski closure of $\scrW$. Furthermore, we denote by $G_u$ the set $G_x\cap
\field{t}[\uline{z}]$.

Given a set $F$ of polynomials in $\field{t}[u, \underline{z}]$ and $r\in
\mathbb{N}$, we denote by ${\sf DEG}_r(F, u)$ (resp. ${\sf DEG}_{\leq r}(F, u)$)
the subset of polynomials in $F$ of degree $r$ (resp. at most $r$) in $u$. For a
polynomial $f \in \field{t}[u, \underline{z}]$, we denote by ${\sf coeffs}(f,
u)$ its coefficients when $f$ is seen in $\mathbb{K}'[u]$ with
$\mathbb{K}' := \field{t}[\underline{z}]$.\looseness=-1

\begin{lemma}\label{lemma:necessarycondition_fiber}
  We reuse the notation introduced above. Let~$i$ be greater than~$0$.
  If there is no polynomial in $G_x$
  whose degree w.r.t. $u$ is greater than or equal to $i$, then
  $\cfiber{i}{u}{\scrW}$ is empty.
\end{lemma}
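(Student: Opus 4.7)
The plan is to argue by contradiction, combining Lemma~\ref{lemma:proj_x} with the finite-fiber hypothesis~\eqref{hyp:finite} (which passes to $\scrW = \rho_x(W)$ because $\rho_x$ of a finite set is finite). Suppose $\balpha\in\cfiber{i}{u}{\scrW}$, so that there are $i$ distinct values $u_1,\ldots,u_i\in\acfield{t}$ with $(u_j,\balpha)\in\scrW$ for all~$j$. By Lemma~\ref{lemma:proj_x}, every $g\in G_x$ vanishes at each $(u_j,\balpha)$, while for each $j$ some $\ell_j\in \ell_x$ satisfies $\ell_j(u_j,\balpha)\neq 0$.

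Fix $g\in G_x$ and view it as a univariate polynomial in $u$ with coefficients in $\K(t)[\underline{z}]$. Specializing $\underline{z}=\balpha$ gives $g(u,\balpha)\in\acfield{t}[u]$ of degree at most $\deg_u(g)< i$ by hypothesis; since it has the $i$ distinct roots $u_1,\ldots,u_i$, it must be identically zero. Hence the condition ``$g(u,\balpha)=0$ for all $g\in G_x$'' is automatically satisfied for every $u\in\acfield{t}$. Combined with Lemma~\ref{lemma:proj_x}, the fiber $\scrW_\balpha$ therefore contains the Zariski-open set $\{u\in\acfield{t}:\ell_1(u,\balpha)\neq 0\}$, where $\ell_1\in\ell_x$ is the witness supplied by the case $j=1$. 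Since $\ell_1(\cdot,\balpha)$ is a nonzero univariate polynomial (it does not vanish at $u_1$), this open set is cofinite in $\acfield{t}$ and hence infinite, contradicting the finiteness of $\scrW_\balpha$ guaranteed by~\eqref{hyp:finite}.

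The main obstacle is the bookkeeping around specialization: one must notice that the $\ell_x$-nonvanishing clause of Lemma~\ref{lemma:proj_x}, which is a priori only a pointwise statement at each $(u_j,\balpha)$, upgrades to the assertion that $\ell_1(\cdot,\balpha)$ is a nonzero polynomial in $u$, which in turn forces an infinite subset of the fiber. The argument crucially uses that $\acfield{t}$ is an infinite field and that the finite-fiber property transfers through the projection~$\rho_x$; both are standard, but they are the hinges on which the contradiction turns.
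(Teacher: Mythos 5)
Your proof is correct, but it takes a genuinely different route from the paper's. The paper invokes a structural result on lexicographic Gr\"obner bases (\cite[Thm.~2, \S2, Ch.~3, p.~130]{CoLiOS07}) to produce one specific $g\in G_x$ whose specialization $g(u,\underline{z}=\balpha)$ has exact positive degree $j\leq i-1$ in $u$; since that univariate polynomial has at most $j<i$ roots while $\card{u}{\scrW}{\balpha}\geq i$, the contradiction is obtained locally in the fiber, with no visible appeal to~\eqref{hyp:finite}. You instead note that the degree hypothesis forces \emph{every} $g\in G_x$ to specialize to the zero polynomial at $\balpha$, so that by Lemma~\ref{lemma:proj_x} the fiber $\scrW_{\balpha}$ contains the cofinite set $\{u\in\acfield{t}:\ell_1(u,\balpha)\neq 0\}$, contradicting the finite-fiber hypothesis~\eqref{hyp:finite} (which you correctly observe passes through the projection $\rho_x$). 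Your argument is more elementary, resting only on Lemma~\ref{lemma:proj_x}, a degree/root count, and~\eqref{hyp:finite}, rather than on the Gr\"obner-basis specialization theorem; the trade-off is that~\eqref{hyp:finite} becomes an explicit ingredient, but since it is a standing assumption of the paper this costs nothing. Both approaches are valid, and your variant has the pedagogical advantage of making the role of the finite-fiber assumption visible.
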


\begin{proof}
  Suppose that~$\cfiber{i}{u}{\scrW}\neq \emptyset$ and
    pick~$\balpha\in\cfiber{i}{u}{\scrW}$. By~\cite[Thm.~2, \S2, Ch.~3,
    p.~130]{CoLiOS07}, there exists~$g\in G_x$ of degree~$\leq i-1$ in~$u$ and
    some $1\leq j \leq \deg_u(g)$ such that: the coefficient of~$u^j$ does not
    vanish at~$\balpha$ and the coefficients of~$u^\ell$ vanish at~$\balpha$, for
    all~$\ell>j$. There thus exist at most $j<i$ solutions to the equation~$g(u,
    \underline{z}=\balpha)=0$. This contradicts the fact that the fiber
    above~$\balpha$ has cardinality at least~$i$, and hence
    that~$\balpha\in\mathcal{F}_i(u, \scrW)$. 
\end{proof}

Let $\ell^{(i)}_u$ be the set of leading coefficients of the
polynomials in~$G_x$ that have degree at least $i$ w.r.t. $u$; we
denote $\ell_u^{(1)}$ by $\ell_u$.

Also for~$g\in G_x$ of positive degree~$i$ in~$u$, we denote by~$\MinHer(g,
  i)$ the set in~$\K(t)[\underline{z}]$ of all~$i\times i$ minors of the Hermite
  quadratic form associated with~$g$ when seen as a polynomial in~$u$.
  By~\cite[Thm.~4.57, p. 130]{ARAG06}, $g(u, \balpha)$ has at least $i$ distinct roots
  when $\balpha$ does not lie in the common zero set of $\MinHer(g, i)$.

Let $\mathcal{S}^{(i)}$ be the set of points $\bbeta~\in\acfield{t}^{k+1}$ such that
the following polynomial constraints are simultaneously satisfied:
\begin{itemize}
\item[\em (a)] all polynomials in $G_u$ vanish at $\bbeta$;
\item[\em (b)] all polynomials in ${\sf coeffs}(f, u)$ for $f\in {\sf DEG}_{\leq
    i-1}(G_x, u)$ vanish at $\bbeta$;
\item[\em (c)]
  at least one polynomial~$\ell\in\ell_x$ does not vanish at~$\bbeta$;
\item[\em (d)] at least one polynomial $\ell\in \ell^{(i)}_u$ does not vanish at~$\bbeta$;
\item[\em (e)] at least one polynomial in $\MinHer(g, i)$
  does not vanish at~$\bbeta$, for some $g \in G_x$ with degree at least $i$
  w.r.t. $u$.
\end{itemize}

Further, we denote by {\em (f)} the disjunction $\vee_{\ell\in \ell_u} \ell\neq
0$. 

To give the intuition, conditions {\em (a), (c)} and {\em (f)}
  characterize the projected sets~$\scrW$ and
  $\pi(W)$. Conditions {\em (b), (d)} and {\em
    (e)} characterize the cardinality of the fiber.
\begin{proposition}\label{prop:elim}
  The projection of~$\mathcal{S}^{(i)}$ onto the $\underline{z}$-coordinate space
  is~$\cfiber{i}{u}{\scrW}$.
\end{proposition}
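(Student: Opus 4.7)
The plan is to establish the set equality $\pi_{\underline{z}}(\mathcal{S}^{(i)}) = \cfiber{i}{u}{\scrW}$ by a pair of inclusions, using throughout Lemma~\ref{lemma:proj_x}, the Hermite criterion cited in the definition of $\MinHer(g,i)$, and the fact that $G_x$ is a lex Gr\"obner basis of $\mathcal{J}\cap\K(t)[u,\underline{z}]$ with $u\succ\underline{z}$.

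For the inclusion $\cfiber{i}{u}{\scrW}\subseteq \pi_{\underline{z}}(\mathcal{S}^{(i)})$, I would fix $\balpha\in \cfiber{i}{u}{\scrW}$, choose $i$ distinct $u$-values $\gamma_1,\ldots,\gamma_i$ with $(\gamma_j,\balpha)\in \scrW$, and set $\bbeta=(\gamma_1,\balpha)$. Conditions~(a) and~(c) would then follow directly from Lemma~\ref{lemma:proj_x} applied at $(\gamma_1,\balpha)$. For~(b), any $f\in G_x$ with $\deg_u(f)\leq i-1$ specialized at $\balpha$ is a polynomial in $u$ of degree at most $i-1$ vanishing at the $i$ distinct values $\gamma_j$, hence identically zero, forcing its coefficients to vanish at $\balpha$. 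For~(d), I would argue by contradiction: were every leading coefficient in $\ell_u^{(i)}$ to vanish at $\balpha$, then combined with~(b) every element of $G_x$ would specialize to a polynomial in $u$ of degree strictly less than $i$; sharing the $i$ common roots $\gamma_j$ they would all vanish identically, placing the whole affine line over $\balpha$ inside $V(G_x)$ and contradicting~\eqref{hyp:finite} (either through $\scrW_\balpha$ becoming infinite, or through $\scrW_\balpha$ becoming empty against $\balpha\in\cfiber{i}{u}{\scrW}$). Condition~(e) would then follow from the Hermite criterion applied to the $g$ furnished by~(d): since $g(u,\balpha)$ has full $u$-degree $\deg_u(g)\geq i$ and admits $\gamma_1,\ldots,\gamma_i$ among its roots, the associated Hermite form has rank at least~$i$, so some $i\times i$ minor is non-zero at~$\balpha$.

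The reverse inclusion $\pi_{\underline{z}}(\mathcal{S}^{(i)})\subseteq \cfiber{i}{u}{\scrW}$ will be the main obstacle and the heart of the proof. Starting from $\bbeta=(\beta,\balpha)\in \mathcal{S}^{(i)}$, the Hermite criterion via~(e) would provide $g\in G_x$ with $\deg_u(g)\geq i$ whose specialization $g(u,\balpha)$ has at least $i$ distinct roots $\gamma_1,\ldots,\gamma_i$. The key task is then to show each $(\gamma_j,\balpha)\in\scrW$. Vanishing of low-$u$-degree elements of $G_x$ at $(\gamma_j,\balpha)$ is immediate from~(b); for high-$u$-degree elements I would invoke the lex Gr\"obner basis specialization, arguing that conditions~(a), (b) and~(d) produce a non-degenerate specialization (in the spirit of Fact~\ref{fact:elim}~(\ref{fact:extension})) so that the specialized ideal in the PID $\K(t,\balpha)[u]$ is generated, up to a unit, by $g(u,\balpha)$, making the $\gamma_j$ common roots of all of $G_x(u,\balpha)$. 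The required non-vanishing of some $\ell\in\ell_x$ at $(\gamma_j,\balpha)$ would then be deduced from~(c): the witnessing $\ell$ is not identically zero in $u$ at $\balpha$, so only finitely many $u$ values fail the non-vanishing, and one ultimately extracts $i$ distinct $\gamma_j$'s in $\scrW$ (possibly by enlarging the candidate set to the full root set of $g(u,\balpha)$ rather than the mere $i$ witnesses supplied by the Hermite rank).
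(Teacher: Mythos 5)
Your two-inclusion plan uses the same ingredients as the paper's proof (Lemma~\ref{lemma:proj_x} for conditions~\emph{(a)} and~\emph{(c)}, the degree/vanishing argument for~\emph{(b)}, the Hermite rank criterion from~\cite[Thm.~4.57]{ARAG06} for~\emph{(e)}, and Gr\"obner-basis specialization behaviour underlying~\emph{(d)}). The paper does not phrase it as two inclusions: it first shows that \emph{(a)}, \emph{(c)} and the auxiliary disjunction~\emph{(f)} (which is implied by~\emph{(d)} since $\ell_u^{(i)}\subset\ell_u$) cut out $\pi(W)$, using the closure and extension theorems, and then argues that \emph{(b)}, \emph{(d)}, \emph{(e)} encode the fiber-cardinality via \cite[Chap.~3, §5, Thm.~2]{CoLiOS07} (a lex specialization theorem) followed by Hermite. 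Your reorganization is legitimate, but two steps in it are not yet sound as written.

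First, in the forward direction your argument for~\emph{(d)} asserts that if every element of $\ell_u^{(i)}$ vanished at $\balpha$ then, together with~\emph{(b)}, \emph{every} $g\in G_x$ would specialize to $u$-degree $<i$. That does not follow: a leading coefficient vanishing only drops $\deg_u(g(u,\balpha))$ by at least one, not necessarily below $i$ (a subleading coefficient of $u$-degree $\geq i$ can survive). The paper sidesteps this by invoking the specialization theorem to identify a single ``governing'' polynomial rather than reasoning term-by-term. Second, in the reverse direction you assert that~\emph{(a)}, \emph{(b)}, \emph{(d)} force the specialized ideal $\langle G_x(u,\balpha)\rangle$ to be generated, up to a unit, by \emph{the} $g$ furnished by~\emph{(e)}. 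The lex specialization theorem only identifies the generator as the element $g'$ of \emph{minimal} $u$-degree among those in $G_x$ whose leading coefficient is nonzero at~$\balpha$, and that $g'$ need not coincide with the $g$ from~\emph{(e)}. In that case $g(u,\balpha)=h(u)\,g'(u,\balpha)$ and the $i$ distinct roots certified by the Hermite rank of $g(u,\balpha)$ may include roots of $h$ that are not roots of $g'(u,\balpha)$, hence not fiber points; ``enlarging the candidate set to the full root set of $g(u,\balpha)$'' goes in the wrong direction. You would need to argue that the Hermite minors and the leading-coefficient condition single out the \emph{same} element $g'$, which is what the paper's appeal to the CLO theorem and its remark about the denominators of $\MinHer(g,i)$ being powers of $\operatorname{\mathsf{LeadingCoefficient}}_u(g)$ are trying to ensure.
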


\begin{proof}
  We start by showing that the set of points which do satisfy {\em (a)}, {\em
    (c)} and {\em (f)} coincides with $\pi(W)$ which, by definition, contains
  $\cfiber{i}{u}{W}$.

  By \cite[Ch. 3, \S2, Thm.~3, p.~156]{CoLiOS07},
  the Zariski closure of $\pi(W)$ is defined by~{\em~(a)}. 
  By Lemma~\ref{lemma:proj_x}, $\scrW$ is defined by the vanishing of all
  polynomials in $G_x$ and the nonvanishing of at least one element in
  $\ell_x$. Note that $\pi(W)$ coincides with the projection of $\scrW$ on the
  $\uline{z}$-coordinate space (which we also denote by $\pi(\scrW)$ by a slight
  abuse of notation). Note that all points of $\scrW$ also satisfy {\em (c)}
  Applying Fact~\ref{fact:elim}\eqref{fact:extension} to $G_x$ and the Zariski
  closure of $\scrW$ shows that~$\pi(\scrW)$ is also contained in the set of
  points which satisfy {\em (f)}. To prove the reverse inclusion, it suffices to
  apply Fact~\ref{fact:elim}\eqref{fact:extension} by lifting the solutions of
  {\em (a)} from the $\uline{z}$-space to points in
  $W$.\\
  \indent It remains to show that the extra conditions {\em (b)}, {\em (d)}
    and {\em (e)} ensure the cardinality condition on the fiber of~$\pi$. First,
    it results from~\cref{lemma:necessarycondition_fiber} that~{\em (b)} is a
    necessary condition to ensure a fiber of cardinality at least~$i$. Next, it
    follows from~\cite[Chap.~3, \S5, Thm.~2, p.~156]{CoLiOS07} and {\em
      (d)} that the cardinality condition on the fiber is reduced to see under
    which condition a univariate polynomial $g(u, \underline{z} = \balpha)$
    (for~$\balpha\in\pi(\scrW)$ and $g\in G_x$) admits at least~$i$ distinct
    roots. Finally, it follows from~\cite[Thm.~4.57]{ARAG06} that~{\em (e)} is
    a necessary and sufficient condition for this.  Also, a subtle
    observation is that~\textit{(f)} guarantees that the vanishing set of the
    possible denominators in the minors in~$\MinHer(g, i)$ is avoided (as they
    are by construction only powers of~$\textsf{LeadingCoefficient}_u(g)$).
\end{proof}

Hence, the algorithm which relies on \Cref{prop:elim} consists in:
\begin{enumerate}
  \item
  Computing a Gröbner basis~$G$ for $\mathcal{J}\subset \field{t}[x, u, \uline{z}]$
  w.r.t. some lexicographic ordering $\succ$ with $x \succ u \succ
  \underline{z}$;
\item Computing the relations that define $\mathcal{S}^{(i)}$ as described above;
  \item
  For each conjunction of constraints defining $\mathcal{S}^{(i)}$: eliminating~$u$
  from $G_x$ and from the defining equations
  and~eliminating the saturation variables introduced to
  handle inequations (still, the inequations should be kept in the output).\\
  \end{enumerate}\vspace{-0.2cm}
  Note that in step ($3$), Gröbner bases can be used to perform the elimination.
  This also has the advantage to determine if there are points which do satisfy
  both the equations and inequations defining~$\mathcal{S}^{(i)}$, hence deciding its
  emptiness.\\
  It should be noted that, in practice, one can avoid to use
  $\field{t}$ as a base field and perform the computations in $\bfield[x, u,
  \uline{z}, t]$ with an elimination ordering where $t$ is the smallest
  variable. Specialization properties of Gröbner bases \cite{Kalkbrener97} show
  that one obtains this way a nonreduced Gröbner basis for $\mathcal{J}$.
  Computing the conditions {\em (a)}-{\em(d)} is straightforward with standard
  computer algebra systems \cite{GaGe13}.

  \smallskip For the application to DDEs, we make the following hypothesis.
\begin{align*}
\label{hyp3} \tag{\bf H3}
\underline{\textsf{Hypothesis 3:}}\;\;\;\; \cfiber{k}{u}{\mathcal{X}}
\text{ and } \mathcal{S}_k(\check{z}_1, \mathcal{X}) \text{ are finite sets.}
\end{align*}
      
\begin{proposition}\label{prop:elim_DDE}
  Let~$P$ be as in~\eqref{eqn:initpoleqn} and~$a\in\mathbb{K}$.
  Assume~\eqref{hyp1}, \eqref{hyp3} and that $P$ is squarefree.
  Denote~$\mathcal{J} :=\langle P, \partial_x P,
  \partial_uP\rangle:(u-a)^\infty\subset \K(t)[x, u, \underline{z}]$.
  If $\mathcal{G}$ is the disjunction of
    conjunction of polynomial
    equations computed by the
    algorithm based on~\Cref{prop:elim}, then there exists one conjunction
    in~$\mathcal{G}$ from which eliminating all variables but~$t$ and~$z_0$
    yields some nonzero~$R\in\K[t, z_0]$ s.t. $R(t, F(t, a))=0$.
\end{proposition}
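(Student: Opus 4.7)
The plan is to run the algorithm underlying \Cref{prop:elim} on $\mathcal{J}$ with $i=k$, locate a conjunction in its output that is satisfied by the $\underline{z}$-coordinates of interest, and then eliminate $z_1, \ldots, z_{k-1}$ to land on a nonzero bivariate relation annihilating $F(t,a)$.

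First, I would take $W := V(\mathcal{J}) \subset \overline{\mathbb{K}(t)}^{k+2}$. Since $\mathcal{J}$ is the saturation of $\langle P, \partial_x P, \partial_u P\rangle$ by $(u-a)^\infty$, the set $W$ is the Zariski closure of the constructible set $\mathcal{X}$ introduced in \Cref{sec:generic_modeling}; the squarefreeness of $P$ keeps this description clean, and \eqref{hyp3} delivers the finite-fiber assumption \eqref{hyp:finite}. Feeding $\mathcal{J}$ to the algorithm described right after \Cref{prop:elim} with $i=k$ then produces a disjunction $\mathcal{G}$ of conjunctions of polynomial equations and inequations in $\K(t)[\underline{z}]$ whose solution set in $\overline{\mathbb{K}(t)}^{k}$ coincides with $\cfiber{k}{u}{W}$.

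Second, I would show that the point $\bm{\alpha}^{\star} := (F(t,a), \partial_u F(t,a), \ldots, \partial_u^{k-1} F(t,a))$ lies in $\cfiber{k}{u}{W}$. Under \eqref{hyp1}, \Cref{prop:root_existence} provides $k$ pairwise distinct solutions $U_1(t), \ldots, U_k(t)$ of \eqref{eqn:crucial_eqn_BMJ}, all different from $a$. Each $U_j(t)$ yields a point $(F(t, U_j), U_j, \bm{\alpha}^{\star})$ in $\mathcal{X} \subseteq W$; these $k$ points share the $\underline{z}$-coordinates $\bm{\alpha}^{\star}$ and have pairwise distinct $u$-coordinates. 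Therefore $\card{u}{W}{\bm{\alpha}^{\star}} \geq k$, so $\bm{\alpha}^{\star} \in \cfiber{k}{u}{W}$, and $\bm{\alpha}^{\star}$ must satisfy at least one conjunction $\mathcal{C} \in \mathcal{G}$.

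Third, I would extract $R$ by eliminating $z_1, \ldots, z_{k-1}$ from $\mathcal{C}$. By \eqref{hyp3}, $\cfiber{k}{u}{W}$ is finite, hence so is the solution set $V(\mathcal{C}) \subset \overline{\mathbb{K}(t)}^{k}$, and its image under the projection onto the $z_0$-coordinate is a finite (hence zero-dimensional) subset of $\overline{\mathbb{K}(t)}$ containing $F(t,a)$. Rabinowitsch's trick encodes the inequations of $\mathcal{C}$ as equations with auxiliary variables; eliminating those variables together with $z_1, \ldots, z_{k-1}$ by a Gr\"obner basis computation and \Cref{fact:elim}\eqref{fact:elimination} then yields a nonzero principal ideal in $\K(t)[z_0]$. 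Clearing denominators of any of its generators delivers the sought $R \in \K[t, z_0] \setminus \{0\}$, and $R(t, F(t,a)) = 0$ because $F(t,a)$ lies in the projected image of $V(\mathcal{C})$. The main obstacle lies precisely here: certifying that the elimination returns a \emph{nonzero} polynomial despite the presence of inequations; this is settled by the zero-dimensionality consequence of \eqref{hyp3} together with the Rabinowitsch reduction to a standard elimination problem.
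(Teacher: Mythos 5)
Your plan mirrors the paper's proof in its overall structure: both invoke Prop.~\ref{prop:elim} to identify the solution set of $\mathcal{G}$ with $\cfiber{k}{u}{\scrW}$, both use~\eqref{hyp1} together with Prop.~\ref{prop:root_existence} to place the point $(F(t,a),\ldots,\partial_u^{k-1}F(t,a))$ in that set, and both appeal to the finiteness coming from~\eqref{hyp3} to conclude that the elimination ideal of the selected conjunction is nontrivial. The Rabinowitsch step you add is also consistent with the algorithm description preceding the proposition.

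There is one visible omission, and it is not cosmetic. The statement explicitly assumes that $P$ is squarefree, and your proof never actually uses that hypothesis: the remark that it ``keeps this description clean'' has no mathematical content. The paper, by contrast, opens its proof by invoking Sard's lemma (together with $P$ squarefree, as in \cite[Lemma~2.3]{BoChNoSa22} adapted to $k>1$) to establish that $\mathcal{J}\cap\K(t)[\underline{z}]\neq\{0\}$, i.e.\ that the projection $\pi(W)$ onto the $\underline{z}$-space is not Zariski dense. This is what guarantees that the set $G_u$ of condition~{\em (a)} in Prop.~\ref{prop:elim} is nonempty, and more broadly that the elimination machinery has traction on a proper subvariety rather than the whole space. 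You should insert this as a preliminary step rather than leave squarefreeness unused. A smaller slip: you write that \eqref{hyp3} ``delivers the finite-fiber assumption \eqref{hyp:finite}''; but \eqref{hyp:finite} is introduced in Section~2 as a standing assumption on $\pi|_{\scrW}$, and it is logically independent of the finiteness of $\cfiber{k}{u}{\mathcal{X}}$ asserted in \eqref{hyp3} — you should not conflate them.
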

\begin{proof}
  First, it results from Sard's lemma~\cite[Prop.~B.2, App-8]{SaSc17}
  that if~$P$ is squarefree, then
  $\mathcal{J}\cap\mathbb{K}(t)[\underline{z}]$ is not reduced to~$0$
  (by using the same proof as in~\cite[Lemma~2.3]{BoChNoSa22}, for~$k>1$).
  Note also that by~\cref{prop:elim}, the solution set of~$\mathcal{G}$
  is~$\cfiber{k}{u}{\scrW}$.
  Finally, using~\eqref{hyp1}
  implies that the projection of~$\mathcal{F}_k(u, \mathcal{X})$ onto the
  $z_0$-coordinate space contains the value~$F(t, a)$.
  Hence using~\cite[Thm.~3, \S2, Ch.~3]{CoLiOS07},
  Fact~\ref{fact:elim}\eqref{fact:elimination}
  and eliminating all variables but~$t$
  and~$z_0$ in each
  condition given by~$\mathcal{G}$  either yields, by~\eqref{hyp3},  a nonzero
  polynomial in~$\K[t, z_0]$ 
  or yields the constant polynomial~$1$. In any case,
  one of the conditions in~$\mathcal{G}$
  yields
  by~\eqref{hyp1} and \eqref{hyp3} a nonzero~$R\in\K[t, z_0]$
  annihilating~$F(t, a)$.
\end{proof}

\section{Geometric approach}\label{sec:stickelberger}
\def\scrW{\mathscr{W}}
\def\denom{\mathsf{denom}}

Let~$\scrV\subset\acfield{t}^{k+2}$ be an algebraic set
associated to an
ideal~$\mathcal{J}$ of~$\K(t)[x, u, \underline{z}]$.
For a set of variables (or scalars)
$z_0, \ldots, z_{k-1}$, recall that we use
the notation $\check{z}_1 := z_0, z_2, \ldots, z_{k-1}$. Also,
we consider the canonical inclusion~$j:\mathbb{K}(t)[x, u, \underline{z}]\rightarrow
\K(t, \check{z}_1)[x, u, z_1]$.

In this section, we say that assumption~\hypertarget{hypS}{$\bf{(S)}$} holds
if the following assumptions hold:
\begin{itemize}\label{hypS}
  \item \eqref{hyp:stick} holds (with~$\mathscr{W}$ replaced by~$\mathscr{V}$),
\item  the image of $\scrV$ by~$\pi_{\check{z}_1}$ is
  Zariski dense ($\overline{\pi_{\check{z}_1}(\scrV)}
        =
        \acfield{t}^{k-1}$),
\item $\mathcal{J}$ has dimension~$k-1$ in~$\K(t)[x,u, \underline{z}]$,
\item $\mathcal{J}_{z_1}:=\langle j(\mathcal{J}) \rangle$ has dimension~$0$
  in~$\K(t, \check{z}_1)[x, u, z_1]$.
\end{itemize}
\indent \;\;\;
Recall that using~\cref{lemma:SW_constructible}, the set~$\stickV$ is constructible.
 In this section, we design an algorithm with the following
specification: it takes as input a finite set of polynomials
of~$\K(t)[x, u, \underline{z}]$ generating a radical ideal $\mathcal{J}$
satisfying assumption~~\hyperlink{hypS}{$\bf{(S)}$} and such
that~$\mathcal{J}\cap\K(t)[\underline{z}]$ is principal;
it returns, under an additional assumption that will be made explicit later, a
finite set of polynomial constraints whose solution
set is the Zariski closure of~$\stickV$,
for~$\scrV\subset\acfield{t}^{k+2}$ the~zero~set~of~$\mathcal{J}$.

\indent To achieve our aim,
we first determine algebraic relations which induce a zero set
containing~$\stickV$. The ideal~$\mathcal{J}_{z_1}$ having
dimension~$0$, the quotient ring $\mathbb{K}(t, \check{z}_1)[x, u,
z_1]/\mathcal{J}_{z_1}$ defines a $\mathbb{K}(t, \check{z}_1)$-vector space of
finite dimension. We introduce the multiplication map $m_{z_1}: f\mapsto f\cdot
z_1$ which maps $\mathbb{K}(t, \check{z}_1)[x, u, z_1]/ \mathcal{J}_{z_1} $ to
itself, and consider its characteristic polynomial $\xi\equiv\xi_{z_1} \in\mathbb{K}(t,
\check{z}_1)[z_1]$. We denote by~$\chi\equiv\chi_{z_1}
\in\mathbb{K}(t)[\underline{z}]$ the numerator
of~$\xi$ w.r.t.~$\check{z}_1$
and by $\denom(\xi)\in\mathbb{K}(t)[\underline{z}]$ its denominator.
As~$\denom(\xi)$ depends only on~$k-1$ variables, $V(\denom(\xi))$
is a priori only defined
in~$\acfield{t}^{k-1}$. Seeing~$\denom(\xi)$ as a polynomial in~$\K(t)[x, u, \underline{z}]$
(resp.~$\K(t)[u, \underline{z}]$,~$\K(t)[\underline{z}]$), its zero set
is~$\acfield{t}^3\times V(\denom(\xi))$ (resp. $\acfield{t}^2\times V(\denom(\xi))$,
$\acfield{t}\times V(\denom(\xi))$).
With a slight abuse of notation, we denote all these sets by~$V(\denom(\xi))$:
the precise definition domain will be
implicitly dependent on the set with which we intersect/take the complement,
etc.\\
\indent Let~$Z :=
(\overline{\mathbb{K}(t)}^{k+2}\setminus V(\denom(\xi))) \cap
\overline{\mathbb{K}(t)}^3\times~\pi_{\check{z}_1}(\scrV)\subset\acfield{t}^{k+2}$.

\begin{lemma}\label{lemma:correspondance_multmap}
  Under~~\hyperlink{hypS}{$\bf{(S)}$} and assuming the radicality of~$\mathcal{J}$,
  the set~$Z$ is a dense subset of~$\acfield{t}^{k+2}$ which satisfies
$V(\mathcal{J}\cap\mathbb{K}(t)[\underline{z}] +
\langle \chi, \partial_{z_1} \chi, \ldots, \partial_{z_1}^{k-1}\chi\rangle)\cap{\pi(Z)}
= {\mathcal{S}_k(\check{z}_1, \scrV\cap Z)}.$
\end{lemma}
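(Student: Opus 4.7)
The plan is to first establish density of $Z$ in $\acfield{t}^{k+2}$ and then to translate the polynomial conditions defining the left-hand side into fiber-cardinality conditions via Stickelberger's theorem combined with a specialization argument.

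For density, observe that $\mathsf{denom}(\xi)\in\K(t)[\check{z}_1]$ is nonzero, so $V(\mathsf{denom}(\xi))$ is a proper Zariski-closed subset of $\acfield{t}^{k+2}$ and its complement is Zariski-dense. By the second condition in~\hyperlink{hypS}{$\bf{(S)}$}, $\pi_{\check{z}_1}(\scrV)$ is dense in $\acfield{t}^{k-1}$, hence $\acfield{t}^3\times\pi_{\check{z}_1}(\scrV)$ is dense in $\acfield{t}^{k+2}$. The intersection of a dense open subset with a dense constructible set is dense, proving the first claim.

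For the set equality, I first identify $V(\mathcal{J}\cap\K(t)[\underline{z}])$ with the Zariski closure $\overline{\pi(\scrV)}$ by standard elimination theory (Fact~\ref{fact:elim}\eqref{fact:elimination} combined with \cite[Ch.~3,~\S2,~Thm.~3]{CoLiOS07}). Since $\mathcal{J}$ is radical, so is its extension $\mathcal{J}_{z_1}$ to $\K(t,\check{z}_1)[x,u,z_1]$, which is $0$-dimensional by~\hyperlink{hypS}{$\bf{(S)}$}. Applying Stickelberger's theorem (Fact~\ref{fact:elim}\eqref{fact:Stickelberger}) to the multiplication endomorphism $m_{z_1}$ on the quotient $\K(t,\check{z}_1)[x,u,z_1]/\mathcal{J}_{z_1}$ yields
\begin{equation*}
\xi(z_1)=\prod_{a\in V(\mathcal{J}_{z_1})}(z_1-z_1(a)),
\end{equation*}
and multiplying by $\mathsf{denom}(\xi)$ produces $\chi\in\K(t)[\underline{z}]$. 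For any point $\balpha=(\alpha_0,\ldots,\alpha_{k-1})$ with $(\alpha_0,\alpha_2,\ldots,\alpha_{k-1})\notin V(\mathsf{denom}(\xi))$ and inside $\pi_{\check{z}_1}(\scrV)$, i.e. $\balpha\in\pi(Z)$, the above identity specializes faithfully and identifies the roots of $\chi$ evaluated at $(\alpha_0,\alpha_2,\ldots,\alpha_{k-1})$, counted with multiplicity, with the $z_1$-coordinates of the fiber points of $\pi_{\check{z}_1}$ over $(\alpha_0,\alpha_2,\ldots,\alpha_{k-1})$ in $\scrV$. Hence, on $\pi(Z)$, the vanishing of $\chi,\partial_{z_1}\chi,\ldots,\partial_{z_1}^{k-1}\chi$ at $\balpha$ is equivalent to $\alpha_1$ being a root of multiplicity at least $k$ of the specialized polynomial in $z_1$, which in turn produces at least $k$ fiber points of $\scrV\cap Z$ above $(\alpha_0,\alpha_2,\ldots,\alpha_{k-1})$, so $\balpha\in\mathcal{S}_k(\check{z}_1,\scrV\cap Z)$. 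The reverse implication proceeds in the same spirit: a fiber of cardinality at least $k$ in $\scrV\cap Z$ translates via Stickelberger into the required multiplicity and hence into the required vanishing of $\chi$ and its derivatives at $\balpha$.

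The main obstacle is to rigorously justify that the Stickelberger factorization, valid generically over $\K(t,\check{z}_1)$, transports faithfully under specialization at any $(\alpha_0,\alpha_2,\ldots,\alpha_{k-1})\notin V(\mathsf{denom}(\xi))$: one must verify that the $\deg_{z_1}(\chi)$ generic roots of $\xi$ specialize without loss or unintended coincidence, so that the root multiplicities of $\chi(\balpha,z_1)$ truly reflect the structure of the corresponding fiber of $\pi_{\check{z}_1}$ on $\scrV$. This is ensured by the very choice of $Z$, which avoids $V(\mathsf{denom}(\xi))$ and stays inside $\pi_{\check{z}_1}(\scrV)$, combined with parts~\eqref{fact:elimination} and~\eqref{fact:extension} of Fact~\ref{fact:elim} and the specialization theory for Gr\"obner bases~\cite{Kalkbrener97}.
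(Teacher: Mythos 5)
Your proof takes essentially the same route as the paper: both establish density of $Z$ via the same two observations, and both translate the vanishing of $\chi,\partial_{z_1}\chi,\ldots,\partial_{z_1}^{k-1}\chi$ into a fiber-cardinality statement by applying the eigenvalue/Stickelberger correspondence for $m_{z_1}$ on the $0$-dimensional localized ideal $\mathcal{J}_{z_1}$, specialized at $\check{\balpha}$ away from $V(\denom(\xi))$ and inside $\pi_{\check{z}_1}(\scrV)$. You cite Fact~\ref{fact:elim}\eqref{fact:Stickelberger} directly where the paper invokes Fact~\ref{fact:elim}\eqref{fact:eigenvalue} together with radicality, but this is the same mathematics; the paper is somewhat more explicit on the two inclusions and on the identification $V(\mathcal{J}_{\balpha})=V(\mathcal{J}_{z_1,\balpha})$, while you offer the reverse inclusion only as a sketch.
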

\begin{proof}
  The density of~$Z$ follows from 2 facts: (i) the complement of~$V(\denom(\xi))$
  is dense in~$\acfield{t}^{k-1}$; (ii) by~~\hyperlink{hypS}{$\bf{(S)}$},
  the image of $\scrV$ by $\pi_{\check{z}_1}$ is
  Zariski dense.\\
  \indent Now, from the~$0$-dimensionality of~$\mathcal{J}_{z_1}$, the
  eigenvalues of the endomorphism $m_{z_1}$ of
  $\mathbb{K}(t, \check{z}_1)[x, u, z_1]/\mathcal{J}_{z_1}$
  are, by~Fact~\ref{fact:elim}\eqref{fact:eigenvalue},
  the $z_1$-coordinates of the zero set~$\scrV_{z_1}\subset\overline{\K({t, \check{z}_1})}^3$
  associated to~$\mathcal{J}_{z_1}$. Also note that we can specialize
  $\underline{z}$ in~$\xi$ to any point of $\pi(\scrV\cap
  Z)$.\\
  \indent We now prove the direct inclusion. We pick $\balpha=(\alpha_0,
  \ldots, \alpha_{k-1})\in V(\mathcal{J}\cap\mathbb{K}(t)[\underline{z}] +
  \langle \chi, \partial_{z_1} \chi, \ldots,
  \partial_{z_1}^{k-1}\chi\rangle)\cap\pi(Z)$, and we consider the specialized
  ideal~$\mathcal{J}_\balpha$ obtained by specialization of
  $\mathcal{J}$ to~$\check{z}_1=\check{\balpha}$. As
  $(\check{\balpha})\in
  \pi_{\check{z}_1}(Z)\subset\pi_{\check{z}_1}(\scrV)$, it follows~$V(\mathcal{J}_\balpha)$
  is not empty and hence that~$\mathcal{J}_\balpha$
  has dimension~$0$. Moreover,~$V(\mathcal{J}_{\bm{\alpha}})$
  corresponds to the zero set of the specialized ideal~$\mathcal{J}_{z_1, \balpha}$
  obtained after specialization of~$\mathcal{J}_{z_1}$ to
  $\check{z}_1=\check{\balpha}$. As
  $\balpha\in V(\chi, \partial_{z_1} \chi, \ldots, \partial_{z_1}^{k-1}\chi)$, we
  have that $z_1 = \alpha_1$ is a root of multiplicity
  at least~$k$ of the polynomial $\chi(z_1, \check{z}_1 = \check{\balpha})$.
 By the
 radicality of~$\mathcal{J}$, the use of~Fact~\ref{fact:elim}\eqref{fact:eigenvalue}
 in the
  $0$-dimensional ideal $\mathcal{J}_{z_1, \balpha}$, the correspondence
 $V(\mathcal{J}_\balpha) = V(\mathcal{J}_{z_1, \balpha})$, the fact that
 $\balpha\notin V(\textsf{denom}(\xi))$ and~$\balpha\in
 V(\langle \chi, \partial_{z_1} \chi, \ldots,
  \partial_{z_1}^{k-1}\chi\rangle)$, there exist at least
  $k$ points in~$\scrV\cap(\cap_{i=0, i \neq 1}^{k-1} \{z_i=\alpha_i\})$.
  Hence~$\alpha\in\mathcal{S}_k(\check{z}_1, \scrV\cap Z)$.
    
  \indent We now prove the reverse inclusion. Let~$\balpha=(\alpha_0, \ldots,
  \alpha_{k-1}) \in\mathcal{S}_{k}(\check{z}_1, \scrV \cap Z)$. By definition
  of~$\mathcal{S}_{k}(\check{z}_1, \scrV\cap Z)$, we have that $\balpha\in\pi(\scrV\cap Z)$.
  As we have~$\pi(\scrV\cap Z)\subset\pi(\scrV)\cap\pi(Z)$, it follows that
  $\balpha\in\pi(Z)$. It remains to show that~$\balpha\in
  V(\mathcal{J}\cap\mathbb{K}(t)[\underline{z}] + \langle \chi, \ldots,
  \partial_{z_1}^{k-1}\chi\rangle)$. By Fact~\ref{fact:elim}\eqref{fact:elimination},
  we have that~$\balpha\in
  V(\mathcal{J}\cap\mathbb{K}(t)[\underline{z}])$. Now,
  as~$\pi_{\check{z}_1}(\balpha)\notin V(\denom(\xi))$, we can consider for $0\leq i
  \leq k-1$ the well-defined polynomials $\partial_{z_1}^i\chi (z_1, \check{z}_1 =
  \check{\balpha})$. By the set equality $V(\mathcal{J}_\balpha) = V(\mathcal{J}_{z_1, \balpha})$,
  the fact that there exist
  at least $k$ points in~$\scrV\cap(\cap_{i=0, i\neq1}^{k-1}\{z_i=\alpha_i\})$
  translate, by the application of Fact~\ref{fact:elim}\eqref{fact:eigenvalue}
  to the
  well-defined $0$-dimensional ideal~$\mathcal{J}_{z_1, \balpha}$,
  to the vanishings
  ~$\partial_{z_1}^i\chi(\underline{z}=\balpha)=0$, for all~$0\leq i \leq k-1$.
\end{proof}

In all the combinatorial examples that we considered so far,
taking $\mathcal{J} := \langle P, \partial_x P, \partial_u P\rangle:(u-a)^\infty$
for~$P$ as in~\eqref{eqn:initpoleqn}
always led to~$\mathcal{S}_k(\check{z}_1, \overline{\mathcal{X}}\cap Z) =
\mathcal{S}_k(\check{z}_1, \overline{\mathcal{X}})$.
Hence in order to simplify things and to avoid introducing additional technicalities, 
we assume in the rest of this section that
\begin{align*}
  \label{hyp:ua} \tag{${\mathcal{Z}}$}
  \text{ $\mathcal{S}_k(\check{z}_1, \scrV\cap Z) = \stickV$.}
\end{align*}
From an application viewpoint, working under this new generic assumption is
(as mentioned above) harmless. 

  \begin{lemma}\label{lemma:characterization_stickelberger}
    Assuming~\eqref{hyp:ua},~\hyperlink{hypS}{$\bf{(S)}$},
    and the radicality of~$\mathcal{J}$,
    the Zariski closure of~$\stickV$ is the zero set
    of the saturated ideal
    $(\mathcal{J}\cap\mathbb{K}(t)[\underline{z}]+
    \langle \chi, \ldots, \partial_{z_1}^{k-1}\chi\rangle):\operatorname{denom}(\xi)^\infty$.
  \end{lemma}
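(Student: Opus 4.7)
The plan is to identify $\overline{\stickV}$ with the vanishing locus of the saturated ideal, starting from the pointwise description furnished by~\cref{lemma:correspondance_multmap}. Set $I := \mathcal{J} \cap \K(t)[\underline{z}] + \langle \chi, \partial_{z_1}\chi, \ldots, \partial_{z_1}^{k-1}\chi \rangle$. Under the standing hypotheses, \cref{lemma:correspondance_multmap} yields $V(I) \cap \pi(Z) = \mathcal{S}_k(\check{z}_1, \scrV \cap Z)$, and this equals $\stickV$ by~\eqref{hyp:ua}. The remaining task is to pass from this pointwise equality to the desired equality involving Zariski closures and saturated ideals.

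The main bridge is the classical identification $V(I : \denom(\xi)^\infty) = \overline{V(I) \setminus V(\denom(\xi))}$, reflecting the fact that saturation by $\denom(\xi)$ removes from $V(I)$ precisely the irreducible components contained in $V(\denom(\xi))$. Unfolding the definition of $Z$, one sees $\pi(Z) = \acfield{t} \times \bigl(\pi_{\check{z}_1}(\scrV) \setminus V(\denom(\xi))\bigr)$, and in particular $\pi(Z) \subset \acfield{t}^k \setminus V(\denom(\xi))$. Therefore $V(I) \cap \pi(Z) \subset V(I) \setminus V(\denom(\xi))$, and taking Zariski closures yields the easy inclusion $\overline{\stickV} \subset V(I : \denom(\xi)^\infty)$.

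For the reverse inclusion, the key geometric input is that by~\hyperlink{hypS}{$\bf{(S)}$}, $\pi_{\check{z}_1}(\scrV)$ is Zariski dense in $\acfield{t}^{k-1}$; being constructible (as the image of an algebraic set under a projection, by Chevalley's theorem), it must contain a Zariski open dense subset of $\acfield{t}^{k-1}$. Subtracting the proper closed subset $V(\denom(\xi))$ of $\acfield{t}^{k-1}$ still leaves a Zariski open dense set, so $\acfield{t} \times (\pi_{\check{z}_1}(\scrV) \setminus V(\denom(\xi)))$ contains a Zariski open dense subset $U$ of $\acfield{t}^k$, which by construction is contained in $\pi(Z)$. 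Now let $W_i$ be any irreducible component of $V(I)$ not contained in $V(\denom(\xi))$. Since $U$ is open and dense in $\acfield{t}^k$ and $W_i$ is irreducible, $W_i \cap U$ is open and dense in $W_i$, whence $W_i = \overline{W_i \cap U} \subset \overline{V(I) \cap \pi(Z)} = \overline{\stickV}$. Taking the union over all such components $W_i$ closes the argument, as $V(I : \denom(\xi)^\infty)$ is exactly the union of those components.

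The step I anticipate as the main obstacle is this density argument in the reverse inclusion: one must certify that, on every irreducible component of $V(I)$ surviving saturation, a dense set of approximating points from $\stickV$ can be found. This uses in an essential way the dominance of $\pi_{\check{z}_1}$ restricted to $\scrV$ granted by~\hyperlink{hypS}{$\bf{(S)}$}, together with Chevalley's theorem to upgrade Zariski density into the existence of an actual Zariski open dense subset of $\acfield{t}^{k-1}$.
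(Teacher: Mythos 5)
Your proof follows essentially the same route as the paper's: both combine~\eqref{hyp:ua} with \cref{lemma:correspondance_multmap} to obtain the pointwise identity between $\stickV$ and $V_1\cap\pi(Z)$ (where $V_1=V(I)$), identify the saturated ideal's zero set with $\overline{V_1\setminus V(\denom(\xi))}$, and then use the Zariski-density of $\pi_{\check{z}_1}(\scrV)$ together with the definition of $Z$ to match the two closures. One remark: your claim that ``$W_i\cap U$ is open and dense in $W_i$'' purely because $U$ is dense open in the ambient space is not automatic---an irreducible component could a priori lie entirely in the closed complement of $U$, so one still needs $W_i\cap U\neq\emptyset$; this is exactly the (equally implicit) content of the paper's ``using the density property of $W$'' step, so the two arguments rest on the same appeal.
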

  \begin{proof}
    We define $V_1:=V(\mathcal{J}\cap\mathbb{K}(t)[\underline{z}]+
    \langle \chi, \ldots, \partial_{z_1}^{k-1}\chi\rangle)$.
    Using~\eqref{hyp:ua} we can
    replace~$\stickV$ by~$\mathcal{S}_k(\check{z}_1, \scrV\cap Z)$.
    Also, \cref{lemma:correspondance_multmap} implies that
    the Zariski closure of~$\stickV$ is equal to the Zariski closure
    of~$V_1\cap\pi(Z)$. Denote $W:=\overline{\mathbb{K}(t)}\times\pi_{\check{z}_1}(\scrV)\subset
    \overline{\mathbb{K}(t)}\times\overline{\mathbb{K}(t)}^{k-1}$
    (which is dense
    in~$\overline{\mathbb{K}(t)}^{k}$ as~the image of $\scrV$ by
    $\pi_{\check{z}_1}$ is assumed to be Zariski dense).
    By the definition of~$Z$, we
    have~$\overline{V_1\cap\pi(Z)} =
    \overline{V_1\cap\big(W\setminus \pi(V(d(\xi)))\big)}$. Using the density
    property of~$W$, we thus have that $\overline{V_1\cap\pi(Z)} =
    \overline{V_1\setminus \pi(V(d(\xi)))}$. The results hence follows
    as the zero set of~$(\mathcal{J}\cap\mathbb{K}(t)[\underline{z}]+
    \langle \chi, \ldots, \partial_{z_1}^{k-1}\chi\rangle):d(\xi)^\infty$ is
    $\overline{V_1\setminus \pi(V(d(\xi)))}$.
\end{proof}

Applying~\cref{lemma:correspondance_multmap,lemma:characterization_stickelberger}
and under~\eqref{hyp:ua}, our aim is hence
to compute~$\mathcal{J}\cap\mathbb{K}(t)[\underline{z}] +
\langle \chi, \partial_{z_1} \chi, \ldots, \partial_{z_1}^{k-1}\chi\rangle$.
Regarding complexities estimates,
the use of parametric geometric resolution tools (\cite{Schost03}) allows to prove
the following result.
\begin{lemma}\label{lemma:complexity_stickelberger}
  Assume that~$\mathcal{J}$ is radical
  of degree~$D$, of dimension~$k-1$ and
  that: $\mathcal{J}\cap\mathbb{K}(t)[\underline{z}]$ is principal and
  $\mathcal{J}$
  induces an ideal~$\mathcal{J}_{z_1}\subset\mathbb{K}(t, \underline{z})[x,u, z_1]$
  of dimension~$0$.
Suppose given a straight-line program of length~$L$ evaluating the polynomials
defining~$\mathcal{J}$. 
Then computing a generator
of~$\mathcal{J}\cap\mathbb{K}[t, \underline{z}]$
and the polynomials $\chi, \partial_{z_1}\chi, \ldots,
\partial_{z_1}^{k-1}\chi$ can be done
in~$\tilde{O}(D^{k+1}\cdot 8^k\cdot(L+k^2) + k\cdot D^{2(k+1)})$ ops. in~$\mathbb{K}$.
\end{lemma}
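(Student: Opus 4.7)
The plan is to reduce the problem to a parametric geometric resolution of $V(\mathcal{J})$ with $(t,\check{z}_1)$ as the $k$ parameters, and then extract both the principal generator of $\mathcal{J}\cap\K[t,\underline{z}]$ and the characteristic polynomial~$\chi$ from that resolution. The hypotheses are tailored to this: since $\mathcal{J}_{z_1}$ is $0$-dimensional and $\mathcal{J}$ is $(k-1)$-dimensional, the projection $V(\mathcal{J})\to \acfield{t}^{k}$ onto the $(t,\check{z}_1)$-coordinates has generically finite fibers of size bounded by $\deg(\mathcal{J})=D$.

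First, I would apply Schost's parametric geometric resolution algorithm \cite[Thm.~2]{Schost03} on the given SLP, treating $(t,\check{z}_1)$ as the parameters. This produces polynomials $q(T), v_x(T), v_u(T), v_{z_1}(T)\in\K(t,\check{z}_1)[T]$ of degree at most $D$ in $T$ that parametrize $V(\mathcal{J})$ away from a proper Zariski closed subset of the parameter space. Standard B\'ezout-type bounds imply that, after clearing denominators, each such polynomial has total size $O(D^{k+1})$ in $\K[t,\check{z}_1,T]$: degree $\leq D$ in $T$ and degree $O(D)$ in each of the $k$ parameters. Because $\mathcal{J}$ is radical and $\mathcal{J}_{z_1}$ is $0$-dimensional, Fact~\ref{fact:elim}\eqref{fact:Stickelberger} yields $\xi(z_1)=\prod_{\rho}(z_1-v_{z_1}(\rho))$ where $\rho$ runs over the roots of $q$, so $\xi$, and therefore both its numerator $\chi$ and the principal generator of $\mathcal{J}\cap\K[t,\underline{z}]$, can be read off from $\Res_T(q(T),z_1-v_{z_1}(T))$. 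Finally, I would compute the derivatives $\partial_{z_1}^i\chi$ for $1\leq i\leq k-1$ by direct differentiation.

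For the complexity, Schost's parametric resolution costs $\tilde{O}(D^{k+1}\cdot 8^k\cdot(L+k^2))$: the $L+k^2$ factor comes from one SLP evaluation together with the Jacobian manipulations inside a Newton iteration, $D^{k+1}$ is the size of the output, and the $8^k$ overhead arises from amortized Kronecker-style arithmetic used to multiply elements of $\K[t,\check{z}_1]$. The resultant yielding $\chi$ is a bivariate elimination in $T,z_1$ with coefficients in $\K[t,\check{z}_1]$; evaluating at $O(D^k)$ parameter specializations, computing a univariate resultant of bidegree $(D,1)$ at each specialization in $\tilde{O}(D)$, and interpolating totals $\tilde{O}(k\cdot D^{2(k+1)})$. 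The $k-1$ differentiations run in $\tilde{O}(k\cdot D^{k+1})$ and are absorbed by the preceding estimates. Adding the two dominant contributions produces the claimed bound.

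The main obstacle is a careful adaptation of Schost's one-parameter complexity analysis to the $k$-parameter setting with the $8^k$ dependency: one has to control the denominators of the $v_\bullet$, keep the degree in each individual parameter at $O(D)$ via the parametric B\'ezout inequality implied by $\deg(\mathcal{J})=D$, and check that the evaluation–interpolation step for the final resultant stays inside the size-$O(D^{k+1})$ box which these bounds allow.
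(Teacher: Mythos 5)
Your overall strategy matches the paper's: compute a parametric geometric resolution of $\mathcal{J}_{z_1}$ (equivalently, of $V(\mathcal{J})$ with $(t,\check{z}_1)$ as parameters) via \cite[Thm.~2]{Schost03}, read off $\chi$ via Stickelberger from a resultant of the parametrizing polynomials, take derivatives, and recover the principal generator of $\mathcal{J}\cap\K[t,\underline{z}]$ as the squarefree part of $\chi$. However, your degree bookkeeping in the second stage is off, and you arrive at the stated bound by misattributing the dominant term to the wrong step.

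Concretely: the parametrizing polynomials $q,v_{z_1}$ indeed have degree $O(D)$ in $T$ and $O(D)$ in each of the $k$ parameters, but the resultant $\Res_T(q, z_1\partial_T q - \text{num}(v_{z_1}))$ is a bilinear combination of degree roughly $2D$ in those coefficients, so $\chi$ has partial degree $O(D^2)$ in each of $t,\check{z}_1$ (and degree $\leq D$ in $z_1$). Evaluation--interpolation therefore needs $O(D^{2k})$ points, not $O(D^k)$: with a bivariate resultant in $\tilde O(D^{1.63})$ per point this gives $\tilde O(D^{2k+1.63})$, which does not equal your claimed $\tilde O(k\cdot D^{2(k+1)})$ for this step (your own intermediate numbers, $O(D^k)$ points times $\tilde O(D)$ each, would give $\tilde O(D^{k+1})$, again not matching). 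Conversely, your estimate $\tilde O(k\cdot D^{k+1})$ for the $k-1$ differentiations understates the size of $\chi$: it is a polynomial in $k+1$ variables each of partial degree $O(D^2)$, hence has size $O(D^{2(k+1)})$, and the Baur--Strassen style argument makes the derivative step cost $\tilde O(k\cdot D^{2(k+1)})$. That is precisely where the second term in the lemma comes from. So your final answer is correct, but the intermediate accounting is inconsistent and should be repaired before it can stand as a proof.
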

\begin{proof}
We first denote by~$\mathcal{J}_{z_1}$ the induced ideal of dimension~$0$
in~$\mathbb{K}(t, \check{z}_1)[x,u, z_1]$.
Using the radicality and the dimension assumptions, it is possible
to compute a parametric geometric resolution of the zero set of~$\mathcal{J}_{z_1}$.
Applying the algorithm underlying~\cite[Theorem~2]{Schost03},
we compute $A, B\in\mathbb{K}(t, \hat{\underline{z}})[\lambda]$
which give a parametrization $z_1= A(\lambda)/\partial_\lambda B(\lambda)$ over the field
extension defined by $B(\lambda)=0$. Using~\cite[Theorem~2]{Schost03}, this can be done
in~$\tilde{O}(D^{k+1}\cdot 8^k\cdot (L+k^2))$ ops.
in~$\mathbb{K}$. Applying Fact~\ref{fact:elim}\eqref{fact:Stickelberger},
$\chi$ is equal to
the resultant in~$\lambda$ of the numerators of $B$
and $z_1\cdot\partial_\lambda B-A$. Using now fast
computation of bivariate resultants~\cite[\S5]{HyNeSc19} and using the
upper bound~$D$~\cite[Theorem~1]{Schost03} on the partial degrees of the coefficients
of both the numerator and denominator of $A$ and~$B$, we obtain that the number of
evaluation-interpolation points needed for $t, \check{z}_1$ is
$O(D^{2k})$. This gives a cost in~$\tilde{O}(D^{2k+1.63})$ ops. in~$\mathbb{K}$ for computing
the numerator of $\chi$, whose partial degrees are bounded by~$D^2$. Finally,
if $L'$ is the length of a straight-line program evaluating~$\chi$, 
Theorem~1 in \cite{BaSt83} allows us to compute
$\partial_{z_1}\chi, \ldots, \partial_{z_1}^{k-1}\chi$ using $\tilde{O}(k\cdot L')
\subset \tilde{O}(k\cdot D^{2(k+1)})$
ops. in~$\mathbb{K}$. This yields a final complexity
in~$\tilde{O}(D^{k+1}\cdot 8^k\cdot(L+k^2) + k\cdot D^{2(k+1)})$ ops. in~$\mathbb{K}$.
As~$\mathcal{J}$ is assumed radical,
a generator of~$\mathcal{J}\cap\mathbb{K}[t, \underline{z}]$ is given
by the squarefree part of~$\chi$. 
The cost of this squarefree part computation is negligible and absorbed in
the above complexity.
\end{proof}

\begin{remark}\label{rmk:key_geom_sat}
  {(${\mathbf i}$)} When the numerator of the
  characteristic polynomial of $m_{z_1}$ generates
  $\mathcal{J}\cap\mathbb{K}[t, \underline{z}]$,
  the complexity of \cref{lemma:complexity_stickelberger} drops
to~$\tilde{O}(D^{k+1}\cdot 8^k\cdot(L+k^2) + k\cdot D^{k+1})\subset
\tilde{O}(D^{k+1}\cdot 8^k\cdot(L+k^2))$. 

\smallskip {(${\mathbf {ii}}$)}
\cref{lemma:complexity_stickelberger}
allows with the same complexity to compute the characteristic polynomial~$\xi_u$ of
the multiplication map $m_u$. Denoting by~$\chi_u$ the numerator of~$\xi_u$,
the refinement of $\stickV$ consisting in
counting only the distinct solutions w.r.t.~$u$
is equivalent to considering the polynomial
$m\cdot \textsf{disc}_u(\chi_u)-1$ (and then eliminating~$m$).
Another useful practical refinement consists in adding to the polynomials
defining~$\overline{\stickV}$
all the polynomials defining~$\overline{\mathcal{S}_k(\check{z_i}, \scrV)}$,
for all~$0\leq i \leq k-1$ (by a slight adaptation of the present section).
\end{remark}

\cref{lemma:complexity_stickelberger,lemma:characterization_stickelberger}
yield an algorithm whose output
characterizes~$\overline{\mathcal{S}_k(\check{z_1}, \scrV)}$, and prove its complexity.
As most of the combinatorial examples we have encountered
so far are stated with~$\mathbb{K}=\mathbb{Q}$, our aim is to 
 take in practice the benefit of fast multi-modular arithmetic. We prefer to reduce
 the computations in~$\mathbb{Q}$ instead of~$\mathbb{Q}(t)$ by using evaluation-interpolation
 on the parameter~$t$.
 In practice, the underlying algorithm
 makes use of the specialization properties of
Gr\"obner bases~\cite[Prop.~1, p.~308]{CoLiOS07}.

\begin{proposition}\label{prop:stick_DDE}
Let~$P\in\mathbb{K}[x, \underline{z}, t, u]$
be as in~\eqref{eqn:initpoleqn}, $a\in\mathbb{K}$ and
define $\mathcal{J}:=\langle P, \partial_xP, \partial_uP\rangle :(u-a)^\infty
\subset\mathbb{K}(t)[x,u, \underline{z}]$.
We assume that:~\eqref{hyp1}, \eqref{hyp3}, ~\eqref{hyp:ua} and~\hyperlink{hypS}{$\bf{(S)}$}
hold and that
the input assumptions of~\cref{lemma:complexity_stickelberger}
are satisfied by the ideal
$\mathcal{J}$.
Then
any nonzero $R\in(\mathcal{J}\cap\mathbb{K}(t)[\underline{z}] +
\langle \chi, \partial_{z_1} \chi, \ldots, \partial_{z_1}^{k-1}\chi\rangle)\cap
\mathbb{K}[t, z_0]$ satisfies~$R(t, F(t, a))=0$.
\end{proposition}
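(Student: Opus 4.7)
The plan is to combine \cref{lemma:SW_constructible} with \cref{lemma:characterization_stickelberger}, after identifying $\mathcal{X}$ (the constructible set defined by~\eqref{eqn:initial_system}) with the zero set of $\mathcal{J}$. Under~\eqref{hyp1}, \cref{prop:root_existence} produces $k$ distinct Puiseux-series solutions $U_1(t),\ldots,U_k(t)$ of~\eqref{eqn:crucial_eqn_BMJ}, and for each~$i$ the tuple
$\bigl(F(t,U_i(t)),\,U_i(t),\,F(t,a),\,\partial_u F(t,a),\,\ldots,\,\partial_u^{k-1}F(t,a)\bigr)$
lies in~$\mathcal{X}$. These $k$ tuples share the same $\underline{z}$-coordinates (in particular the same $\check{z}_1$-coordinates) but have pairwise distinct $u$-coordinates. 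By the very definition of $\mathcal{S}_k(\check{z}_1,\cdot)$ and \cref{lemma:SW_constructible}, this immediately yields
$\mathcal{P}_0 := \bigl(F(t,a), \partial_u F(t,a), \ldots, \partial_u^{k-1}F(t,a)\bigr) \in \mathcal{S}_k(\check{z}_1, \mathcal{X})$.

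Next I would apply \cref{lemma:characterization_stickelberger} with $\scrV = \mathcal{X}$. Its hypotheses are exactly those of the statement: assumption~\hyperlink{hypS}{$\bf{(S)}$} provides the dimension and density conditions, assumption~\eqref{hyp:ua} identifies $\mathcal{S}_k(\check{z}_1, \mathcal{X}\cap Z)$ with $\mathcal{S}_k(\check{z}_1, \mathcal{X})$, and radicality of~$\mathcal{J}$ is included in the input assumptions of \cref{lemma:complexity_stickelberger}. Setting
\begin{equation*}
I \;:=\; \mathcal{J}\cap\mathbb{K}(t)[\underline{z}] \;+\; \langle \chi,\, \partial_{z_1}\chi,\, \ldots,\, \partial_{z_1}^{k-1}\chi \rangle,
\end{equation*}
the lemma concludes that $\overline{\mathcal{S}_k(\check{z}_1, \mathcal{X})} \,=\, V(I : \denom(\xi)^\infty) \,\subseteq\, V(I)$. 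In particular, $\mathcal{P}_0 \in V(I)$.

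Hence, any nonzero $R \in I \cap \K[t,z_0]$ vanishes at the $(t, z_0)$-projection of~$\mathcal{P}_0$, which gives $R(t, F(t,a)) = 0$. Hypothesis~\eqref{hyp3} is not needed for this vanishing step itself, but plays an ancillary role: together with \cref{fact:elim}\eqref{fact:elimination}, it ensures that the elimination ideal $I \cap \K[t,z_0]$ is not reduced to~$\{0\}$, so that the ``any nonzero~$R$'' assertion in the statement is not vacuous.

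The only subtlety I anticipate is a minor bookkeeping point: the coordinates of~$\mathcal{P}_0$ a priori live in $\overline{\K}[[t^{1/\star}]]$, whereas \cref{lemma:characterization_stickelberger} is phrased in terms of points of~$\overline{\K(t)}^k$. But by \cref{thm:BMJ06}, $F(t,u)$ is algebraic over~$\K(t, u)$, so its specializations $\partial_u^i F(t,a)$ are algebraic over~$\K(t)$ and $\mathcal{P}_0 \in \overline{\K(t)}^k$; the membership $\mathcal{P}_0 \in V(I)$ is then a genuine algebraic statement from which $R(t, F(t,a)) = 0$ follows by evaluation. No further work beyond this verification should be needed.
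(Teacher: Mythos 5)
Your proposal is correct and follows essentially the same route as the paper's proof: use \eqref{hyp1} (via Prop.~\ref{prop:root_existence} and Lemma~\ref{lemma:SW_constructible}) to place $\mathcal{P}_0=(F(t,a),\ldots,\partial_u^{k-1}F(t,a))$ in $\mathcal{S}_k(\check{z}_1,\mathcal{X})$, then invoke Lemma~\ref{lemma:characterization_stickelberger} (under $\mathbf{(S)}$, \eqref{hyp:ua} and radicality) to deduce $\mathcal{P}_0\in V(I)$, and finally \eqref{hyp3} to guarantee $I\cap\K[t,z_0]\neq\{0\}$. One small technicality: Lemma~\ref{lemma:characterization_stickelberger} is stated for an \emph{algebraic} set $\scrV$, so one should take $\scrV=V(\mathcal{J})=\overline{\mathcal{X}}$ rather than the constructible set $\mathcal{X}$ itself; the monotonicity $\mathcal{S}_k(\check{z}_1,\mathcal{X})\subseteq\mathcal{S}_k(\check{z}_1,\overline{\mathcal{X}})$ then closes this gap, and your observation $V(I:\denom(\xi)^\infty)\subseteq V(I)$ is in fact slightly more careful than the paper's wording.
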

\begin{proof}
  Using the algorithm underlying the proof
  of~\cref{lemma:complexity_stickelberger},
  we compute a generator of
  $\mathcal{J}\cap\mathbb{K}(t)[\underline{z}]$ and 
  $\chi, \partial_{z_1} \chi, \ldots, \partial_{z_1}^{k-1}\chi$.
  Now using~\cref{lemma:characterization_stickelberger},
  the zero set of~$(\mathcal{J}\cap\mathbb{K}(t)[\underline{z}] +
  \langle \chi, \partial_{z_1} \chi, \ldots, \partial_{z_1}^{k-1}\chi\rangle)$
  is~$\mathcal{S}_k(\check{z}_1, \overline{\mathcal{X}})$.
  Hence applying \cite[Thm.~$3$, \S2, Ch.~3]{CoLiOS07},
  the zero set of $(\mathcal{J}\cap\mathbb{K}(t)[\underline{z}] +
\langle \chi, \partial_{z_1} \chi, \ldots, \partial_{z_1}^{k-1}\chi\rangle)\cap
\mathbb{K}[t, z_0]$ is the Zariski closure of the
projection of $\mathcal{S}_k(\check{z}_1, \overline{\mathcal{X}})$ onto the $z_0$-coordinate space, which by~\eqref{hyp1} contains~$F(t, a)$. Hence
as by~\eqref{hyp3} the latter elimination ideal is not reduced to $0$,
any element of it annihilates~$F(t, a)$.
\end{proof}

\section{Conclusion and perspectives}\label{sec:conclusion}
Extensive practical experiments on DDEs of
  type~\eqref{eqn:initfunceqn}--\eqref{eqn:initpoleqn} 
defined by dense polynomials $f$ and $Q$ show that
the growth order estimate~$\delta^{3k}$ 
for the algebraicity degree of $F(t,a)$
in Prop.~\ref{prop:bounds_under_H1}
is very likely to be sharp in the worst case,
and actually reached in the ``generic'' case.
For instance, when~$k=1$ we
  observe that, on random examples, the minimal
  polynomial~$M\in\mathbb{Q}[t, z_0]$ of~$F(t, a)$
  satisfies~$\deg_{z_0} M = \delta(\delta^2-\delta+1)$ 
  and~$\deg_t M = 2\delta^3+\delta^2-3\delta+2$. 
  For~$k=2$, we managed to compute the degrees in~$z_0$
  for~$\delta = 4, 7, 10, 13, 16$ (corresponding to~$\deg(Q)
  = \deg(f) \in \{ 1, 2, 3, 4, 5 \}$)
  and obtained successively $\deg_{z_0}M = 1, 38, 870, 5824$ and $24235$.
  This makes us very confident that
  the asymptotic growth of $\deg_{z_0}M$ is of
  order~$\delta^{3\cdot 2} = \delta^{6}$. However, we do not have a 
  proof that $\delta^{3k}$ indeed matches the right 
  order of magnitude of $\deg_{z_0}M$ and of $\deg_{t}M$.
\indent Since we believe that the output $R$ of our algorithms has (generically) arithmetic size 
   $A = \deg_{z_0}R \cdot \deg_{t}R = \delta^{3k} \cdot \delta^{3k} = \delta^{6k}$,
  exponentiality in $k$ is unavoidable in the complexity estimates: any
  algorithm for computing   
  $R$ would need at least $\delta^{6k}$ ops. From this perspective, the
  estimate in 
  Prop.~\ref{prop:complexity_under_H1} is quite good, since it is $O(A^{4/3})$.

Our algorithms are not only fast in theory, but also efficient
in practice.
Moreover, they allow to
solve nontrivial combinatorial applications, as showed by the experimental
results in \cref{sec:experiments}. 
Our implementations yielded practical improvements for a large majority of
them, and
allowed us to solve
one (5-constellations)
on which the state-of-the-art methods could not terminate. \\
For future works, we wish to develop 
complete implementations of the
algorithms that we have introduced in the present paper, and to make them available
for the combinatorics community.
A different, more theoretical direction, is to
pursue the geometrical investigation analysis of the problem of
computing exceptional fibers initiated in~\cref{sec:elimination,sec:stickelberger}.

\section{Experiments}\label{sec:experiments}

  \paragraph{Aim} We first report on practical variants of the hybrid
  guess-and-prove method \textsf{(hgp)} and then provide and analyze tables
  of our implementations of
  \cref{sec:direct,sec:guessandprove,sec:elimination,sec:stickelberger}.
   The benchmark DDEs on which we test our various implementations 
  have combinatorial origins and the literature qualifies their
  resolution as a highly nontrivial problem. 
  More precisely, we consider solving:
  Eq.~(4.22) in \cite{Bernardi08}
  (``near-triangulations''), 
  Prop.~12 in \cite{BMJ06} (``$m$-constellations'', $m\in\{4, 5\}$) 
  and
  Eq.~(3) in \cite{BMFP12} (``3-Tamari lattices'').\\
  \indent Recall that the input of the algorithms
  in~\cref{sec:direct,sec:guessandprove,sec:elimination,sec:stickelberger} 
  consists of 
  a polynomial
  $P$ as in~\eqref{eqn:initpoleqn} and 
  of
  a specialization point $a\in\mathbb{K}$, while
  their output 
  is, under~\eqref{hyp3} and up to eliminating variables,
  a nonzero  polynomial~$R\in\mathbb{K}[t, z_0]$
  such that~$R(t, F(t, a)) = 0$.
  \paragraph{Implementations}
  The DDEs we consider being defined over~$\K=\Q$, we use multi-modular arithmetic
  and CRT (Chinese Remainder Theorem) for
  \cref{sec:direct,sec:elimination,sec:stickelberger}.
  Also, we reduce the computation from~$\Q(t)$ to
  $\Q$ by performing evaluation-interpolation (``ev.-int.'') on either~$t$ or~$z_0$.
  We incorporate~ $\mathbf{(ii)}$ of Rmk. \ref{rmk:key_geom_sat}, but do
  not use the inequalities describing~$\mathcal{S}^{(k)}$ in~\cref{sec:elimination}. 
  Finally,
  we use standard tools in computer algebra to improve each of our
  implementations.\\
  \indent The practical variant
  of the \textsf{hgp} strategy mentioned above has the following
  motivation.
When one performs (say, for a ``random'' prime~$p$) the computation of a modular image~$R_p$
of~$R$ by using one of~\cref{sec:direct,sec:elimination,sec:stickelberger},
the computation (if it ends) gives access to the partial degrees
of~$R\in\mathbb{Q}[t, z_0]$. If either one modular computation or
the lift over~$\mathbb{Q}$ is too time consuming,
the following variant of \textsf{hgp} exploits the knowledge of those partial degrees:
\begin{itemize}
\item Pick a random prime~$2^{27} < p < 2^{31}$ and
   $\theta\in\{1, \ldots, p-1\}$,
\item
  Compute $R_p(t, \theta) \in\mathbb{F}_p[t]$ and $R_p(\theta, z_0)\in
  \mathbb{F}_p[z_0]$ using one
  of~\cref{sec:direct,sec:elimination,sec:stickelberger};
  set $d_t$ and $d_{z_0}$ their respective degrees,
\item Compute~$F(t, a)\bmod t^{2d_td_{z_0}+1}$,
\item Guess $\tilde{R}\in\mathbb{Q}[t, z_0]$ s.t.
  $\tilde{R}(t, F(t, a))=O(t^{(d_t+1)(d_{z_0}+1)-1})$,
\item Check that~$\tilde{R}(t, F(t, a)) = O(t^{d_t\cdot \deg_{z_0}(\tilde{R}) +
  \deg_t(\tilde{R})\cdot d_{z_0}+1})$.
\end{itemize}
The above algorithm is a simple extension of the one in~\cite[\S2.2.2]{BoChNoSa22},
with the total degree replaced by partial degrees. Also, our implementation generates
terms of~$F(t,a)$ by first computing terms of~$F(t, u)$ and then specializing to~$u=a$.
Any optimization of this step would result in much better timings for
the~\textsf{hgp} strategy.

In our experiments, we consider the following data:
\begin{itemize}
  \item $\mathbf{S}$: sections (and hence algorithms) used,
\item $\mathbf{\#\mathcal{P}}$: number of primes used for the CRT,
  \item $\mathbf{Z}\in\{t, z_0\}$ the variable on which we perform ev.-int.,
  \item \textbf{$\mathbf{\#}$pts}: number of ev.-int. points needed in~$Z$,
  \item $\mathbf{d}_{\textbf{cp}}$: critical pairs of maximal degree in GB computations,
    \item $\mathbf{d}_\mathbf{M}$: Macaulay matrix maximal size in GB computations (F$4$),
      \item $\mathbf{d}_{\mathcal{I}_\infty}$: degree of the ideal~$\mathcal{I}_\infty$
        in~\cref{sec:direct},
      \item $\mathbf{d}_\chi$: degree of~$\chi_{z_1}\in\mathbb{Q}[t, \underline{z}]$,
  \item $\mathbf{T}$: total timing needed to obtain an output in~$\mathbb{Q}[t, z_0]$,
  \item $\mathbf{d}_\mathbf{Z}$: degree in~$Z$ of output~$R\in\mathbb{Q}[t, z_0]$ s.t.
    $R(t, F(t, a))=0$,
\item $\mathbf{\sigma}$: truncation order in the expansion of~$F(t, a)$,
\item $\mathbf{G}$:
  time spent for guessing an annihilating polynomial in~$\mathbb{Q}[t, z_0]$,
\item $\mathbf{P}$: time spent to prove the guess.
\end{itemize}

\indent \;\;\;The timings are given in seconds (s.), minutes (m.), hours (h.) and days (d.).
The symbol $\infty$ (resp. $-$, $\times$) means
that the computation (resp. the data) did not finish (resp. was not known, is
not defined)
after 5 days.\\
\indent All computations
were conducted using Maple on a computer equipped with
Intel® Xeon® Gold CPU 6246R v4 @ 3.40GHz and 1.5TB of RAM with a single thread.
All Gr\"obner bases computations were performed using the  \textsf{C} library
\href{https://msolve.lip6.fr/}{msolve}~\cite{msolve}, 
and all guessing computations were performed using the
\href{http://perso.ens-lyon.fr/bruno.salvy/software/the-gfun-package/}{gfun} Maple
package~\cite{Gfun}.

We obtain the following tables:

\tiny
\begin{center} {\bf{\footnotesize \cite[Proposition~$\mathbf{4.3}$]{Bernardi08}}, {\bf $k=\mathbf{2}$}}\\
  \begin{tabular}{|c|c|c|c|c|c|c|c|c|c|c|c|}
    \hline
    $\mathbf{S}$ & $\mathbf{\#\mathcal{P}}$ & $\mathbf{Z}$ & \textbf{$\mathbf{\#}$pts}
    & \textbf{$\mathbf{d}_{\textbf{cp}}$} & \textbf{$\mathbf{d}_\mathbf{M}$} &
    \textbf{$\mathbf{d}_{\mathbf{\mathcal{I}}_\infty}$} &
    \textbf{$\mathbf{d}_{\mathbf{\chi}}$}
    & \textcolor{blue}{$\mathbf{T}$} & $\mathbf{d}_\mathbf{t}$ & $\mathbf{d}_{\mathbf{z_0}}$\\
    \hline
    \cref{sec:direct} & $3$ & $t$ & $265$ & $18$ & $4\cdot10^5\times4\cdot10^5$ & $12$ & $\times$ & \textcolor{blue}{$55$s} & $132$ & $6$\\
    \cref{sec:elimination}  & $3$ & $t$ & $265$ & $206$ & $2\cdot10^4\times2\cdot10^4$ &$\times$  & $\times$ & \textcolor{blue}{$1$m$10$s} & $132$ & $6$\\
    \cref{sec:stickelberger} & $\times$ & $\times$ & $\times$ & $\times$ &
    $\times$
    & $\times$ & $469$ & \textcolor{blue}{$30$s} & $1173$ & $33$ \\
    \hline
\end{tabular}
\end{center}

\tiny
\begin{center}
  \begin{tabular}{|c|c|c|c|c|c|c|}
    \hline
    $\mathbf{S}$ & $\mathbf{d}_\mathbf{t}$ & $\mathbf{d}_{\mathbf{z_0}}$ &$\mathbf{\sigma}$&
    $\mathbf{G}$  &
  $\mathbf{P}$ & \textcolor{blue}{$\mathbf{T}$} \\
  \hline
  \cref{sec:guessandprove,sec:stickelberger} &  $132$ \textcolor{blue}{($7$m)}
  & $6$ \textcolor{blue}{($0.4$s)}& $\mod t^{2048}$ \textcolor{blue}{($\infty$)} & $-$ & $-$ &
  $-$
 \\
  \hline
\end{tabular}\\[0.2cm]
\end{center}

\tiny
\begin{center}  {\bf {\footnotesize $\mathbf{4}$-constellations}, {\bf $k=\mathbf{3}$}}\\
  \begin{tabular}{|c|c|c|c|c|c|c|c|c|c|c|c|}
    \hline
    $\mathbf{S}$ & $\mathbf{\#\mathcal{P}}$ & $\mathbf{Z}$ & \textbf{$\mathbf{\#}$pts}
    & \textbf{$\mathbf{d}_{\textbf{cp}}$} & \textbf{$\mathbf{d}_\mathbf{M}$} &
    \textbf{$\mathbf{d}_{\mathbf{\mathcal{I}}_\infty}$} &
    \textbf{$\mathbf{d}_{\mathbf{\chi}}$}
    & \textcolor{blue}{$\mathbf{T}$} & $\mathbf{d}_\mathbf{t}$ & $\mathbf{d}_{\mathbf{z_0}}$\\
    \hline
    \cref{sec:direct} & $3$ & $t$ & $7$ & $9$ & $3\cdot10^4\times3\cdot10^4$ & $42$
    & $\times$ & \textcolor{blue}{$4$m} & $3$ & $7$ \\
    \cref{sec:elimination} & $3$ & $t$ & $7$ & $33$ & $7\cdot 10^3\times 11\cdot 10^3$ & $\times$
    & $\times$ & \textcolor{blue}{$41s$} &  $3$ & $7$ \\
    \cref{sec:stickelberger} & $3$ & $t$ & $7$ &  $51$ &
    $2\cdot 10^5\times 2\cdot 10^5$  & $\times$ & $28$ & \textcolor{blue}{$83s$}
    &  $3$ & $7$  \\
    \hline
\end{tabular}
\end{center}

\tiny
\begin{center}
  \begin{tabular}{|c|c|c|c|c|c|c|}
    \hline
    $\mathbf{S}$ & $\mathbf{d}_\mathbf{t}$ & $\mathbf{d}_{\mathbf{z_0}}$ &$\mathbf{\sigma}$&
    $\mathbf{G}$  &
  $\mathbf{P}$ & \textcolor{blue}{$\mathbf{T}$} \\
  \hline
 \cref{sec:guessandprove,sec:elimination} & $3$  \textcolor{blue}{($1$s)}  & $7$  \textcolor{blue}{($1.5$s)}& $\mod t^{64}$
 \textcolor{blue}{($18$s)} & \textcolor{blue}{$0.03$s} & \textcolor{blue}{$0.001$s} &
  \textcolor{blue}{$20$s}\\
  \hline
\end{tabular}\\[0.2cm]
\end{center}

\tiny
\begin{center}  {\bf {\footnotesize $\mathbf{3}$-Tamari}, {\bf $k=\mathbf{3}$}}\\
  \begin{tabular}{|c|c|c|c|c|c|c|c|c|c|c|c|}
    \hline
    $\mathbf{S}$ & $\mathbf{\#\mathcal{P}}$ & $\mathbf{Z}$ & \textbf{$\mathbf{\#}$pts}
    & \textbf{$\mathbf{d}_{\textbf{cp}}$} & \textbf{$\mathbf{d}_\mathbf{M}$} &
    \textbf{$\mathbf{d}_{\mathbf{\mathcal{I}}_\infty}$} &
    \textbf{$\mathbf{d}_{\mathbf{\chi}}$}
    & \textcolor{blue}{$\mathbf{T}$} & $\mathbf{d}_\mathbf{t}$ & $\mathbf{d}_{\mathbf{z_0}}$\\
    \hline
    \cref{sec:direct} & $4$ & $t$ & $11$ & $11$ & $3\cdot10^5\times3\cdot10^5$ & $96$ & $\times$ & \textcolor{blue}{$2$d$2$h} &  $5$ &  $16$\\
    \cref{sec:elimination} & $4$ & $z_0$ & $33$ &$64$ & $10^4\times10^4$& $\times$ & $\times$&   \textcolor{blue}{$2$m} &  $5$ &  $16$ \\
    \cref{sec:stickelberger} & $4$ & $t$ & $11$ & $52$ & $10^4\times10^4$ & $\times$ & $31$ &   \textcolor{blue}{$5$m$42$s}
    & $5$ &  $16$\\
    \hline
\end{tabular}
\end{center}
\tiny
\begin{center}
  \begin{tabular}{|c|c|c|c|c|c|c|}
    \hline
    $\mathbf{S}$ & $\mathbf{d}_\mathbf{t}$ & $\mathbf{d}_{\mathbf{z_0}}$ &$\mathbf{\sigma}$&
    $\mathbf{G}$  &
  $\mathbf{P}$ & \textcolor{blue}{$\mathbf{T}$} \\
  \hline
 \cref{sec:guessandprove,sec:elimination} & $5$  \textcolor{blue}{($0.2$s)}  & $16$  \textcolor{blue}{($5$s)}& $\mod t^{256}$
 \textcolor{blue}{($1$h$40$m)} & \textcolor{blue}{$1$s} & \textcolor{blue}{$0.2$s} &
  \textcolor{blue}{$1$h$40$m}\\
  \hline
\end{tabular}\\[0.2cm]
\end{center}

\tiny
\begin{center} {\bf {\footnotesize $\mathbf{5}$-constellations}, {\bf$k=\mathbf{4}$}}
  \begin{tabular}{|c|c|c|c|c|c|c|c|c|c|c|c|}
    \hline
    $\mathbf{S}$ & $\mathbf{\#\mathcal{P}}$ & $\mathbf{Z}$ & \textbf{$\mathbf{\#}$pts}
    & \textbf{$\mathbf{d}_{\textbf{cp}}$} & \textbf{$\mathbf{d}_\mathbf{M}$} &
    \textbf{$\mathbf{d}_{\mathbf{\mathcal{I}}_\infty}$} &
    \textbf{$\mathbf{d}_{\mathbf{\chi}}$}
    & \textcolor{blue}{$\mathbf{T}$} & $\mathbf{d}_\mathbf{t}$ & $\mathbf{d}_{\mathbf{z_0}}$\\
    \hline
    \cref{sec:direct} & $-$ & $-$ & $-$& $-$& $-$& $-$&$\times$ & \textcolor{blue}{$\infty$} & $-$ & $-$\\
    \cref{sec:elimination}  & $-$ & $t$ & $53$ & $70$ & $2\cdot10^6\times2\cdot10^6$
    & $\times$ & $\times$ &
    \textcolor{blue}{$\infty$} & $26$ & $53$\\
    \cref{sec:stickelberger} & $-$ & $-$&$-$ &$-$&$-$& $\times$ & $-$& \textcolor{blue}{$\infty$} & $-$& $-$\\
    \hline
\end{tabular}
\end{center}

\tiny
\begin{center}
  \begin{tabular}{|c|c|c|c|c|c|c|}
    \hline
    $\mathbf{S}$ & $\mathbf{d}_\mathbf{t}$ & $\mathbf{d}_{\mathbf{z_0}}$ &$\mathbf{\sigma}$&
    $\mathbf{G}$  &
  $\mathbf{P}$ & \textcolor{blue}{$\mathbf{T}$} \\
  \hline
 \cref{sec:guessandprove,sec:elimination} & $26$  \textcolor{blue}{($47$m)}  & $53$  \textcolor{blue}{($45$m)}& $\mod t^{256}$
 \textcolor{blue}{($4$h$35$m)} & \textcolor{blue}{$0.03$s} & \textcolor{blue}{$0.02$s} &
  \textcolor{blue}{$6$h$7$m}\\
  \hline
\end{tabular}\\\vspace{0.2cm}
\end{center}

\normalsize
 \paragraph{Interpretation}

  \indent A first natural observation is that the algorithms
  introduced in 
  \cref{sec:direct,sec:guessandprove,sec:elimination,sec:stickelberger}, 
  as well as their practical variants, are relevant in practice. 
  Moreover, for all the examples, there is always one of the
  new methods which is more efficient in terms of timings than the
  state-of-the-art~(\cref{sec:direct}). According to the tables, there is generally no unique
  method that is always better than the others. On the contrary, the experiments
  show that all the new methods can be useful in practice, depending on the DDEs 
  under study
  (and hence on the properties of the associated zero sets).
  \\

  \indent We now explain the tables related to~$5$-constellations. Note first that
  neither~\cref{sec:direct} nor~\cref{sec:stickelberger} allow to compute any single
  specialization (at~$t$ or~$z_0$ specialized) of a modular image of~$R$.
  Now applying~\cref{sec:elimination}, we manage to compute two specializations
  (first in~$t$, then in~$z_0$) of~$R_p$, for some ``random'' prime $p$. This hence
  gives all the relevant data of the line except $\mathbf{\#\mathcal{P}}$ and~$\mathbf{T}$.
  Those two specializations take respectively~$45$m. and~$47$m. The degrees obtained being
  $d_{z_0}=53$ and~$d_t=26$,
  we would need approximately $53\cdot45$min$\, =39$h for each modular computation.
  Estimating the number of such modular computations to be~$5$ (which is very likely
  optimistic), we would hence need at least~$8$ days. Instead of this, we
  use the practical variant of~\cref{sec:guessandprove} mentioned in the current section.
  As the degree of the guessed
  polynomial is low ($\deg_t=2$ and $\deg_{z_0} = 5)$,
  it allows us to compute $256$ terms of the series~$F(t,1)$
  (here $a=1$), and to check the guess with the geometric bounds~$d_t, d_{z_0}$ obtained
  previously.

\balance
\bigskip

\section*{Acknowledgements}
We thank the reviewers for their very helpful comments and suggestions to improve our paper.
The three authors are supported 
by the French grant \textsc{DeRerumNatura} (ANR-19-CE40-0018)
and by the French–Austrian project \textsc{EAGLES} (ANR-22-CE91-0007 \& FWF I6130-N).
The last author is supported by the joint ANR-FWF ANR-19-CE48-0015 \textsc{ECARP} project, 
and the European Union's Horizon 2020 research and innovation programme under the
  Marie Sk\l{}odowska-Curie grant agreement N. 813211 (POEMA).

\end{document}